\newcommand{\Paragraph}[1]{~\vspace*{-0.9\baselineskip}\\{\bf #1}}
\newcommand{\nop}[1]{}
\newtheorem{definition}{Definition}
\newtheorem{theorem}{Theorem}
\newtheorem{lemma}{Lemma}
\newcommand\vldbdoi{XX.XX/XXX.XX}
\newcommand\vldbpages{XXX-XXX}
\newcommand\vldbvolume{14}
\newcommand\vldbissue{1}
\newcommand\vldbyear{2022}
\newcommand\vldbauthors{\authors}
\newcommand\vldbtitle{\shorttitle} 
\newcommand\vldbavailabilityurl{URL_TO_YOUR_ARTIFACTS}
\newcommand\vldbpagestyle{plain} 
\newcommand{\picfolder}{./}
\begin{document}


\title{WindGP: Efficient Graph Partitioning on Heterogenous Machines}

\author{ {Li Zeng},\ Haohan Huang,\ Binfan Zheng,\ Kang Yang,\ Shengcheng Shao,\and Jinhua Zhou, \ Jun Xie,\ Rongqian Zhao,\ Xin Chen}
\affiliation{%
\institution{Huawei Technologies Co., Ltd, China}
}
\email{{zengli43, huanghaohan, zhengbinfan1, yangkang18, shaoshengcheng}@huawei.com}
\email{{zhoujinhua1, xiejun1, zhaorongqian, chenxin}@huawei.com}

\begin{abstract}
Graph Partitioning is widely used in many real-world applications such as fraud detection and social network analysis, in order to enable the distributed graph computing on large graphs. 
However, existing works fail to balance the computation cost and communication cost on machines with different power (including computing capability, network bandwidth and memory size), as they only consider replication factor and neglect the difference of machines in realistic data centers. 
In this paper, we propose a general graph partitioning algorithm WindGP, which can support fast and high-quality edge partitioning on heterogeneous machines. 
WindGP designs novel preprocessing techniques to simplify the metric and balance the computation cost according to the characteristics of graphs and machines. 
Also, best-first search is proposed instead of BFS/DFS, in order to generate clusters with high cohesion. 
Furthermore, WindGP adaptively tunes the partition results by sophisticated local search methods.
Extensive experiments show that WindGP outperforms all state-of-the-art partition methods by 1.35$\times$$\sim$27$\times$ on both dense and sparse distributed graph algorithms, and has good scalability with graph size and machine number.
\end{abstract}

\maketitle

\pagestyle{\vldbpagestyle}
\begingroup\small\noindent\raggedright\textbf{PVLDB Reference Format:}\\
\vldbauthors. \vldbtitle. PVLDB, \vldbvolume(\vldbissue): \vldbpages, \vldbyear.\\
\href{https://doi.org/\vldbdoi}{doi:\vldbdoi}
\endgroup
\begingroup
\renewcommand\thefootnote{}\footnote{\noindent
	This work is licensed under the Creative Commons BY-NC-ND 4.0 International License. Visit \url{https://creativecommons.org/licenses/by-nc-nd/4.0/} to view a copy of this license. For any use beyond those covered by this license, obtain permission by emailing \href{mailto:info@vldb.org}{info@vldb.org}. Copyright is held by the owner/author(s). Publication rights licensed to the VLDB Endowment. \\
	\raggedright Proceedings of the VLDB Endowment, Vol. \vldbvolume, No. \vldbissue\ %
	ISSN 2150-8097. \\
	\href{https://doi.org/\vldbdoi}{doi:\vldbdoi} \\
}\addtocounter{footnote}{-1}\endgroup

\ifdefempty{\vldbavailabilityurl}{}{
	\vspace{.3cm}
	\begingroup\small\noindent\raggedright\textbf{PVLDB Artifact Availability:}\\
	The source code, data, and/or other artifacts have been made available at \url{\vldbavailabilityurl}.
	\endgroup
}

\section{Introduction}\label{sec:introduction}

Nowadays, many data can be modeled as graphs, such as financial transactions, social network, and road network.
With the strong representativity of graphs, researchers as well as industries can find out a lot valuable information by applying graph analysis algorithms (e.g., triangle counting \cite{HTC}, PageRank \cite{PageRank}, subgraph matching \cite{GSI,myFCS18,siep}, path finding \cite{KBQA} and Graph Neural Network \cite{GraphSage, KGCN, TransformerConv}) on these graph data.
In big-data era, distributed graph computing is widely used to process massive data in academic and industry (e.g., Huawei, Facebook, Google), supporting many important real-life applications such as financial fraud detection \cite{TigerGraph}, social network analysis \cite{DBLP:journals/itiis/MengCHCD18}, and online commodity recommendation \cite{GraphScope}.
Generally, the routine of distributed graph computing includes three parts: computation, communication, synchronization.
This is the BSP (Bulk Synchronous Parallel) paradigm \cite{BSP}, which is widely used in many frameworks such as Gemini \cite{Gemini} and PowerGraph \cite{PowerGraph}.
Figure \ref{fig:BSP} shows that BSP divides the whole process into several sequential supersteps, where a barrier exist between two supersteps.
In each superstep, on each machine, the communication occurs after the local computation.
In the end of each superstep, all machines are synchronized to ensure that they maintain the correct information before next superstep.
Thus, all machines need to wait for the slowest one, causing the long-tail effect \cite{DBLP:journals/pvldb/LuCYW14}.
As \cite{DBLP:journals/pvldb/LuCYW14} shows, with the growing data size and data complexity, the performance of distributed computing on billion-scale graphs is far from enough.

\vspace{-0.1in}
\begin{figure}[htbp]
	\centering
	\includegraphics[width=7cm]  {\picfolder 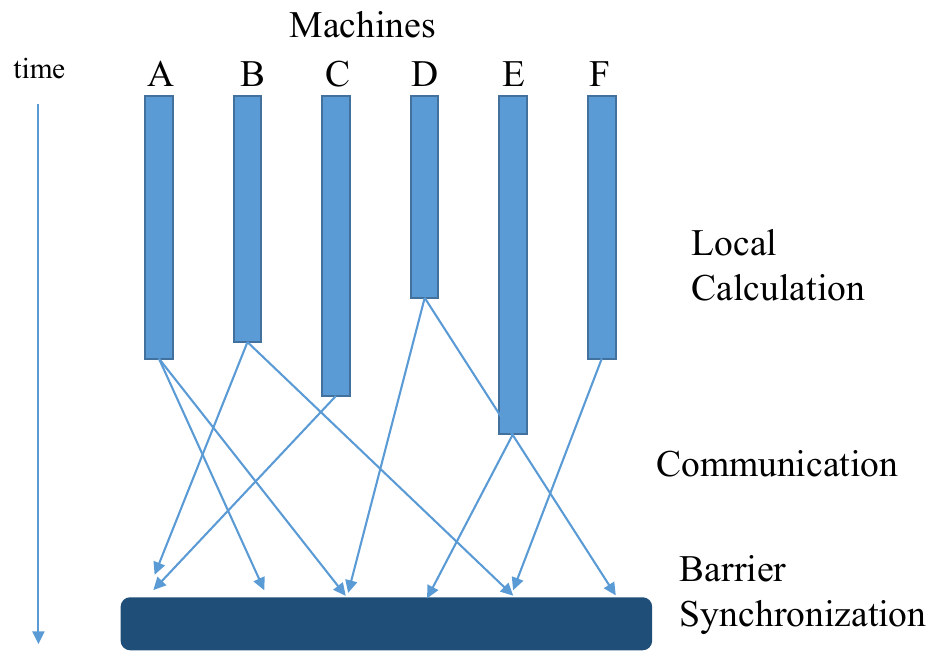}        	
    \vspace{-0.1in}     	
	\caption{The routine of BSP}      	
	\label{fig:BSP}  
\end{figure}
\vspace{-0.1in}

The quality of graph partition has great impact on the performance of distributed graph computing \cite{DBLP:journals/tpds/AkhremtsevSS20,NE}.
If the partition is highly skewed, the overall performance is dragged down by the long-tail effect.
Besides, if there are many connections between partitions, the communication cost will become the bottleneck.
Therefore, a partition can be called \emph{good} only when it is balanced and the number of cross-partition nodes/edges is small.

A running example of graph partition is given in Figure~\ref{fig:example}.
For vertex-centric partition, the vertices of the original graph $G$ is divided into three parts: $\{a,b,c\}$, $\{d,e\}$, $\{f\}$.
Similarly, the edges of $G$ are divided for edge-centric partition: $\{\overline{ab},\overline{bc}\}$, $\{\overline{de},\overline{ef}\}$, $\{\overline{cf}\}$.
The number of cross-partition edges in vertex-centric partition is two because $c$ and $e$ needs to communicate with $f$.
The number of cross-partition nodes in edge-centric partition is also two due to the synchronization of $c$ and $f$ between two partitions.

The problem of graph partition on homogeneous machines has been throughly studied, such as \cite{METIS, HDRF, NE, EBV}.
They mainly use two different metric (balance ratio and edge-cut/replication factor) and propose many optimization techniques in graph exploration and greedy selection.
All machines are viewed identically and partitions are randomly assigned to these machines.

\vspace{-0.1in}
\begin{figure}[htbp]
	\centering
	\includegraphics[width=7cm]  {\picfolder 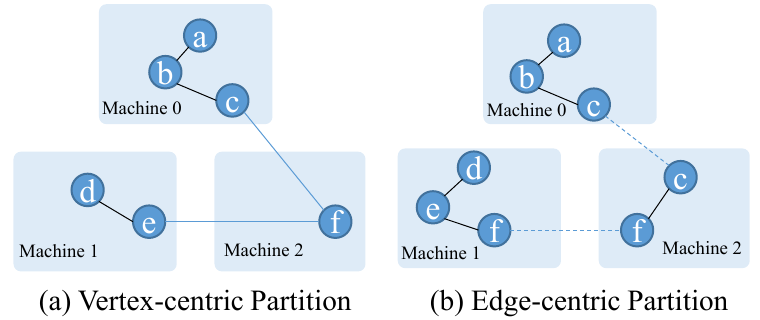}        	
	\vspace{-0.13in}     	
	\caption{An example of graph partition}      	
	\label{fig:example}  
\end{figure}
\vspace{-0.1in}

With the expiration of Moore Law \cite{DBLP:journals/cacm/Edwards21a}, heterogeneous computing is favored by more and more organizations because various heterogeneous hardwares have mushroomed all over the area.
On the one hand, machines with different configurations (e.g., CPU cores/frequency, memory capacity, network bandwidth) exist due to the evolution of hardwares \cite{GrapH,HeterCompPart}.
In all data centers, machines are purchased in different decades with different configurations.
On the other hand, heterogeneous computing is the trend in super computing area and real-life applications due to better performance and cheaper price.
In Telecom field, the machine resource in a region of carrier companies is limited and heterogeneous, where graph algorithms on billions of edges need to be performed locally to search reachable paths or find the cause of network faults.
Note that telecom data can not be transfered to cloud computing due to strict data privacy, thus they can only be processed by several low-memory edge servers or even personal computers.
As a result, distributed graph computing on heterogeneous machines \cite{DBLP:journals/grid/SulaimanHLWT21,DBLP:conf/hpcc/MasoodMRK15,DBLP:conf/bigdataconf/XueYHD15} is becoming more and more important nowadays.
This implies higher requirement and new challenges on graph partition methods.

Unfortunately, existing solutions of graph partition do not support heterogeneous machines.
Though they can be simply modified to adapt to heterogeneous situations by adding constraints of memory capacity, the results are terrible when measuring quality or running distributed graph algorithms \cite{DBLP:conf/bigdataconf/XueYHD15}.
Previous algorithms (except for \cite{GrapH}) can not generate high-quality graph partition on heterogeneous machines because they do not utilize the characteristics of different machines.
Besides, they can not achieve a good balance between computing cost and communication cost.
Our techniques address these problems and boost the performance of distributed graph computing on various machines.

Note that the optimization of graph partition is orthogonal to the accelerative techniques of distributed graph algorithms \cite{PageRank,HTC}. 
Though a graph algorithm has different implementations, the performance of each implementation can be further boosted by the improvement of the partition quality.
Specifically, the best partition strategies of different graph algorithms may vary \cite{PowerLyra,Gemini}, but a common base is essential for fast application and extension in realistic systems \cite{PowerGraph,Gemini}.
Algorithm-specific optimization can be explored base on our generic strategy, which is not in the scope of this paper.
Besides, though the computation cost of various tasks are different, it is proportional to the number of nodes or edges \cite{DBLP:journals/pvldb/VermaLSG17}.
Thus, our solution can be applied to different tasks when given the corresponding amount of work per node/edge.

Our contributions can be concluded below:
\begin{itemize}
\item We design a scalable framework of graph partitioning, which can support fast and high-quality edge partitioning on heterogeneous machines. 
\item Graph-oriented preprocessing techniques are utilized to simplify the metric and balance the computation cost according to the characteristics of graphs and machines. 
\item Best-first search scheme is proposed instead of BFS/DFS, in order to generate partitions with high cohesion. 
\item Sophisticated subgraph-local search methods are adopted to tune the partition results adaptively.
\item Extensive experiments on both dense and sparse distributed graph algorithms show that WindGP outperforms all state-of-the-art partition methods by 1.35$\times$$\sim$27$\times$.
\end{itemize}

The rest of this paper is organized as follows.
Section \ref{sec:background} gives the formal problem definition and reviews state-of-the-art implementations including both vertex-centric and edge-centric methods.
Section \ref{sec:algorithm} presents our framework and optimization techniques.
The extensions of WindGP is discussed in Section \ref{sec:extension}.
The evaluation of our solution is in Section \ref{sec:experiment}.
Finally, Section \ref{sec:conclusion} concludes the paper.

\section{Background}\label{sec:background}

In this section, we first present the formal definition of our problem, then list the related work.
The common notations within the paper are summarized in Table \ref{tab:notations}.

\subsection{Problem Definition} \label{sec:problem}

\begin{definition}\label{def:graph} \textbf{(Graph)}
	A graph is denoted as $G=\{V,E\}$, where $V$ is the set of vertices; $E \subseteq V \times V$ is the set of undirected edges.
	$V(G)$ and $E(G)$ are used to denote vertices and edges of graph $G$, respectively.
    Note that $\overline{uv}$ is equivalent to $\overline{vu}$ in undirected graphs.
\end{definition}

\begin{definition}\label{def:subgraph} \textbf{(Subgraph)}
	Given a graph $G=\{V,E\}$, a subgraph of $G$ is denoted as $G^{\prime}=\{V^{\prime},E^{\prime}\}$, where vertex sets $V^{\prime}$ and edge sets $E^{\prime}$ in $G^{\prime}$ are subsets of $V$ and $E$, respectively, denoted as $V^{\prime} \subseteq V$ and $E^{\prime} \subseteq E$.
\end{definition}

\begin{definition}\label{def:epart}\textbf{($p$-edge Partition)}
    Given a graph $G=\{V,E\}$, $p$ is the number of partitions, then the $p$-edge partition of $G$ is denoted as a sequence of subgraphs $EP(G)=\{G_{i}, \forall i\in [1,p]\}$, such that: \\
    (1) $G_{i}$ is a subgraph of $G$ and $\forall u\in V(G_{i}), \exists \overline{uv}\in E(G_{i})$; \\
    (2) $\bigcup_{i} E(G_{i})=E(G)$ and $E(G_{i}) \bigcap E(G_{j})=\emptyset, \forall i\neq j$. \\
    For the $i$-th partition, $V(G_{i})$ and $E(G_{i})$ can also be simplified as $V_{i}$ and $E_{i}$.
\end{definition}

\begin{definition}\label{def:problem}\textbf{(Problem Statement)}
    Given a graph $G=\{V,E\}$ and $p$ machines $Machine_{i}=\{M_{i}, C_{i}^{node}, C_{i}^{edge}, C_{i}^{com}\}, \forall i\in[1,p]$, the heterogeneous-machine graph partition problem is to find the best edge partition $EP(G)$ that minimizes the total cost $TC$, where \\
    (1) $TC=max_{i}\{T_i\}, where T_i=T_{i}^{cal}+T_{i}^{com}$; \\
    (2) $M_{i}\geq M^{node}\times |V_{i}|+M^{edge}\times |E_{i}|$, $M_{i}$ is the memory size of the $i$-th machine while $M^{node}$ and $M^{edge}$ are the memory occupation of a node and an edge, respectively; \\
    (3) $T_{i}^{cal}=C_{i}^{node}\times |V_{i}| + C_{i}^{edge}\times |E_{i}|$, where $C_{i}^{node}$ and $C_{i}^{edge}$ are the computing cost of a node and an edge, respectively; \\
    (4) $T_{i}^{com}=\sum_{v\in V_{i}} \sum_{j\neq i}^{v\in V_{j}} (C_{i}^{com}+C_{j}^{com})$, where $C_{i}^{com}$ is the communication cost of the $i$-th machine; \\
\end{definition}

This paper aims to provide fast and high-quality solutions for the problem of heterogeneous-machine graph partition (Definition~\ref{def:problem}).
Obviously, this problem is edge-centric, i.e., any edge of $G$ can only exist in a single partition, but this does not hold for vertex.
As heterogeneous machines are rather different from homogeneous configurations, we propose a new metric $TC$ to measure the total time cost of each kind of edge partition.
$TC$ considers both computing cost and communication cost, while the traditional replication factor \cite{DBH,NE,EBV} ($RF=\frac{\sum_{u\in G}{|S(u)|}}{|V(G)|}$) only considers the communication cost ($S(u)$ represents the set of partitions that $u$ exists).

\Paragraph{Equivalence of metrics}.
The new metric $TC$ theoretically corresponds to the load balance and the $RF$ metric.
In homogeneous cases, $T_{i}^{cal}$ is proportional to $|V_{i}|$ or $|E_{i}|$ according to the design of distributed framework.
Besides, assuming $C_i^{com}=1\ \forall i$ and $|V(G)|$ is fixed, we can deduce  $\sum_{i}T_{i}^{com}=\sum_{u\in G}{|S(u)|\times(|S(u)|-1)}=\Theta(RF^2)$, which is consistent with $RF$ and both of them can be used to measure the communication cost.
Comparative experiments on both dense algorithms (PageRank \cite{PageRank}, Triangle \cite{HTC}) and sparse algorithms (BFS \cite{BFS}, SSSP \cite{SSSP}) shows that $TC$ is proportional to the distributed running time with <10\% error (see Table \ref{tab:verify-tc}).
Furthermore, given a specific graph algorithm, the computation cost of a node or an edge is the same for different nodes/edges.
Therefore, $TC$ is used as the metric of partition quality in heterogeneous environments.

\Paragraph{NP-Hardness}.
The problem of heterogeneous-machine graph partition can be proved to be NP-hard for any $p\geq 2$.
First, let $\alpha^{\prime}$ be the balance ratio in \cite{NE}, the problem of $p$-edge partition with minimal $RF$ (i.e., $MIN$-$RF(p,\alpha^{\prime})$) is NP-hard \cite{DBLP:conf/kdd/BourseLV14, NE}.
Second, the problem of $MIN$-$RF(p,\alpha^{\prime})$ can be reduced to our problem because it is a special case that all machines are homogeneous and the memory size is set to $\frac{\alpha^{\prime} |E|}{p}$ (let $M^{node}=0$ and $M^{edge}=1$).

\Paragraph{Quantification of Machine Resource}.
Similar to \cite{HaSGP,DBLP:conf/ic2e/GarraghanTX13,DBLP:journals/tompecs/HerbstBKOEKEKBA18,DBLP:journals/simpra/ZakaryaG19}, the resources of machines can be quantified by relative rates:
\begin{itemize}
\item memory capacity:  let $Mem_i$ GB be the memory size of each machine, $M_{i}$ is calculated by $\frac{10^9\times Mem_i}{4\times gcd(\{Mem_i\})}$. 
\item compute ability: each machine multiplies a float-point with an integer, repeat this process many times and yield the averaged $FPTime_{i}$, then $C_{i}^{node}$ is $\frac{FPTime_i}{gcd(\{FPTime_i\})}$; $C_{i}^{edge}$ is computed by two operations (sum and multiplication) and set to $\frac{FPTime^{\prime}_i}{gcd(\{FPTime_i\})}$.
\item network bandwidth: each machine sends/receives 4KB data many times, the averaged time cost is $COTime_{i}$, thus $C_{i}^{com}=\frac{COTime_{i}}{1024\times gcd(\{FPTime_i\})}$. 
\end{itemize}
In this paper, we assume each node occupies 32 bits and there is no attribute for node/edge computing. 
Thus, $M^{node}$ is set to $\frac{1}{gcd(\{Mem_i\})}$ and $M^{edge}=2\times M^{node}$.
If attributes exist in computing, $FPTime$ and $M^{node}$ should be multiplied with an appropriate number.
Though different graph algorithms have various computing cost for a node or an edge, the total cost is proportional to the number of nodes or edges, which implies that the quantification of machine resource can be simply combined with specific graph algorithms.

Assume there are three heterogeneous machines and the configuration of machines is $Machine_0=\{7,0,1,1\}$, $Machine_1=\{7,0,2,2\}$ and $Machine_2=\{5,0,1,1\}$ ($C_{i}^{node}$ is set to 0 for simplicity).
Let $M^{node}$ be 1 and $M^{edge}$ be 2.
In Figure~\ref{fig:example}(b), according to Definition~\ref{def:problem}, a valid edge partition is $\{\overline{ab},\overline{bc}\}$ on $Machine_0$, $\{\overline{de},\overline{ef}\}$ on $Machine_1$, and $\{\overline{cf}\}$ on $Machine_2$.
The computing cost and communication cost of these machines are $\{2,2\}$, $\{4,3\}$ and $\{1,5\}$, respectively.
The corresponding $TC$ is $7$, while the $RF$ value is $1.33$.
In contrast, another solution is $\{\overline{ab}\}$ on $Machine_0$, $\{\overline{bc},\overline{cf}\}$ on $Machine_1$, and $\{\overline{de},\overline{ef}\}$ on $Machine_2$, and the corresponding $TC$ and $RF$ are $10$ and $1.33$.
Obviously, the $RF$ remains unchanged, while the $TC$ becomes larger when adjusting the assignment of partitions to machines.
Therefore, the existing partition methods that work well on homogeneous machines can generate terrible results on heterogeneous machines.

In this paper, we focus on the partition algorithm on a single machine, i.e., the entire graph $G$ is partitioned by a single machine and the partition results are moved to heterogeneous machines for distributed running.
Without loss of generality, we assume that at least one edge partition is feasible for given graph $G$ and machines, and partition $G_{i}$ is assigned to machine $Machine_{i}$.
Though our solution can be easily extended to process directed graphs, vertex/edge labels or label sets, that is not our focus.
Complex optimizations of distributed system (e.g., aggregated communication and overlapping of computation phases) are algorithm-specific, which will be explored in the future. 
Unless otherwise specified, we use $u$, $N(u)$, $deg(u)$, $num(L)$, and $|A|$  to denote a vertex, the neighbor set of $u$, degree of $u$, the number of currently valid elements in set $L$, and the size of set $A$, respectively.

\vspace{-0.1in}
\begin{table}[htbp]
	\small
	\caption{The relationship between $TC$ and distributed running time (unit:s)}
	\label{tab:verify-tc}
	\vspace{-0.1in}
	\begin{threeparttable}
		\centering
		\setlength{\tabcolsep}{2.5mm}{
			\begin{tabular}{crrrrr}
				\toprule
				Sol. & $TC$ & PageRank & Triangle & SSSP & BFS \\
				\midrule
                HDRF \cite{HDRF} & 2.7G & 2.3K & 0.7K & 1K &  0.11K    \\
                NE \cite{NE} & 5.6G & 4.4K & 1.5K & 1.8K & 0.2K   \\
				\bottomrule
			\end{tabular}
		}
		\begin{tablenotes}
			\item[*] The selected graph is TW on 9-machine cluster in the experiment section.
		\end{tablenotes}
	\end{threeparttable}
\end{table}
\vspace{-0.1in}

\subsection{Related Work} \label{sec:related}

Graph partition problem includes offline partition and streaming partition: offline partition reads the graph data entirely into memory and divide the graph structure; in contrast, streaming partition reads the graph data batch by batch, and it needs to decide the location of nodes/edges in each batch immediately.
In this paper, we mainly focus on offline partition.
Existing work related to offline graph partition can be mainly divided into two categories: vertex-centric partition (a.k.a., edge-cut) and edge-centric partition(a.k.a., vertex-cut).

\Paragraph{Vertex-centric Partition}.
The edge-cut solutions divide vertices of graph $G$ into different partitions, ensuring the load balancing and minimizing the edge-cut.
This can be formulized as $max_{i\in [i,p]}{|V_{i}|}\leq \frac{\alpha^{\prime}|V|}{p}$ and $min\{\sum_{\overline{uv}\in G}{S(u)\neq S(v)}\}$.
Some earlier solutions uses random hash to assign a partition for each vertex $v$, e.g., $f(v)=hash(v)\% p$.
Though the random hash is very fast, it destroys the graph locality, thus has high edge-cut value.
In contrast, METIS \cite{METIS} adopts a multi-level paradigm, which makes better use of the locality.
However, it is rather slow and has prohibitive memory occupation on large graphs.
Later, LDG \cite{LDG} and Fennel \cite{Fennel} utilize graph locality in a simpler way: put adjacent nodes together in one partition to reduce edge-cut.
Greedily, they assign the vertex $u$ to the partition that has the most neighbors of $u$.
Overall, on scale-free graphs whose degrees follow power-law distribution, some super nodes have $>10^6$ edges, which can cause severe bottleneck in a single machine due to load imbalance and terrible communication.
Therefore, edge-centric partition is preferred on large power-law graphs.

\Paragraph{Edge-centric Partition}.
The vertex-cut solutions divides edges of graph $G$ into different partitions, ensuring the load balance and minimizing the vertex-cut metric (i.e., replication factor).
This can be formulized as $max_{i\in [i,p]}{|E_{i}|}\leq \frac{\alpha^{\prime}|E|}{p}$ and $min\{RF\}$.
Some earlier solutions (e.g., PowerGraph \cite{PowerGraph}) intuitively use greedy algorithms, prioritizing the partitions that have two endpoints of the given edge.
In order to reduce $RF$, \cite{PowerGraph} limits the number of replication for each vertex to $2\sqrt{p}$ in the algorithm design.
DBH \cite{DBH} and Ginger \cite{PowerLyra} utilize the power-law distribution and ensure that edges of low-degree vertices are put together.
More advanced HDRF \cite{HDRF} computes a score for each partition when assigning an edge $e$ and choose the partition with the highest score.
The main difference is that for each edge $\overline{uv}$, HDRF selects the partition that $u$ or $v$ has the largest partial degree, while DBH selects the partition that $u$ (assuming $deg(u)<deg(v)$) resides in.
The state-of-the-art algorithms are NE \cite{NE} and EBV \cite{EBV}.
By linear graph exploration, NE generates the partitions one by one and achieves the lowest $RF$ currently.
It improves the locality of each subgraph (partition $G_{i}$) by choosing node $u$ that have the lowest $|N(u)\setminus V_{i}|$ during exploration.
However, NE has two main shortcomings, which limit its performance.
On the one hand, NE neglects nodes that have large $|N(u)\bigcap V_{i}|$ but $|N(u)\setminus V_{i}|$ is not the minimum.
Thus, it selects low-deg nodes frequently, which can generate a long-tail DFS search path with tiny cohesion.
For example, in Figure \ref{fig:ne-dfs}, NE chooses the sequence (`A',`B',...,`F') rather than the node `G' that has more connection with $V_i$.
On the other hand, NE may stop extending the current partition on high-degree nodes, causing high communication cost.
The latest algorithm EBV quantifies the proportion of replicated vertices and the balance of vertices/edges assignment as significant parameters.
It also sorts the order of edges by the sum of end-vertices' degrees from small to large.
In this way, it can accelerate the processing of power-law graphs by alleviating the imbalance of communication between multiple machines.

\vspace{-0.15in}
\begin{figure}[htbp]
	\centering
	\includegraphics[width=7cm]  {\picfolder 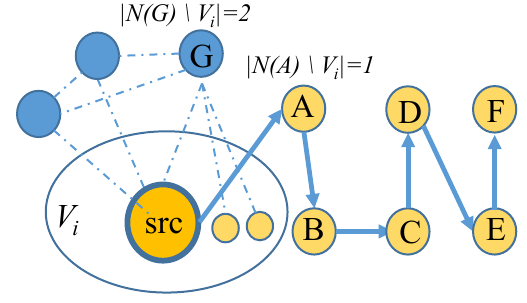}        	
	\vspace{-0.15in}     	
	\caption{The shortcoming of NE}      	
	\label{fig:ne-dfs}  
\end{figure}
\vspace{-0.15in}

Both of the two kinds of solutions above can only support homogeneous machines.
They can not guarantee on generating feasible partitions on heterogeneous machines and the quality of partition is unacceptable even if some solution exists.
Besides, existing solutions mainly optimize for power-law graphs, but there are still some real-life graphs that are not scale-free, e.g., mesh-like graphs \emph{road}.

\Paragraph{Heterogeneous Distributed Computing}.
Distributed computing on big data has been well studied \cite{SurDataPart,SurDataAna}, but in graph computing we need to utilize the characteristics of both graphs and clusters.
Recently, the heterogeneous computing on deep neural network (DNN) arises \cite{HeterPS,Whale}, which leads the trend of enabling heterogeneous resources.
The literature includes two works for heterogeneous graph algorithms: \cite{HeterCompPart} only consider different computing power, while GrapH \cite{GrapH} targets at various communication cost.
\cite{HeterCompPart} coarsen $G$ first, then partition it and finally project it back to $G$.
GrapH uses streaming partition and group machines into different clusters according to their network traffic.
Due to the lack of collaborative optimization on machines with heterogeneous memory, computing power and network traffic, none of them can achieve a good balance between computation and communication.
They use different framework and can not be combined with other solutions directly, thus we compare WindGP with them in Section \ref{sec:distributed}.
HaSGP \cite{HaSGP} considers the heterogeneity of both computing and communication, but it has three limitations: (1) neglects different memory capacity; (2) as a streaming algorithm, lacks  optimization in subgraph locality; (3) targets at high-bandwidth network and considers multi-core conflict.
In its experiments, only four and 32 nodes are used and the heterogeneity of both computing and communication are not considered in the same cluster.
Besides, heterogeneous computing is mainly used in low-end scenarios whose network bandwidth is far smaller than memory bandwidth, thus limiting the application of HaSGP.
The state-of-the-art HAEP \cite{HAEP} adopts the same metrics (balance ratio $\alpha^{\prime}$ and replication factor $RF$) as homogeneous cases, and proposes heuristic neighbor expansion to improve subgraph locality.  
However, $\alpha^{\prime}$ and $RF$ can not depict the quality of heterogeneous partition well, as analyzed in Section \ref{sec:problem}.
In addition, HAEP includes both computing and communication heterogeneity in the same cluster, but still omits the memory heterogeneity.
Furthermore, the maximum cluster and the maximum dataset only contain 32 nodes and 117M edges respectively, which is not enough to prove the scalability.  

\Paragraph{Variants}.
For large graphs, the memory of a single machine is not enough, thus we need streaming partition or distributed partition. 
In some scenarios, the graph data come as time series, this can only be handled by streaming algorithms \cite{DBLP:journals/pvldb/AbbasKCV18,DBLP:conf/sigmod/PacaciO19}.
Generally, hash-based methods perform terrible on communication cost, but they can adapt to streaming graph partition naturally.
With the rising of AI techniques, learning-based partition also develops a lot.
For example, GAP \cite{GAP} is an approximated edge-cut algorithm based on graph neural network.

\Paragraph{Optimization Methods}.
Heterogeneous-machine graph partition problem is closely related to \emph{combinatorial optimization}. 
Exact methods like branch and bound \cite{lawler1966branch} can find exact solution for small-size problem. 
Open-source and commercial solvers like COIN-OR CBC \cite{forrest2005cbc}, ZIB SCIP \cite{gamrath2020scip} and GUROBI \cite{bixby2007gurobi} have integrated powerful heuristics and can solve medium problems within acceptable time. 
However, in realistic applications that have billions of edges and millions of nodes, exact methods can hardly find a feasible solution due to the exponential complexity of computation. 
Thus, approximation methods and local search methods \cite{hromkovivc2013algorithmics} and \cite{pisinger2010large} have been proposed. 

\begin{table}[htbp]
\small
\centering
\caption{Notations}
\vspace{-0.1in}        
\begin{tabular}{|c||p{5cm}|}
    \hline
    $G$& graph to be partitioned \\
    \hline
    $v,u$& vertices in a graph or subgraph \\
    \hline
    $G_{i}$& the $i$-th partition of $G$ \\
    \hline
    $V_{i},E_{i}$& the node set and edge set of $G_{i}$ \\
    \hline
    $M^{node},M^{edge}$& the memory occupation of a node and an edge, respectively\\
    \hline
    $Machine_{i}$& the $i$-th machine \\
    \hline
    $M_{i},C_{i}^{node},C_{i}^{edge},C_{i}^{com}$& the memory size, the computing cost of a node, the computing cost of an edge and the communication cost of a node in the $i$-th machine, respectively\\
    \hline
    $T_i, T_{i}^{cal}, T_{i}^{com}$& the total cost, computing cost and the communication cost the $i$-th machine, respectively \\
    \hline
    $S(u)$& the set of partitions that $u$ exists \\
    \hline
    $N(u)$& the neighbor set of vertex $u$ in $G$ \\
    \hline
    $N_{i}(u)$& the neighbor set of vertex $u$ in $G_{i}$ \\
    \hline
    $deg(u)$& the degree of vertex $u$ in $G$ \\
    \hline
    $deg_{i}(u)$& the degree of vertex $u$ in $G_{i}$ \\
    \hline
    $n_{i,j}$ & the number of replica nodes between partition $i$ and $j$  \\
    \hline
    $|L|$ & The size of set $L$ \\
    \hline
    $num(L)$ & the number of currently valid elements in set $L$   \\
    \hline
    $\delta_i$ & the precomputed capacity of partition $G_i$   \\
    \hline
    $I_B(v)$ & the indicator function: whether $v$ exists in other partition   \\
    \hline
    $\alpha^{\prime}, RF$ & the load balance ratio and the replication factor in homogeneous partitioning, respectively   \\
    \hline
    $\alpha$ & the threshold controlling the balance between $N(u)\setminus S$ and $N(u)\cap S$ \\
        \hline
    $\beta$ & the threshold controlling the impact of border vertices  \\
        \hline
    $\gamma$ &  the threshold of edges above which partitions should be destroyed   \\
        \hline
    $\theta$  & the proportion of edges to be removed in a partition   \\
        \hline
    $N_0, T_0$  & the number of local and global try times    \\
        \hline
\end{tabular}
\label{tab:notations}
\end{table}

\section{Algorithm}\label{sec:algorithm}

\subsection{Overview}


In this section, we propose a comprehensive and general edge partitioning algorithm WindGP, which can divide the original graph $G$ to $p$ subgraphs $G_i=(V_i,E_i)$ ($p$ subgraphs corresponding to $p$ machines), where $E_i \cap E_j=\emptyset, \forall i\neq j$  with low communication cost and balanced calculation cost. 
This algorithm consists of three main parts. 
Firstly, the upper bound ($\delta_i$) of the edge set ($|E_i|$) is obtained by \emph{graph-oriented preprocessing}, which can provide a good guidance for the partition expansion in the second part.
Secondly, for each partition $E_i$, edges are iteratively added to $E_i$ by \emph{best-first search} until $|E_i|\geq \delta_i$. 
Thirdly, our post-processing performs \emph{subgraph-local search} on the initial solution $\{E_i|1\leq i\leq p\}$ obtained in the previous two steps to generate a better solution $\{E_i^*\}$ with lower total cost $TC$ (see Definition \ref{def:problem}). 
Note that the previous two steps can generate a good solution with higher quality than traditional algorithms, which means the final local search may not be performed if the running time is limited (e.g., in real-time graph processing scenarios).
The overview of our framework is in Figure \ref{fig:frame}.

\vspace{-0.15in}
\begin{figure}[H]
	\centering
	\includegraphics[width=8cm]{\picfolder 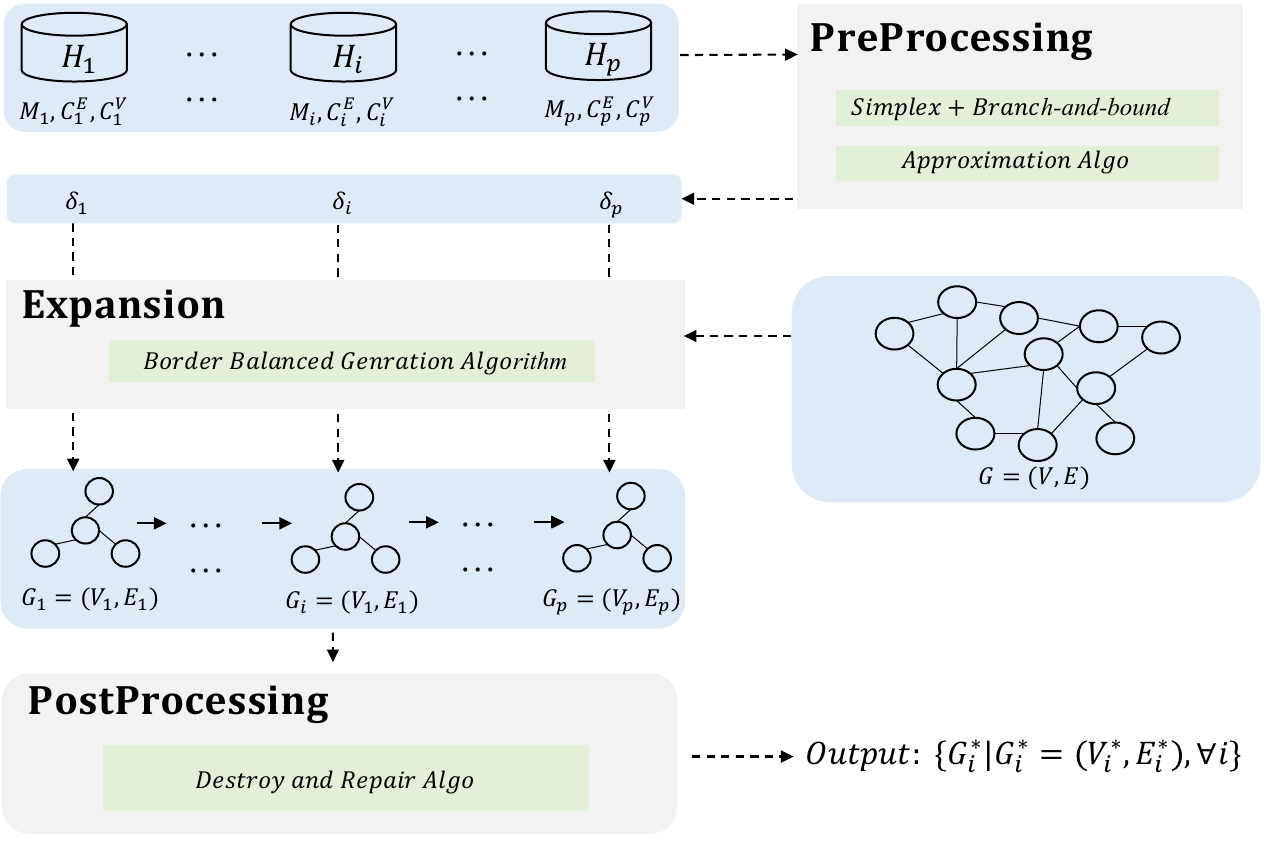}
    \vspace{-0.1in}
	\caption{The framework of WindGP}
	\label{fig:frame}
\end{figure}
\vspace{-0.15in}

\subsection{Graph-oriented Preprocessing}\label{sec:preprocess}

The first part of our algorithm is to compute the upper bound ($|\delta_i|$) for each subgraph $G_{i}$, $\forall i\in [1,p]$. 
Recall that the objective function in the problem statement tries to minimize the maximum of the sum of calculation time and communication time. 
Obviously, the optimal $\delta_i$ corresponds to the optimal partition and the computation of optimal $\delta_i$ is also NP-hard.
Thus, we convert the problem to a lightweight \emph{mixed integer programming} (MIP) problem by utilizing graph characteristics. 
Note that calculation time of each machine is solely determined by the number of edges and vertices of $G_i$. 
In contrast, the communication time is much complicated and much harder to be balanced.
Experiments on graphs with hundreds of edges show that the calculation time is nearly balanced in the optimal solution, while the communication time is not.
This inspires us to balance the calculation time first, which results in a near-optimal solution space.
Thus, we need to assign appropriate upper bound for each subgraph to achieve the balance.
Further improvement can be achieved by the final post-processing, which may slightly disturb the balance of calculation to get better solutions.
In this way, the computation of optimal $\delta_{i}$ can be formulated below: 
\begin{equation}
    \small
	\label{eq:compute}
	\begin{aligned}
		&\text{minimize} & &\lambda &   \\
		&\text{subject to}& & \sum_{i} |E_i|=|E| &   \\
		& & & |E_i|\leq \dfrac{M_i}{M^{edge}+M^{node}\times |V_{i}|/|E_{i}|} & \forall 1 \leq i\leq p \\
 		& & & \lambda \geq C_i^{edge} |E_i|+C_i^{node} |V_i|& \forall 1 \leq i\leq p \\
		& & & |E_i|\in \{1,2,\ldots,|E|\} &\forall 1 \leq i\leq p \\
	\end{aligned}
\end{equation}
where $(|E_i|,|V_i|,|E|,|V|)$ means the number of elements of $(E_i, V_i, E, V)$, respectively. 
Here auxiliary decision variable $\lambda$ represents the maximum calculation time of each machine and our objective is to minimize $\lambda$. 
The first constraint shows that the sum of these capacities is equal to the edge size of the original graph $G$. 
In addition, the memory occupation in each subgraph should not exceed the memory size of the corresponding machine, as shown in the second constraint.
Furthermore, the third constraint requires that the maximum calculation time is not smaller than the calculation time of each machine.
Finally, the fourth constraint indicates that all capacities should be integers within $[1,|E|]$.

To further simplify the problem, the vertex set size of machine $i$ (i.e., $|V_i|$) can be replaced by the average ratio in the entire graph as $|V_i|=\frac{|V|}{|E|}\times |E_i|$.
This simplification works because each partition is expected to be a ``normal'' graph which has characteristics similar to the original graph $G$.
In practice, the number of edges is much larger than that of vertices even on sparse graphs (usually $10\sim 100\times$), not to mention dense graphs.
Besides, the computation of a vertex is usually less costly than the computation of an edge.
Therefore, this simplification is error-bounded, and in all experiments it does not affect the search of optimal solutions.
After the simplification, let $C_i=C_i^{edge}+\frac{|V|}{|E|}\times C_i^{node}$, the original equations are equivalent to  
\begin{equation}
    \small
	\label{eq:simplex}
	\begin{aligned}
		&\text{minimize} & &\lambda &   \\
		&\text{subject to}& & \sum_{i} |E_i|=|E| &   \\
		& & &(M^{edge}+M^{node}\times|V|/|E|)|E_i|+\alpha_i=M_i  \\
		& & & C_i |E_i| -\lambda + \beta_i =0  \\
		& & & |E_i|\in \{1,2,\ldots,|E|\}  \\
	\end{aligned}
\end{equation}
where $\{\alpha_i$,$\beta_i \geq 0|1\leq i \leq p\}$ are auxiliary variables used to convert inequalities into equations. 
The above MIP problem has $p+1$ decision variables and $2p+1$ constraints.

\subsubsection{Exact Method}
General integer programming (IP) problems have been proven to be NP-hard \cite{conforti2014integer}. 
As a generalization of IP, MIP problems are also NP-hard and viewed as one of the most challenging areas in applied mathematics.  
Most of the state-of-the-art MIP solvers (e.g., ZIB Scip \cite{gamrath2020scip} and Gurobi \cite{bixby2007gurobi}) implements a tree search algorithm framework called branch-and-bound \cite{lawler1966branch}, and they can solve the problem efficiently if the machine number is not too large. 
However, as the data volume and the computing power grow exponentially \cite{denning2016exponential}, there are many application scenarios where the problem becomes too large to process for a MIP solver. 
To resolve this issue, we propose a graph-oriented heuristic that can help solve the MIP problem iteratively.

\begin{algorithm}
	\small
	\caption{Computing the bound $\delta_i$ for edge partitions}
	\label{alg:preprocess}
	\KwIn{$|E|, C_i^{edge}, M_i, \forall i \in[1,p]$}
	\KwOut{$\delta_i, \forall i\in [1,p]$}
	\textbf{procedure} $GeneratingCapacity(|E|,C_i^{edge},M_i)$  \\
	$R=|E|, I=\{1,2,...,p\}$ \\ 
	\While{$R > 0 $}
	{	
		$T = \sum_{i\in I} \frac{1}{C_i}$  \\
		\For{$i \in [1,p]$}{ \label{algcmd:preFor1}
			\If{$\delta_i$ already allocated}
			{
				continue   \\
			}
			$\delta_i^1 = \frac{R}{T}\times \frac{1}{C_i}$ \\
			$\delta_i^2 = \frac{|M_i|}{M^{edge}+M^{node}\times|V|/|E|}$ \\
			\If{$\delta_i^{1} > \delta_{i}^2$} 
			{
				$\delta_i = \delta_i^{2}$, $R \gets R-\delta_i$, $I\gets I\setminus \{i\}$  \\
			}
		}
\nop{
		\If{no new $\delta_i$ allocated} 
		{ \label{algcmd:allMeet}
			\For{$i \in [1,p]$}{  \label{algcmd:preFor2}
				\If{$\delta_i$ already allocated}
				{
					continue   \\
				}
				$\delta_i = \frac{R}{\sum_{i\in I} \frac{1}{C_i}}\times \frac{1}{C_i}$ \\
			}
			break  \\
		}
}
	}
	\Return $\delta_i, \forall i\in [1,p]$  \\
\end{algorithm}

\subsubsection{Graph-oriented Heuristic}
The intuition behind the heuristic is straightforward. 
If we omit the heterogeneous memory sizes of machines, in the best case the calculation time should be the same for each machine. 
Based on this claim, we try to allocate edges to each machine so that their computation time is a constant (i.e., let $C_i\times \delta_i=\omega, \forall i\in [1,p]$). 
Thus, we have $|E|=\sum_{i}{\delta_i}=\omega \sum_{i} \frac{1}{C_i}$, and $\delta_i$ can be calculated by $\frac{|E|}{\sum_{i} \frac{1}{C_i}}\times \frac{1}{C_i}$.
However, the computed edge size may exceed the memory size of the machine. 
In such cases, we choose to limit its edge size with respect to the memory size of the corresponding machine. 

Algorithm \ref{alg:preprocess} lists the details of our heuristic.
The algorithm first estimates the capacity of the edge set for each machine, i.e., $\delta_i^1$.
Then it checks whether $\delta_i^1$ meets the constraint of memory size.
If the memory consumption exceeds the memory size, we fix its edge set capacity so that both the memory size constraint and the integer constraint are respected. 
The same process is repeated recursively for the remaining machines and remaining edge number until no edge is left. 

\begin{lemma}\label{lma:preprocess} 
If not consider the integer constraints, Algorithm \ref{alg:preprocess} can find the optimal solution.
\end{lemma}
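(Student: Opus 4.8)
The plan is to read the relaxed version of program~\eqref{eq:simplex} --- keep $\sum_i|E_i|=|E|$, $0\le|E_i|\le u_i$ with $u_i:=\delta_i^2=\tfrac{M_i}{M^{edge}+M^{node}|V|/|E|}$ and $C_i|E_i|\le\lambda$, but drop $|E_i|\in\{1,\dots,|E|\}$ --- as a classical water-filling makespan problem, and to show Algorithm~\ref{alg:preprocess} returns exactly its optimum. Writing $x_i:=|E_i|$, a target $\lambda$ is attainable iff $x_i\le\min(u_i,\lambda/C_i)$ for all $i$ and $\sum_i\min(u_i,\lambda/C_i)\ge|E|$. I would introduce $\Phi(\lambda):=\sum_i\min(u_i,\lambda/C_i)$, which is continuous, nondecreasing, $\Phi(0)=0$, and $\Phi(\lambda)\to\sum_iu_i\ge|E|$ as $\lambda\to\infty$ (the paper's standing assumption that a feasible partition exists gives $\sum_iu_i\ge|E|$); hence there is a well-defined smallest $\lambda^\star$ with $\Phi(\lambda^\star)=|E|$. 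The candidate optimum is $x_i^\star:=\min(u_i,\lambda^\star/C_i)$, and I would record the decomposition $|E|=\Phi(\lambda^\star)=\sum_{i\in\mathcal C}u_i+\sum_{i\notin\mathcal C}\lambda^\star/C_i$, where $\mathcal C:=\{i:u_i<\lambda^\star/C_i\}$ is the set of memory-saturated machines; note every index attaining $\max_iC_iu_i$ lies outside $\mathcal C$, so $\mathcal C\neq\{1,\dots,p\}$.

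Next I would certify optimality of this candidate. Feasibility of $x^\star$ is immediate from $\Phi(\lambda^\star)=|E|$ and $0\le\min(u_i,\lambda^\star/C_i)\le u_i$. For the matching lower bound, take any feasible $x$ and set $\mu:=\max_iC_ix_i$; then $x_i\le\min(u_i,\mu/C_i)$ for every $i$, so $|E|=\sum_ix_i\le\Phi(\mu)$, and monotonicity of $\Phi$ forces $\mu\ge\lambda^\star$. Hence the optimal value of the relaxed program is $\lambda^\star$, attained at $x^\star$.

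The core step is to prove Algorithm~\ref{alg:preprocess} outputs $\delta=x^\star$. I would carry a \textbf{while}-loop invariant: at the top of each iteration the already-assigned machines form a subset $F\subseteq\mathcal C$ with $\delta_i=u_i$ for $i\in F$, and $R=|E|-\sum_{i\in F}u_i$. Under the invariant, writing $I=\{1,\dots,p\}\setminus F$, the running level $\tfrac{R}{T}=R\big/\sum_{j\in I}\tfrac1{C_j}=:\lambda^{(I)}$ satisfies $\lambda^{(I)}\le\lambda^\star$, with equality exactly when $F=\mathcal C$ --- both facts drop out of the decomposition above since $u_j<\lambda^\star/C_j$ for $j\in\mathcal C$. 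Consequently any machine the pass caps satisfies $u_i<\delta_i^1\le\lambda^{(I)}/C_i\le\lambda^\star/C_i$, so it lies in $\mathcal C$ and the invariant is preserved (nothing is ever over-capped). Moreover, as long as $\mathcal C\setminus F\neq\emptyset$ some $i\in\mathcal C\setminus F$ has $C_iu_i<\lambda^{(I)}$: otherwise combining $u_j\ge\lambda^{(I)}/C_j$ on $\mathcal C\setminus F$ with $\lambda^\star/C_j>\lambda^{(I)}/C_j$ on the nonempty set $\{1,\dots,p\}\setminus\mathcal C$ would give $R=\sum_{j\in\mathcal C\setminus F}u_j+\sum_{j\notin\mathcal C}\lambda^\star/C_j>\lambda^{(I)}\sum_{j\in I}\tfrac1{C_j}=R$. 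So at least one new machine is capped every pass, the loop runs at most $p$ times, and when it stops $F=\mathcal C$, the level equals $\lambda^\star$, and the final assignment gives the remaining machines $\delta_i=\lambda^\star/C_i$. Thus $\delta_i=u_i=x_i^\star$ for $i\in\mathcal C$ and $\delta_i=\lambda^\star/C_i=x_i^\star$ otherwise, i.e.\ $\delta=x^\star$, the optimum.

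I expect the main obstacle to be bookkeeping in this third step rather than any deep idea: the inner \textbf{for} pass uses a normalizer $T$ frozen at the start of the pass while it keeps subtracting newly fixed caps from $R$, so one pass is \emph{not} literally a single water-filling step and machines may be capped in batches in an order-dependent way. The clean way around this is to argue only that the \textbf{while}-invariant is restored at the top of each iteration --- which is order- and batch-independent, since any index capped in a pass still obeys $u_i<\delta_i^1\le\lambda^\star/C_i$ --- reducing correctness to the order-independent fact that $\mathcal C$ is exactly the saturated set. Two boundary cases cost a line each: when $\delta_i^1=u_i$ exactly the strict test ``$\delta_i^1>\delta_i^2$'' leaves the machine unfixed, which is harmless because its water-fill value then coincides with $u_i$; and the argument uses the (commented-out but evidently intended) final step that spreads the leftover $R$ over the still-free machines by the same $\tfrac{R}{T}\cdot\tfrac1{C_i}$ rule, without which the loop would not terminate once no further capping occurs. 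Finally I would note once that $C_i$ and $u_i$ are genuine constants only after the simplification $|V_i|=\tfrac{|V|}{|E|}|E_i|$ introduced before~\eqref{eq:simplex}.
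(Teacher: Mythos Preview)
Your proof is correct and takes a genuinely different route from the paper. The paper argues by a short induction/exchange: it observes that the optimal value $f(E,p)$ is monotone in $E$, then notes that whenever Algorithm~\ref{alg:preprocess} caps a machine $i$ at its memory bound, any optimal solution must also load machine $i$ to capacity, since assigning it fewer edges would force the remaining $p-1$ machines to absorb strictly more and, by monotonicity, could not lower the objective. This reduces the problem to the remaining machines and edges, and the base case (no capping) is the equal-makespan split. Your approach instead characterises the relaxed optimum explicitly as a water-filling solution $x_i^\star=\min(u_i,\lambda^\star/C_i)$ via the monotone envelope $\Phi$, and then tracks a loop invariant identifying the algorithm's fixed set $F$ with the saturated set $\mathcal{C}$. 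What you buy is rigour: the paper's induction is terse about its induction variable and leaves the ``violates the optimal condition'' step implicit, whereas you make every inequality explicit and, in particular, handle the within-pass bookkeeping (frozen $T$, shrinking $R$) that the paper does not discuss. You also correctly flag two issues the paper glosses over: the non-termination of the \textbf{while} loop once $F=\mathcal{C}$ without the commented-out final allocation, and the tie case $\delta_i^1=\delta_i^2$. What the paper's argument buys is brevity and a clean recursive structure that transfers directly to the error-bound analysis of Theorem~\ref{trm:preprocess}.
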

\begin{proof} \label{prf:lemma1}
Apparently, our solution can find the optimal if all machine's memory is enough (Line 10 never occurs).
Let $f(E,p)=\lambda$, we have $f(E,p)\geq f(E^{\prime},p)$ when $E>E^{\prime}$.
By mathematical induction, we assume $f(E,p)$ is optimal $\forall E^{\prime}<E$.
For specific $E$ and $p$, assume $i$ is the machine that satisfies Line 10, then all of $M_i$ is used by Algorithm \ref{alg:preprocess} and $f(E_i,1)<f(E-E_i,p-1)$.
If this is not optimal, it must be $E^*_i<E_i$ in the optimal solution $f^*(E,p)$, thus $f(E-E_i,p-1)<f^*(E-E^*_i,p-1)$, which violates the optimal condition.
\end{proof}
\begin{theorem}\label{trm:preprocess} \textbf{Approximation Error Bound}
    Compared with the optimal solution for Equation \ref{eq:simplex}, the error bound of Algorithm \ref{alg:preprocess} is $\frac{p^2}{|E|}$.
\end{theorem}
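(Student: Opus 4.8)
The plan is to isolate the only source of suboptimality — the integrality of the $\delta_i$'s — and charge it against a lower bound on the optimum. By Lemma~\ref{lma:preprocess}, the continuous version of Algorithm~\ref{alg:preprocess} returns the \emph{exact} optimum $\{\delta_i^{c}\}$ of the relaxation of Equation~\ref{eq:simplex}, with value $\lambda^{c} = \max_i C_i \delta_i^{c}$; since dropping the integer constraints can only lower the optimum, $\lambda^{c} \le \lambda^{\mathrm{opt}}$, where $\lambda^{\mathrm{opt}}$ is the true integer optimum. Writing $\lambda^{\mathrm{alg}}$ for the value produced once Algorithm~\ref{alg:preprocess} rounds its output to integers, it therefore suffices to bound the relative gap $(\lambda^{\mathrm{alg}} - \lambda^{c})/\lambda^{c}$, and I would do this in two steps, comparing the integer run of the algorithm against its continuous shadow run.

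Step one: bound the per-machine rounding perturbation. Because $\sum_i \delta_i^{c} = |E|$ is an integer, the fractional parts of the $\delta_i^{c}$ sum to an integer of size at most $p$, so one can round all $\delta_i^{c}$ to integers $\delta_i^{\mathrm{alg}}$ that still sum to $|E|$ with $|\delta_i^{\mathrm{alg}} - \delta_i^{c}| < 1$ per machine. Since Algorithm~\ref{alg:preprocess} allocates the budget in at most $p$ passes of its while-loop (each non-terminating pass freezes, via Line~10, at least one machine at its memory cap and removes it from $I$), the rounding residuals of all passes are carried forward in $R$ and absorbed by the last batch of machines, inflating any single $\delta_i^{\mathrm{alg}}$ by at most $O(p)$ edges. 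Hence $\delta_{i}^{\mathrm{alg}} \le \delta_{i}^{c} + O(p)$ for every $i$, in particular for the machine $i^\star$ attaining $\lambda^{\mathrm{alg}}$.

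Step two: turn this into a relative bound. Invoking the resource normalization of Section~\ref{sec:problem} we may take $C_i \ge 1$, so $\lambda^{c} = \max_i C_i \delta_i^{c} \ge \tfrac1p \sum_i C_i \delta_i^{c} \ge \tfrac1p \sum_i \delta_i^{c} = \tfrac{|E|}{p}$. Then $\lambda^{\mathrm{alg}} = C_{i^\star}\delta_{i^\star}^{\mathrm{alg}} \le C_{i^\star}\delta_{i^\star}^{c} + O(p)\,C_{i^\star} \le \lambda^{c} + O(p)$, the bounded heterogeneity constant $C_{i^\star}$ being absorbed into the $O(\cdot)$, and dividing by $\lambda^{c} \ge |E|/p$ yields $\tfrac{\lambda^{\mathrm{alg}} - \lambda^{\mathrm{opt}}}{\lambda^{\mathrm{opt}}} \le \tfrac{\lambda^{\mathrm{alg}} - \lambda^{c}}{\lambda^{c}} \le \tfrac{O(p)}{|E|/p} = O\!\bigl(\tfrac{p^2}{|E|}\bigr)$, which is the asserted bound $\tfrac{p^2}{|E|}$.

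The step I expect to be the real obstacle is making the cascade argument of Step one airtight: one must pin down exactly how the algorithm rounds (floor-then-distribute-the-deficit versus nearest) and show that freezing a machine at the integer-rounded memory cap never retroactively violates another machine's memory constraint — otherwise feasibility, not merely the error bound, is at stake — and that the residuals accumulate only additively over the $\le p$ passes rather than compounding. A secondary point needing care is the treatment of the constants $C_i$: one either argues they stay $\Theta(1)$ under the normalization, or states the theorem with an explicit $\max_i C_i$ factor, since a crude accounting leaves a stray $C_{i^\star}$ that must be discharged to land exactly on $p^2/|E|$.
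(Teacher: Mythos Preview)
Your approach is essentially the same as the paper's: both argue that (i) rounding over at most $p$ passes of the while-loop introduces at most $O(p)$ edges (equivalently, at most $pC_i$ additional cost) on the bottleneck machine, and (ii) the optimum is lower-bounded by $|E|C_i/p$, so the ratio is $p^2/|E|$. Your write-up is in fact more careful than the paper's short sketch --- the paper simply asserts ``the difference part is only one edge, thus $f(E{+}1,p)-f(E,p)\le C_i$ and $f(E,p)-f^*(E,p)\le pC_i$ as in each iteration at least one partition can be decided'' and then cancels the $C_i$'s without addressing the heterogeneity issue or the memory-cap feasibility point you flag; those concerns are real but the paper does not engage with them.
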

\begin{proof} \label{prf:preprocess}
If $\delta^1_i$ is not integer and is smaller than $\delta^2_i$, we always use the flooring integer.
If this not optimal, we can compare the difference with the optimal (using the ceiling integer).
The difference part is only one edge, thus we have $f(E+1,p)-f(E,p)\leq C_i$ and $f(E,p)-f^*(E,p)\leq pC_i$ as in each iteration at least one partition can be decided.
$f^*(E,p)\geq \frac{|E|C_i}{p}$, thus the error is bounded by $\frac{pC_i}{\frac{|E|C_i}{p}}=\frac{p^2}{|E|}$.
\end{proof}
In general settings ($|E|>10^7$), the error bound is much smaller than $\frac{1}{10}$.
Experiments in Section \ref{sec:evaluate} also verifies the quality of Algorithm \ref{alg:preprocess}.

\Paragraph{Analysis}.
In each round, at least one machine is allocated or all remaining machines are allocated.
Thus, the number of rounds is bounded by $p$.
Within each round, two $for$ loops are needed excluding the variables that can be pre-computed (e.g., $C_i$, $\sum_{i}{\frac{1}{C_i}}$ and $\delta_i^2$).
Overall, the time complexity  of Algorithm \ref{alg:preprocess} is $O(p\times 2p+p)=O(p^2)$, which is far more efficient than the exponential-complexity MIP solvers.
This makes the preprocessing the least time-consuming part in the entire WindGP algorithm as $p$ (usually $<10^3$) is much smaller than the size of graph $G$ ($\sim 10^7$).
The space complexity is linear ($O(p)$), which can be deduced simply.

\subsection{Partition Expansion: Best-first Search}\label{sec:best-search}

Once the edge set capacities are computed for each machine, we can generate $p$ partitions for these machines one by one.
Specifically, for machine $i$, edge set $E_i$ is selected from the working partition $G_i$ containing all unassigned edges so far.
The number of edges in each partition is strictly restricted by the capacity, thus the computation cost is marginally balanced on heterogeneous machines.
The next challenge is the balance of communication cost, which is the goal of this section.
Let \emph{core set} $C$ be the set of vertices whose unassigned edges are all allocated successfully in current machine and \emph{boundary set} $S$ be the vertex set covered by $E_i$.   
During each iteration, one vertex $v_{sel}$ is selected for expansion according to the strategy below. 

\Paragraph{Degree Balanced Generation}.
Heuristically, vertices with lower degree and shorter distance (from new candidate vertices to the core set $C$) are preferred.
If $S\setminus C=\emptyset$, $v_{sel}$ is selected by the \emph{vertexSelection} procedure, which can be designed from perspective of \emph{degree} and \emph{distance} instead of naive random selection. 
Otherwise, in order to improve the subgraph cohesiveness, we focus on both two kinds of edges ($N(u)\setminus S$ and $N(u)\cap S$), while NE only considers the first kind.
Considering two nodes $u$ and $v$ such that $|N(u)\setminus S|=|N(v)\setminus S|$ and $|N(u)\cap S|>|N(v)\cap S|$, $u$ is preferred because it contributes to higher cohesiveness. 
A hyper-parameter $\alpha (0\leq \alpha\leq 1)$ is built to control the balance:
\begin{equation}
    \small
	\begin{aligned}
        v_{sel} & =   argmin_{v\in S\setminus C}[|N(v)\setminus S|-\alpha|N(v)\cap S|]  \\
    & = argmin_{v\in S\setminus C}[|N(v)\setminus S|-\alpha(|N(v)|-|N(v)\setminus S|)]  \\
    & = argmin_{v\in S\setminus C}[(1+\alpha)|N(v)\setminus S|-\alpha|N(v)|]  \\
	\end{aligned}
\end{equation}
In practice, $\alpha$ can be fine tuned to improve the performance.

\Paragraph{Border Generation}.
In order to reduce the communication cost, we pay more attention to the \emph{border vertex} (i.e., vertices that exist in multiple machines). 
Obviously, for a border vertex $v$, its communication cost is smaller if $v$ only exists in two machines. 
Let $B$ bet the set of all border vertices that have been found. 
Namely, $B$ should be updated with all border vertices of partition $i$ after the expansion procedure of partition $i$ finishes. 
Let $I_B$ be the indicator function as follow:
\begin{equation}
    \small
	I_B(v)=
	\left\{
	\begin{aligned}
		1&  , & \mbox{if } v \in B \\
		0&  , &  \mbox{if } v \notin B \\
	\end{aligned}
	\right.
\end{equation}
During the expansion, we prefer the border vertices controlled by a hyper-parameter $\beta$ ($0\leq \beta \leq 1$). 
Combined with the Degree Balanced Generation strategy, we define the final priority function as follow:
\begin{equation}
    \small
	w(v) = (1+\alpha)|N(v)\setminus S|-(\alpha+I_B(v)\beta)|N(v)|
\end{equation}
And,
\begin{equation}
    \small
	v_{sel} = argmin_{v\in S\setminus C}w(v)
\end{equation}
The tuning of $\alpha$ and $\beta$ is in our experiments.

The combination of degree balanced generation and border generation is called the \emph{Best-First Search}, which is superior to traditional breadth-first search (BFS) or depth-first search (DFS).
During the graph exploration, BFS maintains a frontier queue and expand these frontiers one by one, then it inserts new frontiers to the queue.
In contrast, best-first search expands from one of the frontiers that have the minimum $w(v)$ and update the frontier queue.
Our method is better than NE expansion \cite{NE}, because NE only consider the size of $N(v)\setminus S$ and may generate long-tail DFS paths (as analyzed in Section \ref{sec:related}).

Taking Figure \ref{fig:expansion} as a specific example, let $X$, $Y$ and $Z$ be the candidate vertices and $X$ be the node that exist in previous partitions, i.e., $X\in B$. 
The thick edges are the ``incoming'' edges, namely the edges connecting to set $S$, and the thin edges are the ``outgoing'' edges. 
Assuming that  $\alpha=0.3$ and $\beta=0.3$, we have $w(X)=(0.3+1)*2-(0.3+0.3*1)*4=0.2$, $w(Y)=(0.3+1)*2-(0.3+0)*5=1.1$, $w(Z)=(0.3+1)*2-(0.3+0)*4=1.4$.
Consequently, the vertex $X$ should be expanded in current step.

    \vspace{-0.15in}
\begin{figure}[H]
	\centering
	\includegraphics[width=7cm]{\picfolder 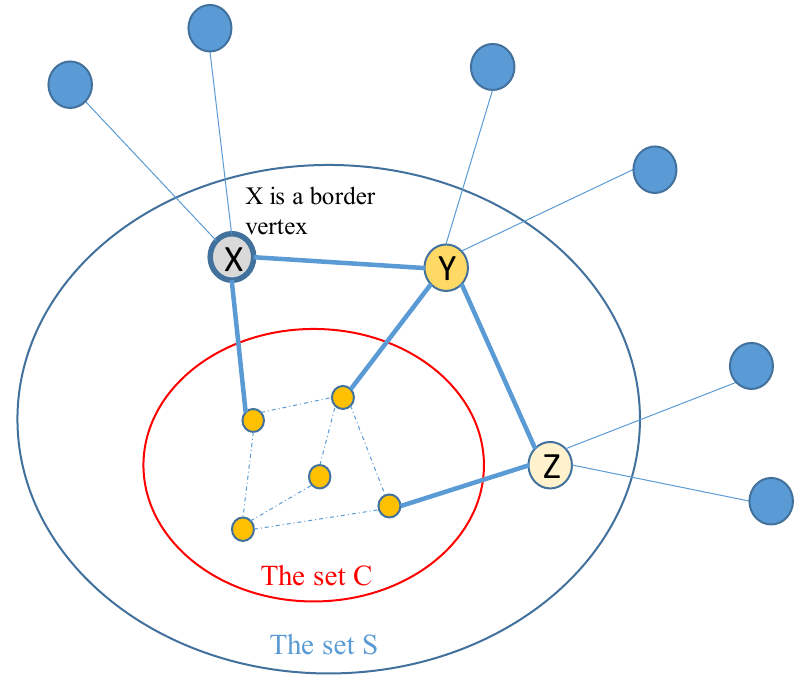}
    \vspace{-0.15in}
	\caption{An example of vertex expansion}
	\label{fig:expansion}
\end{figure}
    \vspace{-0.15in}

The entire procedure of our heuristic expansion is shown in Algorithm \ref{alg:expand}, while the update rule of core set and boundary set is described in Algorithm \ref{alg:alloc}. 
To generate one edge partition $E_i$, WindGP expands from a selected vertex iteratively until the size of this partition exceeds the capacity $\delta_i$ (computed in Section \ref{sec:preprocess}).
During each iteration, the vertex $v$ with the minimum $w(v)$ is selected if the candidate set is not empty; otherwise, the vertex is chosen from all remaining vertices in $V$.
Once $v$ is decided, $AllocEdges$ procedure (Algorithm \ref{alg:alloc}) is called to expand $v$ and process the 1-hop and 2-hop neighbors of $v$.
Note that $D=N(x)\setminus S$ are new boundary nodes, and the edges between $D$ and previous boundary set are moved from $E$ to $E_i$.
These new added edges belongs to the remaining edge set $E$ rather than $E(G)$ and they also include $\overline{xy}$ (Line \ref{algcmd:addEdge} of Algorithm \ref{alg:alloc}).
Finally, the border vertex set $B$ is supplemented with the new generated border vertices in current partition.

\Paragraph{Analysis}.
All edges are allocated in Algorithm \ref{alg:alloc}, thus the total time complexity of calling $AllocEdges$ in Algorithm \ref{alg:expand} is $O(|E_i|)$ as each edge is only removed once from $E$.
In practice, the operation of vertex selection in Line \ref{algcmd:sel1} and Line \ref{algcmd:sel2} can be accelerated by \emph{Min-Heap} \cite{MinHeap}, and the heap size is bounded by $V_i$.
Therefore, the total complexity of these two lines is $|V_i|\log{|V_i|}$ as the \emph{While} loop is executed at most $V_i$ times (i.e., each vertex only enters the heap once).
Besides, the set operations (including set union, set intersection and set minus) can be implemented by \emph{bitmap} \cite{SQLG+}, which can optimize the complexity of Line \ref{algcmd:border} to $O(|S\setminus C|)=O(|V_i|)$.
To sum up, the total time complexity of Algorithm \ref{alg:expand} is $O(|E_i|+|V_i|+|V_i|\log{|V_i|})=O(|E_i|+|V_i|\log{|V_i|})$.
As for space complexity, the set structure $C$ and $S$ contain no more than $V_i$ nodes, and the set structure $B$ contain no more than $V(G)$ nodes, thus the space complexity is $O(|V_i|+|V(G)|+|E_i|)=O(|V(G)|+|E_i|)$ except for the storage of the entire graph $G$.

\begin{algorithm}
	\small
	\caption{Generate one edge partition $E_i$}
	\label{alg:expand}
    \KwIn{$E(G)\setminus \sum_{j<i} E_j, V, \delta_i, \alpha, \beta, B$}
	\KwOut{$E_i$}
	\textbf{procedure} $EXPAND(E(G)\setminus \sum_{j<i} E_j,\delta_i)$  \\
	$C,S,E_i \gets \emptyset$  \\
	\While{$|E_i| \leq \delta_i$}
	{
		\If{$S\setminus C = \emptyset$} 
		{
            $x \gets vertexSelection(V\setminus C)$ \label{algcmd:sel1} \\
		}
		\Else
		{
            $x \gets argmin_{v\in S\setminus C}[(1+\alpha)|N(v)\setminus S|-(\alpha+I_B(v)\beta)|N(v)|]$  \label{algcmd:sel2}  \\
		} 
		$AllocEdges(C,S,E_i,x, E\setminus \sum_{j<i} E_j)$
	}
    $B \gets B\cup(S\setminus C)$ \label{algcmd:border}  \\
    \Return $E_i$ \\
\end{algorithm}

\begin{algorithm}
	\small
	\caption{Allocate edges for core vertex $x$}
	\label{alg:alloc}
	\textbf{procedure} $AllocEdges(C,S,E_i,x, E)$  \\
	$C\gets C\cup\{x\}, S\gets S\cup\{x\}$  \\
	\ForEach{$y\in N(x)\setminus S$} 
	{
		$S\gets S\cup\{y\}$   \\
		\ForEach{$z\in N(y)\cap S$}
		{
            $E_i\gets E_i\cup \{\overline{yz}\}$ \label{algcmd:addEdge}   \\
			$E \gets E\setminus \{\overline{yz}\}$  \\
			\If{$|E_i|\geq \delta_i$}
			{
				\Return
			} 	
		}	
	}
\end{algorithm}

\subsection{Post-Processing: Subgraph-local Search}\label{sec:post-process}

In this section, Subgraph-Local Search (SLS) is proposed to improve the partition generated by previous stages. 
According to \cite{hromkovivc2013algorithmics,pisinger2010large}, the quality of edge partition result can be enhanced by moving or swapping edges between partitions.
By utilizing the characteristics of local subgraph, SLS iteratively finds a better solution in the neighborhood of current solution. 
The operator of designing a neighborhood in the solution space and finding better solution by SLS is critically important. 
A ``good'' operator should not only specify the promising neighborhood that leads to better solutions, but also avoid getting stuck in local optimal.  
In this paper, we design two SLS operators. 
The first one is the \emph{destroy-and-repair} operator that aims to find better solutions, and the second one is the \emph{re-partition} operator that attempts to get escaped from local optima.
We first apply the first operator to current result and count the number of consecutive \emph{fail-to-improve} times.
If it exceeds the pre-determined threshold ($N_0$, set to 5 by default), the current result is viewed as local optimal and the re-partition operator is applied. 
Algorithm \ref{alg:SLS} gives the main procedure.

\begin{algorithm}
	\small
	\caption{Main framework of SLS}
    \label{alg:SLS}
	\textbf{procedure} SLS($\{E_i|1\leq i\leq p\}, T_0, N_0,k$)  \\
	/* $N_0$ and $T_0$ are the number of local and global try times, respectively  */    \\
	\While{$T_0 > 0$}
	{
		\If{$DestroyRepair(\{E_i|1\leq i\leq p\})$}
		{
			$n \gets 0$ \\
		}
		\Else
		{
			$n \gets n+1$    \\
			\If{$n \geq N_0$}
			{
				$REPARTITION(\{E_i|1\leq i\leq p\},k)$  \\
				$n \gets 0$  \\
			}
		}
		$T_0 \gets T_0 - 1$  \\
	}
\end{algorithm}

\Paragraph{Destroy-and-Repair}.
When applying the destroy-and-repair operation, part of the current partition is removed by a destroy step and  rebuilt by a repair step while the remaining  part is preserved.
In the destroy step, the non-optimal part is specified for future re-construction. 
Hence there are two key decisions: the first one is which machines to destroy, and the second one is which edges to destroy in the selected machine. 
A quantile parameter $\gamma$ is leveraged to decide the threshold (Line 4 of Algorithm \ref{alg:destroy}), and machines with time cost above this threshold are destroyed. 
For each destroyed machine, a proportion $\theta$ ($0<\theta<1$, set to 1\% by default) of total edges are removed. 
The \emph{last-in-first-out} rule is adopted to select the edges to be removed, because in this way the connectivity of edges can be preserved in current machine. 
By adjusting $\gamma$, we can control the threshold, which is $min\{T_i\}$ and $max\{T_i\}$ when $\gamma$ is 0 and 1 respectively.
In our work, $\gamma$ is set to 0.9 by default, which is easier for repairing.



    \vspace{-0.15in}
\begin{figure}[htbp]
	\centering
	\includegraphics[width=7cm]{\picfolder 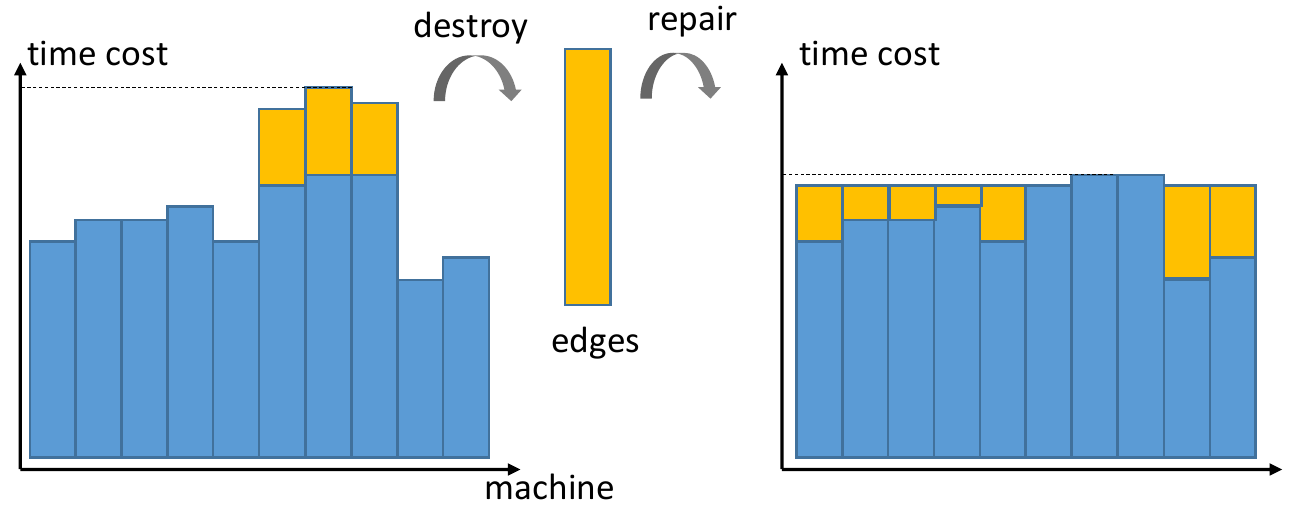}
	\vspace{-0.15in}
	\caption{destroy-repair}
	\label{fig:destroy-repair}
\end{figure}
\vspace{-0.15in}

In the repair step, the key point is to decide the most suitable machine for each destroyed edge when reconstructing the partition. 
A greedy heuristic is used to balance the assignment. 
For each destroyed edge $e$, we first select the machines where endpoints of $e$ already exist. 
If more than one machine is selected, we select the machine that has the lowest time cost.


\Paragraph{Re-partition}.
When the destroy-and-repair operator fails to improve the current partition, it may be a local optimum. 
Methodologically, diversification of SLS operators is needed to deal with this case, because different operators construct different types of neighborhood. 
In our method, a re-partition operator is applied.  
We specify $k$  subgraphs from  current partition $\{E_i|i \in \mathcal{I}, \mathcal{I}\subseteq\{ 1,2,...p\}, |\mathcal{I}| = k\}$, and apply Algorithm \ref{alg:expand} on the union of these subgraphs to get new  partitions while the unselected partitions keeps unchanged. 
In such a way the partitions among the selected machines is fully changed.
Hence, this move is in a scale larger than destroy-and-repair, and it is promising to get escaped from local optimum. 
When choosing the subset $\mathcal{I}$ for re-partition, we first select the  subgraph  $G_i$ with the largest $T_i=T_i^{cal}+T_i^{com}$, then select $k-1$ subgraphs $\{G_j\}$ with the maximum $n_{i,j}$ from remaining partitions, where  $n_{i,j}$ is the number of replica nodes between $G_i$ and $G_j$. 

\begin{algorithm}
 	\small
 	\caption{destroy-and-repair}
 	\label{alg:destroy}
 	\textbf{procedure} DestroyRepair($\{E_i|1\leq i\leq p\}$)  \\
 	
 	
 	
 	\For{$i=1,2,...p$} {
 		$T_i=T_i^{com} + T_i^{cal}$
 	}
 	$thd \gets min_{1\leq i\leq p}T_i + \gamma * (max_{1\leq i\leq p}T_i - min_{1\leq i\leq p}T_i)$  \\
 	
 	$S \gets \emptyset$ \\
 	\For{$i=1,2,...p$}
 	{
 		\If{$T_i \geq thd$}   
 		{
 			$C \gets \textit{Remove and get a propotion of $\theta $ edges in } E_i$   \\
 			$S \gets S \cup C$   \\
 		}
 	}
 	$\{T_i^{com}\}, \{T_i^{cal}\}, \{T_i\} \gets UpdateObjective(\{E_i|1\leq i\leq p\})$  \\
 	\For{$\overline{xy} \in S$}
 	{
 		$A(x) \gets \{i|x\in V_i, 1\leq i\leq p\}$  \\
 		$A(y) \gets \{i|y\in V_i, 1\leq i\leq p\}$    \\
 		\If{$A(x) \cap A(y) \neq \emptyset$}
 		{
 			$i \gets BalancedGreedyRepair(A(x) \cap A(y), \{T_i\})$  \\
 		}
 		\If{$(i = 0) \ or\ (A(x) \cup A(y) \neq \emptyset\ and\ A(x) \cap A(y) = \emptyset)$}
 		{
 			$i \gets BalancedGreedyRepair(A(x) \cup A(y), \{T_i\})$  \\
 		}
 		\If{$(i = 0)\ or\ A(x) \cup A(y) = \emptyset$}
 		{
            $i \gets BalancedGreedyRepair(\{1,2,...p\}, \{T_i\})$  \\
 		}
 		$E_i\gets E_i\cup \{\overline{xy}\}$  \\
 		$\{T_i^{com}\}, \{T_i^{cal}\}, \{T_i\} \gets UpdateObjective(\{E_i|1\leq i\leq p\})$  \\
 	}
 	\If{objective $TC$ is improved}
 	{
 		\Return true   \\
 	}
 	\Else
 	{
 		\Return false \\
 	}
 \end{algorithm}
 
 \begin{algorithm}
 	\small
 	\caption{BalancedGreedyRepair}
 	\label{alg:greedy}
 	
 	\textbf{procedure} BalancedGreedyRepair($S, \{T_i\}$)  \\
 	$v \gets infinity$  \\
 	$j \gets 0$	  \\
 	\For{$i \in S$}
 	{
 		
 		$criteria \gets T_i$ \\
 		
 		\If{$criteria < v$ and (memory of partition $i$ is enough)}
 		{
 			$v \gets criteria$  \\
 			$j \gets i$  \\
 		}
 	}
 	\Return $j$;
 \end{algorithm}

\Paragraph{Analysis}. 
In destroy-and-repair procedure (Algorithm \ref{alg:destroy}), $\theta |E_i|$ edges are selected to be reassigned.
For each edge, we need to greedily search the most suitable machine among all $p$ machines. 
Thus, the time complexity of destroy-and-repair is $O(p|E|)$ and the space complexity is $O(|E|)$.
In re-partition procedure (Algorithm \ref{alg:repart}), the worst-case time complexity is equivalent to Algorithm \ref{alg:expand} (i.e., $O(|E|+|V|log|V|)$), while the space complexity is $O(|V|+|E|)$.
In Algorithm \ref{alg:SLS}, the running time of the entire SLS algorithm is bounded by a threshold $T_0$, which limits the number of iterations that SLS performs.
In practice, the number of iterations is usually a small constant (<10), thus the time complexity is $O(p|E|+|E|+|V|\log|V|)=O(p|E|+|V|\log|V|)$.

 \begin{algorithm}
 	\small
 	\caption{Re-partition}
 	\label{alg:repart}
 		\textbf{procedure} REPARTITION($\{E_i|1\leq i\leq p\}, k$)  \\
        $i \gets argmax_{1\leq j\leq p}\{T_j\}$  \\
 		$q \gets k$-th largest value among $\{n_{i,j}|j \neq i\}$    \\
 		$\mathcal{I} \gets \{j| n_{i,j} > q, j \neq i\}\cup\{i\}$    \\
 		$E' \gets EXPAND( \cup_{m\in \mathcal{I}}E_m  , \delta)$    \\
 		\Return  $ E' \cup \{E_j| j \notin \mathcal{I}\} $  \\
 \end{algorithm}

To sum up, the total time complexity of WindGP can be computed by accumulating three phases:
\begin{equation}
    \small
	\begin{aligned}
       Time & = O(p^2+\sum_i(|E_i|+|V_i|\log{|V_i|)}+T) \\
            & =O(p^2+p|E|+|V|\log|V|) \\
	\end{aligned}
\end{equation}
And the space complexity can be formulized as:
\begin{equation}
    \small
	\begin{aligned}
          Space & = O(p+max_i{|E_i|+|V|}+|E|+|V|+|E|) \\
                & = O(p+|V|+|E|) \\
	\end{aligned}
\end{equation}
Obviously, both the time complexity and the space complexity of WindGP are linear to the graph size.

\section{Extensions}\label{sec:extension}

In this section, we discuss several extensions of our algorithm (WindGP).
Though WindGP focuses on solving the problem defined in Definition \ref{def:problem}, it can be generalized to more scenarios such as \emph{directed labelled graph}, \emph{vertex-centric partition} and \emph{Map-Reduce based system}.

\Paragraph{Directed Graph and Labelled Graph}.
For directed graphs, WindGP can be adjusted by distinguishing incoming/outgoing edges as well as in/out degrees in Algorithm \ref{alg:expand}, i.e., the way of graph traversal changes.
Real-life graphs may contain many node/edge labels (e.g., properties or features), which also need to be stored and computed.
In order to process these property graphs, we maintain the mapping ($f$) between node/edge ID and label sets.
The graph structure is partitioned first, then the label sets are assigned to corresponding partition by the mapping $f$.
Note that the calculation costs of a node and an edge (i.e., $C_i^{node}$ and $C_i^{edge}$, respectively) as well as the memory occupation ($M^{node}$ and $M^{edge}$) need to be increased according to the size of label sets.

\Paragraph{Vertex-Centric Partition}.
Except for edge-centric partition, vertex-centric partition is also frequently used in real applications.
However, the research of edge-cut on heterogeneous machines is still blank.
Though WindGP is originally designed for vertex-cut, it can adapt to edge-cut simply.
Firstly, WindGP needs to generate the result of edge partition (i.e., $p$ partitions).
For each vertex $u$, it may exist in several machines.
Let $deg_i(u)$ and $deg(u)$ be the degree of $u$ in $G_i$ and $G$, respectively,
We need to re-assign all vertices to $p$ machines according to the current edge partition.
Intuitively, if we place vertex $u$ in machine $j$ with larger partial degree of $u$, the \emph{edge-cut} of $u$ should be smaller.
Thus, each vertex $u$ should be placed in the $k$-th machine with the maximum $\frac{deg_k(u)}{deg(u)+1}$ as long as machine $k$ is not full.
Finally, we enumerate all edges of $G$ and put each edge $\overline{uv}$ in the partitions that $u$ and $v$ belong to.

\Paragraph{Map-Reduce based system}.
The routine of distributed running is different on map-reduce based systems such as GraphX \cite{GraphX} and Giraph \cite{Giraph}.
The communications can only occur after all local calculations are over, as shown in Figure \ref{fig:BSP2}.
In this case, the execution time should be $max_{i}(max_{i}(T_{cal}^{i})+T_{com}^{i})$, which is different from $TC$ defined in Definition \ref{def:problem}.
This can also be processed well by our 3-phase mechanism, while the only difference is the object goal in the post-processing phase.

\begin{figure}[htbp]
	\centering
	\includegraphics[width=7cm]  {\picfolder 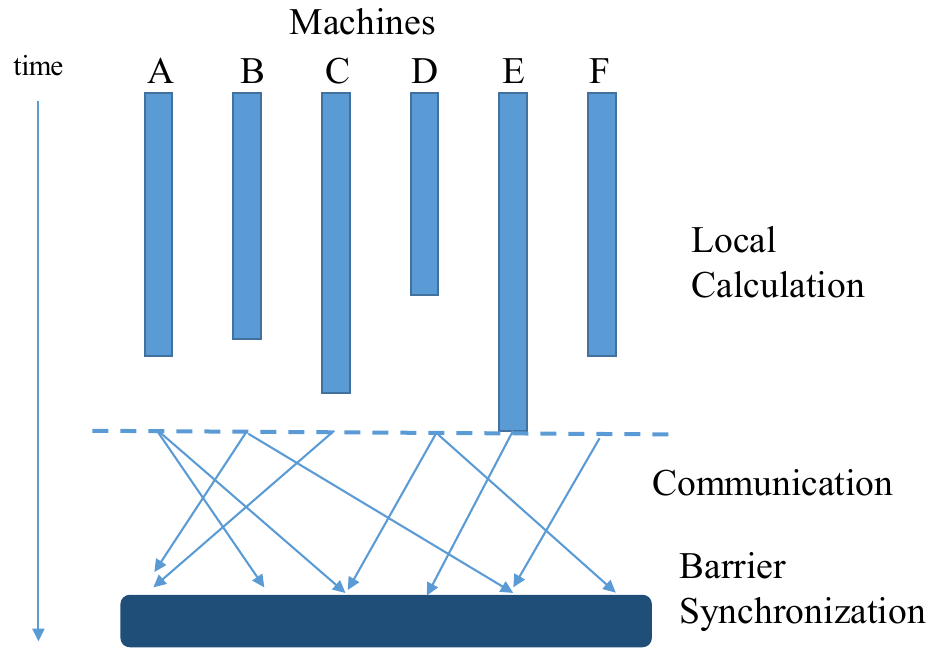}        	
    \vspace{-0.1in}     	
	\caption{The routine of Map-Reduce based Graph Engine}      	
	\label{fig:BSP2}  
\end{figure}

Streaming graph partition on heterogeneous machines is also fascinating, but it is rather different from our algorithms, thus we leave it to future work.
Besides, modern graphs have tens of billions of edges or trillions of attributes, which is hard to be partitioned by a single machine.
For example, Graph Neural Network (GNN) usually has embeddings with hundreds of features \cite{AGL}.
Therefore, distributed graph partition on heterogeneous machines is also an interesting direction. 

Except for the variants, we can also extend WindGP algorithm to support multi-threading for speedup.
For example, Algorithm \ref{alg:alloc} , Algorithm \ref{alg:destroy} and Algorithm \ref{alg:greedy} can be accelerated by \emph{OpenMP} to utilize multiple cores in the running machine.

\section{Experiments}\label{sec:experiment}

We evaluate our method WindGP as well as all counterparts on a machine running CentOS 7 system with 24 Intel E5-2690 2.60GHz cores, 128GB memory and 1TB disk.
WindGP is implemented with \emph{C++} and compiled with \emph{g++ 7.3.0}, and the optimization flag is set to \emph{-O2}. 
According to the experiments of previous work \cite{NE, EBV}, we select four state-of-the-art algorithms as counterparts: METIS \cite{METIS}, HDRF \cite{HDRF}, NE \cite{NE}, EBV \cite{EBV}.
To provide a fair comparison, we modify these algorithms to meet the requirement of heterogeneous-machine edge partition, i.e., adding constraints of memory capacity of each machine.
Note that METIS is originally a vertex-centric method, thus we transform it into the edge-centric solution in the same way discussed in \cite{NE}: with the node degree as the node weight, it partitions $G$ using $gpmetis$ command, then assigns each edge $\overline{uv}$ to the machine that $u$ or $v$ exists randomly as long as the machine has enough memory.
In our experiments, the main metric is the $TC$ score (Definition~\ref{def:problem}), which measures the quality of partition results.
Besides, the running time of distributed graph algorithms on different partition results is also evaluated in Section \ref{sec:distributed}.
The comparison with heterogeneous solutions \cite{HeterCompPart,GrapH,HaSGP,HAEP} is included in the final part.

\begin{table}[htbp]
    \small
    \caption{Statistics of Datasets}
    \label{tab:dataset}
    \vspace{-0.1in}
    \begin{threeparttable}
        \centering
    \setlength{\tabcolsep}{3mm}{
        \begin{tabular}{crrrr}
            \toprule
            Name & $|V|$ & $|E|$ & MD\tnote{1} & Type\tnote{2} \\
            \midrule
	CO & 3,072,441 & 117,185,083 & 33,313 & rs \\ 
	LJ & 4,847,570 & 33,099,465  & 20,290 & rs \\ 
    PO & 1,632,803 & 30,622,564 & 20,518  & rs \\
	CP & 3,774,768 & 16,518,947 & 793 & rs \\ 
    RN  & 1,965,206 & 2,766,607 & 8 & rm \\
    \midrule
    TW & 41,652,230 & 1,202,513,046 & 3M & rs \\ 
    DB & 233M & 1.1B  & 17M & rs  \\
    FR & 65M & 1.8B & 5.2K & rs \\
    YH & 417M & 2.8B & 2.5K & rs \\
            \bottomrule
        \end{tabular}
    }
        \begin{tablenotes}
            \item[1] Maximum degree of the graph.
            \item[2] Graph type: r:real-world, s:scale-free, and m:mesh-like.
        \end{tablenotes}
    \end{threeparttable}
\end{table}

\subsection{Dataset}\label{sec:dataset}

We mainly evaluate all solutions on some representative dataset in the Stanford Network Analysis Project (SNAP \cite{site:snap}) such as TW (Twitter \cite{twitter2010}), CO (com-Orkut),  LJ(soc-LiveJournal), PO (soc-Pokec), CP (cit-Patents \cite{patent}) and RN (roadNet-CA). 
The details of these datasets are listed in Table \ref{tab:dataset}.
Besides, we also use a synthetic generator (R-MAT \cite{R-MAT}) to generate a series of power-law graphs for the scalability test in Section \ref{sec:scalability}.
In Section \ref{sec:distributed}, graphs with billions of edges are used for distributed graph computing because modern graph and machine resource are usually very large.

To eliminate the influence of randomness in some algorithms (e.g., NE \cite{NE}), all tasks are run 10 times and the averaged result is recorded as the metric.
In our experiments, the executing time of all partition algorithms is required to be smaller than 1 hour (for graphs with billions of edges) and 10 minutes (for other datasets), respectively.

\Paragraph{Machine Configuration}. 
\nop{
Refer to cloud platform \cite{DBLP:conf/ic2e/GarraghanTX13,DBLP:journals/tompecs/HerbstBKOEKEKBA18,DBLP:journals/simpra/ZakaryaG19}, the resources of machines can be quantified by relative rates.
For example, machine $i$ has the smallest memory (100GB), then we can set $M_i=1$ and set $M_j=2$ if machine $j$ has 200GB memory.
It is similar for the computing ($cores\times frequency$) and communication (network bandwidth).
}
The quantification of machine resource is detailed in Section \ref{sec:problem}, and experiments are done on different homogeneous clusters to verify the feasibility of our methodology.

We mainly use two types of machines: \emph{super machine} and \emph{normal machine}.
For simplicity, $M^{node}$ and $M^{edge}$ are set to 1 and 2 respectively.
Recall that in Definition \ref{def:problem}, the configuration of each machine is a quadruple: $(M_i, C_i^{node},C_i^{edge},C_i^{com})$.
On large graphs (e.g., TW and CO), there are 100 machines (20 super machines and 80 normal machines), and the configuration of super machine and normal machine is $(10^8, 10, 15, 15)$ and $(3\times 10^7, 5, 10, 10)$, respectively.
In contrast, there are 30 machines (10 super machines and 20 normal machines) on other datasets and the configuration is $(10^7, 10, 15, 15)$ and $(3\times 10^6, 5, 10, 10)$.
In Section \ref{sec:scalability}, we also study the impact of the number of machine types and the machine number.

\Paragraph{Parameter Tuning}. 
Several hyper-parameters are used in our solution as well as others.
For each algorithm, comprehensive tests are conducted on all datasets and the best parameter is found out for each dataset.
In WindGP, $\alpha$ and $\beta$ are both set to 0.3, and other parameters like $\gamma$ and $\theta$ keep the default value as specified in Section \ref{sec:algorithm}.
The details of hyper-parameter tuning in WindGP is listed in Table \ref{tab:tune-alpha}, \ref{tab:tune-beta}, \ref{tab:tune-gamma}, \ref{tab:tune-theta}, \ref{tab:tune-N0},  and \ref{tab:tune-T0}.
Lookup in Table \ref{tab:notations} for the meaning of these hyper-parameters.

As for $\alpha$, it controls the balance between $N(u)\setminus S$ and $N(u)\cap S$.
On LJ, CP and RN, the average degree is low (<7), thus the long-tail effect caused by small $N(u)\cap S$ is more prominent.
While on other high-degree graphs, much larger $N(u)\setminus S$ generates too many borders which raises the communication cost. 
Overall, the best value is set to 0.3 for $\alpha$.

\begin{table*}[htbp]
	\small
	\caption{Tuning of $\alpha$ in WindGP}
	\label{tab:tune-alpha}
	\vspace{-0.1in}
	\begin{threeparttable}
		\small
		\centering
			\begin{tabular}{ |c|c|c|c|c|c|c|c|c|c|c|} 
				\hline
				$TC$ & 0 &  0.1 & 0.2 & 0.3 & 0.4 & 0.5 & 0.6 & 0.7 & 0.8 & 0.9 \\
				\hline
				TW  & 64M & 62M & 61M & 60M & 61M & 62M & 65M & 70M & 76M & 85M \\ 
				\hline
				CO  & 34M & 32M & 31M & 31M & 32M & 33M & 36M & 39M & 45M & 52M \\ 
				\hline	
				LJ  & 34M & 28M & 25M & 23M & 23M & 24M & 25M & 25M & 27M & 28M \\ 
				\hline
				PO  & 25M & 23M & 21M & 21M & 21M & 23M & 25M & 29M & 33M & 38M \\ 
				\hline
				CP  & 20M & 15M & 12M & 11M & 11M & 11M & 12M & 13M & 13M & 14M \\ 
				\hline
				RN  & 26M & 20M & 17M & 15M & 15M & 16M & 18M & 18M & 19M & 19M \\ 
				\hline			
			\end{tabular}
	\end{threeparttable}
\end{table*}

As for $\beta$, it controls the impact of border vertices.
On graphs with low average degree, border vertices are much fewer, thus the impact of border vertex number and $N(u)\setminus S$ are both slight.
In contrast, there are too many borders on TW, CO and PO, raising $TC$ in two directions. 
Overall, the best value is set to 0.3 for $\beta$.

\begin{table*}[htbp]
	\small
	\caption{Tuning of $\beta$ in WindGP}
	\label{tab:tune-beta}
	\vspace{-0.1in}
	\begin{threeparttable}
		\small
		\centering
			\begin{tabular}{ |c|c|c|c|c|c|c|c|c|c|c|} 
				\hline
				$TC$ & 0 &  0.1 & 0.2 & 0.3 & 0.4 & 0.5 & 0.6 & 0.7 & 0.8 & 0.9 \\
				\hline
				TW  & 80M & 71M & 65M & 60M & 61M & 62M & 64M & 64M & 67M & 70M \\ 
				\hline
				CO  & 42M & 37M & 33M & 31M & 32M & 33M & 33M & 33M & 34M & 34M \\ 
				\hline	
				LJ  & 24M & 24M & 23M & 23M & 24M & 25M & 25M & 25M & 26M & 27M \\ 
				\hline
				PO  & 30M & 25M & 22M & 21M & 22M & 22M & 24M & 25M & 25M & 26M \\ 
				\hline
				CP  & 13M & 13M & 12M & 11M & 11M & 12M & 12M & 13M & 14M & 15M \\ 
				\hline
				RN  & 15M & 15M & 15M & 15M & 15M & 15M & 16M & 17M & 17M & 17M \\ 
				\hline			
			\end{tabular}
	\end{threeparttable}
\end{table*}

According to Algorithm \ref{alg:destroy} in Section \ref{sec:post-process}, $\gamma$ is the threshold of edges above which partitions should be destroyed.
When $\gamma$ is set to 1, only partitions with the maximum cost will be destroyed, which limits the improvement of subgraph-local search.
Once $\gamma$ decreases, $TC$ drops rather slightly with much higher executing time.
For example, the best $TC$ is achieved by $\gamma=0$, but the speedup is <9\%, while the executing time of post-processing is >10 times longer.
Besides, as the subgraph-local search needs several iterations, only a small percentage of partitions should be destroyed in each iteration. 
Thus, the best value is set to 0.9 for $\gamma$.

\begin{table*}[htbp]
	\small
	\caption{Tuning of $\gamma$ in WindGP}
	\label{tab:tune-gamma}
	\vspace{-0.1in}
	\begin{threeparttable}
		\small
		\centering
			\begin{tabular}{ |c|c|c|c|c|c|c|c|c|c|c|c|} 
				\hline
				$TC$ &  0 & 0.1 & 0.2 & 0.3 & 0.4 & 0.5 & 0.6 & 0.7 & 0.8 & 0.9 & 1 \\
				\hline
				TW  & 58M & 58M & 58M & 58M & 59M & 59M & 60M & 60M & 60M & 60M & 68M \\ 
				\hline
				CO  & 30M & 30M & 30M & 30M & 31M & 31M & 31M & 31M & 31M & 31M & 33M \\ 
				\hline	
				LJ  & 22M & 22M & 22M & 22M & 22M & 23M & 23M & 23M & 23M & 23M & 25M \\ 
				\hline
				PO  & 20M & 20M & 20M & 20M & 21M & 21M & 21M & 21M & 21M & 21M & 23M \\ 
				\hline
				CP  & 10M & 10M & 11M & 11M & 11M & 11M & 11M & 11M & 11M & 11M & 12M \\ 
				\hline
				RN  & 15M & 15M & 15M & 15M & 15M & 15M & 15M & 15M & 15M & 15M & 16M \\ 
				\hline			
			\end{tabular}
	\end{threeparttable}
\end{table*}

To select a appropriate proportion of edges in each destroyed machine, we vary $\gamma$ from 0.002 to 0.02 and evaluate $TC$.
Obviously, too small $\gamma$ limits the optimization space of \emph{destroy-and-repair}, while $\gamma$ larger than 0.1 does not bring extra gain.
In conclusion, $\theta$ is set to 1\% by default.

\begin{table*}[htbp]
	\small
	\caption{Tuning of $\theta$ in WindGP}
	\label{tab:tune-theta}
	\vspace{-0.1in}
	\begin{threeparttable}
		\small
		\centering
			\begin{tabular}{ |c|c|c|c|c|c|c|c|c|c|c|} 
				\hline
				$TC$ &  0.002 & 0.004 & 0.006 & 0.008 & 0.01 & 0.012 & 0.014 & 0.016 & 0.018 & 0.02 \\
				\hline
				TW  & 67M & 65M & 64M & 63M & 60M & 60M & 60M & 60M & 60M & 60M \\ 
				\hline
				CO  & 40M & 38M & 35M & 33M & 31M & 31M & 31M & 31M & 31M & 31M \\ 
				\hline	
				LJ  & 25M & 25M & 24M & 24M & 23M & 23M & 23M & 23M & 23M & 23M \\ 
				\hline
				PO  & 26M & 24M & 23M & 23M & 21M & 21M & 21M & 21M & 21M & 21M \\ 
				\hline
				CP  & 12M & 12M & 12M & 11M & 11M & 11M & 11M & 11M & 11M & 11M \\ 
				\hline
				RN  & 16M & 15M & 15M & 15M & 15M & 15M & 15M & 15M & 15M & 15M \\ 
				\hline			
			\end{tabular}
	\end{threeparttable}
\end{table*}

As for the number of local and global try times $N_0$ and $T_0$, a thorough study is conducted that varies them from 1 to 9 for each dataset.
To sum up, it is enough to set $N_0=5$, and the best value of $T_0$ varies on different graphs.

\begin{table}[htbp]
	\small
	\caption{Tuning of $N_0$ in WindGP}
	\label{tab:tune-N0}
	\vspace{-0.1in}
	\begin{threeparttable}
		\small
		\centering
			\begin{tabular}{ |c|c|c|c|c|c|c|c|c|c|} 
				\hline
				$TC$ &  1 & 2 & 3 & 4 & 5 & 6 & 7 & 8 & 9 \\
				\hline
				TW  & 68M & 64M & 62M & 61M & 60M & 60M & 60M & 60M & 60M \\ 
				\hline
				CO  & 35M & 34M & 32M & 32M & 31M & 31M & 31M & 31M & 31M \\ 
				\hline	
				LJ  & 25M & 24M & 23M & 23M & 23M & 23M & 23M & 23M & 23M \\ 
				\hline
				PO  & 25M & 24M & 22M & 22M & 21M & 21M & 21M & 21M & 21M \\ 
				\hline
				CP  & 12M & 12M & 11M & 11M & 11M & 11M & 11M & 11M & 11M \\ 
				\hline
				RN  & 16M & 15M & 15M & 15M & 15M & 15M & 15M & 15M & 15M \\ 
				\hline			
			\end{tabular}
	\end{threeparttable}
\end{table}

\begin{table}[htbp]
	\small
	\caption{Tuning of $T_0$ in WindGP}
	\label{tab:tune-T0}
	\vspace{-0.1in}
	\begin{threeparttable}
		\small
		\centering
			\begin{tabular}{ |c|c|c|c|c|c|c|c|c|c|} 
				\hline
				$TC$ &  1 & 2 & 3 & 4 & 5 & 6 & 7 & 8 & 9 \\
				\hline
				TW  & 70M & 66M & 64M & 63M & 61M & 61M & 60M & 60M & 60M \\ 
				\hline
				CO  & 35M & 33M & 32M & 32M & 32M & 31M & 31M & 31M & 31M \\ 
				\hline	
				LJ  & 26M & 24M & 24M & 23M & 23M & 23M & 23M & 23M & 23M \\ 
				\hline
				PO  & 25M & 23M & 22M & 22M & 21M & 21M & 21M & 21M & 21M \\ 
				\hline
				CP  & 12M & 11M & 11M & 11M & 11M & 11M & 11M & 11M & 11M \\ 
				\hline
				RN  & 17M & 15M & 15M & 15M & 15M & 15M & 15M & 15M & 15M \\ 
				\hline		
			\end{tabular}
	\end{threeparttable}
\end{table}

\subsection{Evaluation by the metric $TC$}\label{sec:evaluate}

\Paragraph{Analysis of optimization techniques}.
Recall that WindGP propose three novel techniques: preprocessing capacity, best-first search, post-processing.
To evaluate the efficiency of each technique in WindGP, we conduct the experiments on six real graphs and depict the results in Figure \ref{fig:tc-windgp}.
Let WindGP be the entire solution, WindGP$^+$ denotes the solution that removes post-processing from WindGP.
Similarly, WindGP$^*$ denotes the solution that removes best-first search from WindGP$^+$, while WindGP$^-$ is the naive solution without all optimization techniques, which iteratively explores the graph to form a partition as long as the machine memory and load balance ratio permit.
Its expansion scheme is similar to NE \cite{NE}, which tends to form a subgraph with high cohesion by connectivity-based expansion.
In order to view the details of all partitions precisely, the histograms of the costs of different partitions on CP and LJ are also listed in Figure \ref{fig:hist-CP}, Figure \ref{fig:hist-LJ} and Figure \ref{fig:hist-CO}, respectively.

\begin{figure}[htbp]
	\centering
	\includegraphics[width=8cm]  {\picfolder 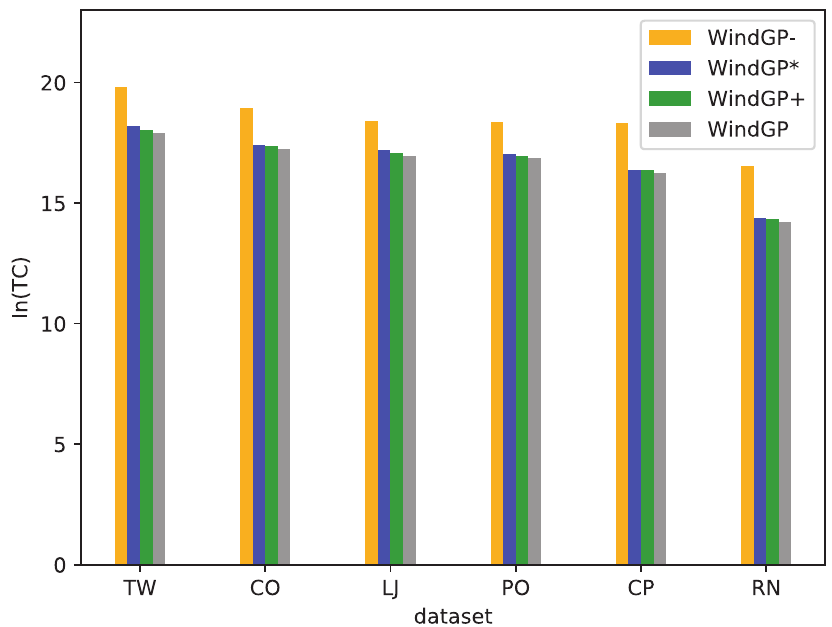}        	
    \vspace{-0.1in}     	
	\caption{Comparison of different techniques in WindGP ($\ln TC$ is the logarithm of $TC$)}      	
	\label{fig:tc-windgp}  
\end{figure}

Obviously, $TC$ of WindGP$^*$ is much lower than WindGP$^-$, and the speedup is $5\times$ on TW, $4.5\times$ on CO, $3.3\times$ on LJ, $3.8\times$ on PO, $7.5\times$ on CP, and $8.8\times$ on RN, respectively.
This prominent improvement comes from the pre-computed capacity of edges in each machine.
Different from the naive solution that only considers the upper bound of machine memory  and homogeneous load balance ratio, our strategy combines the computation cost and the memory size of all machines to generate a nearly best plan that can balance the computation cost across these heterogeneous machines (as shown in Figure \ref{fig:hist2-CP} and Figure \ref{fig:hist2-LJ}).
Note that traditional partition methods use $\frac{|E|}{p}$ as the capacity of edges, which can not be used in heterogeneous scenarios because the memory of some machines may be not enough while other machines may have memory size larger than $\frac{|E|}{p}$.
Besides, the traditional threshold does not distinguish the difference of node/edge computation cost between heterogeneous machines.
Recall that in Section \ref{sec:preprocess} we deduce the error bound $\frac{p^2}{|E|}$, and the experiment on small graphs with hundreds of edges shows that the difference between our solution and the optimal is always within 5\%, which can be further refined by the post-processing.
According to the theoretical bound, on larger graphs the error should be much smaller.

The second technique (i.e., the best-first search) further boosts the performance by optimizing the expansion process, leading to $>1.1\times$ on TW, CO and LJ.
The performance gain is mainly acquired by lowering the total communication cost (as shown in Figure \ref{fig:hist3-CP} and Figure \ref{fig:hist3-LJ}).
The degree balanced generation as well as border generation can help reduce the number of cross-machine vertices and improve the subgraph cohesion.
However, on low-degree graphs (CP and RN), the speedup is rather limited (only $\sim1.04\times$) due to the small percentage of communication cost.
On CP, the percentage of communication cost is $<30\%$ on sparse graphs (i.e., graphs with low average degree $\frac{|E|}{|V|}$), while it is $>50\%$ on LJ and CO.
Especially, the effect is tiny on RN because it is mesh-like graph and its structure is naturally balanced.

Finally, the post-processing technique (subgraph-local search) brings $>1.15\times$ speedup by flattening the total cost between all machines.
Note that the previous two techniques balance the computation cost and the communication cost respectively, however, they do not balance the total cost well.
In fact, the effect of best-first search is sometimes restricted because it can not disturb the balance of computation cost.
Besides, during expansion the information is not complete (e.g., the first partition has no communication cost), which further drags down the performance.
Subgraph-local search breaks this restriction by allowing the computation cost to be imbalanced between heterogeneous machines, as long as the total cost drops.
Based on the global view with complete information, it moves or swaps edges between machines to lower the highest total cost.

\begin{figure*}[htbp]
    \centering  
    \subfigure[{\tiny WindGP$^-$}]
    {
        \label{fig:hist1-CP}
        \begin{minipage}[c]{4cm}        
            \centering      
            \includegraphics[width=4cm]{\picfolder 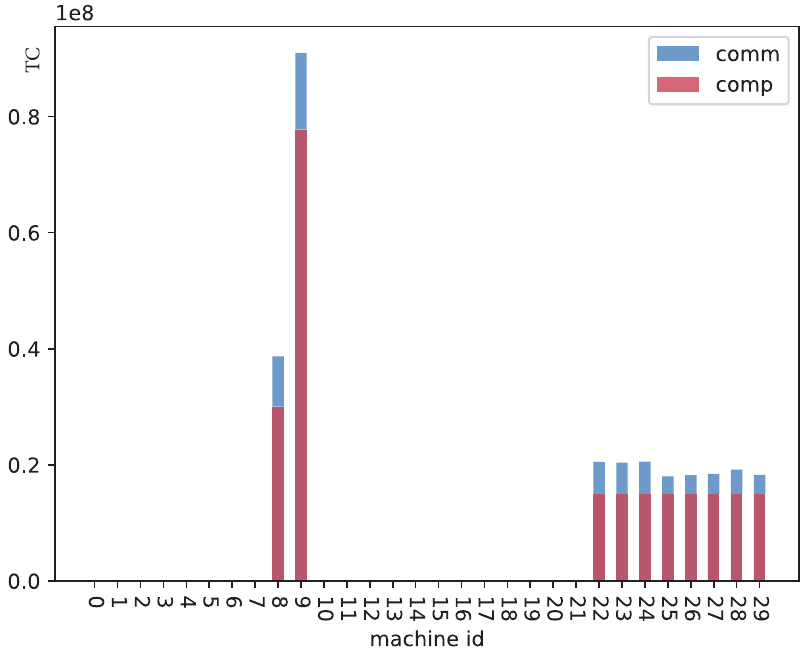}     
        \end{minipage}  
    }
    \subfigure[{\tiny WindGP$^*$}]
    {
        \label{fig:hist2-CP}
    \begin{minipage}[c]{4cm}        
        \centering      
        \includegraphics[width=4cm]{\picfolder 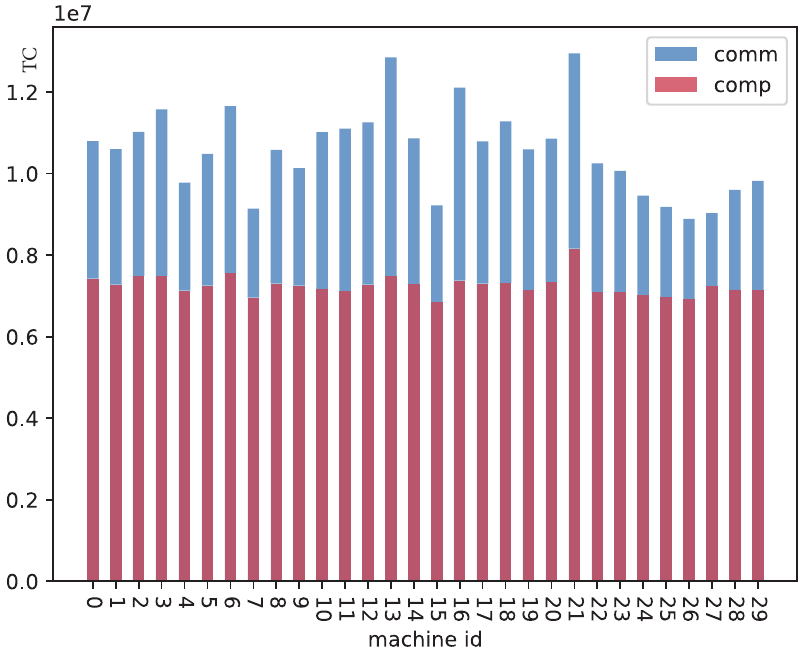}       
    \end{minipage}  
    }
    \subfigure[{\tiny WindGP$^+$}]
    {
        \label{fig:hist3-CP}
    \begin{minipage}[c]{4cm}        
        \centering      
        \includegraphics[width=4cm]{\picfolder 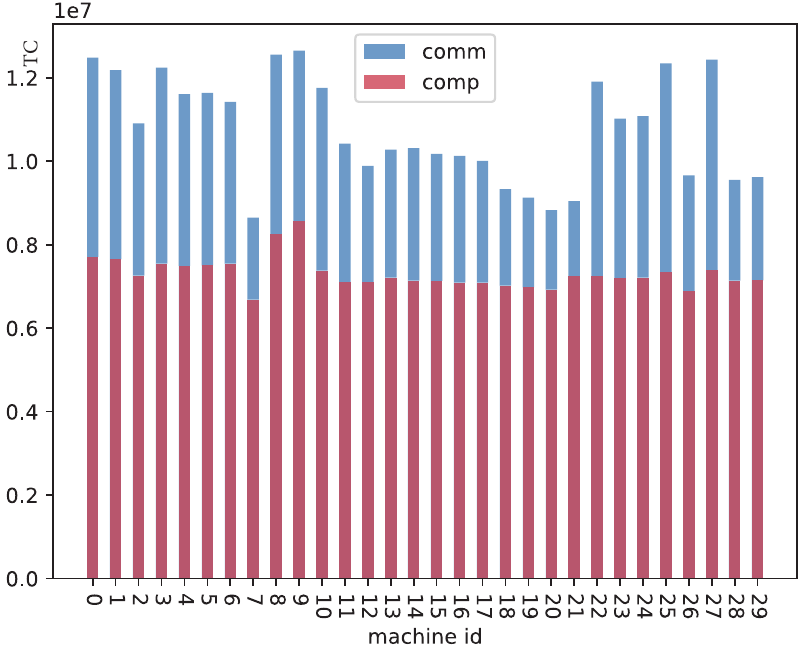}       
    \end{minipage}  
    }
    \subfigure[{\tiny WindGP}]
    {
        \label{fig:hist4-CP}
    \begin{minipage}[c]{4cm}        
        \centering      
        \includegraphics[width=4cm]{\picfolder 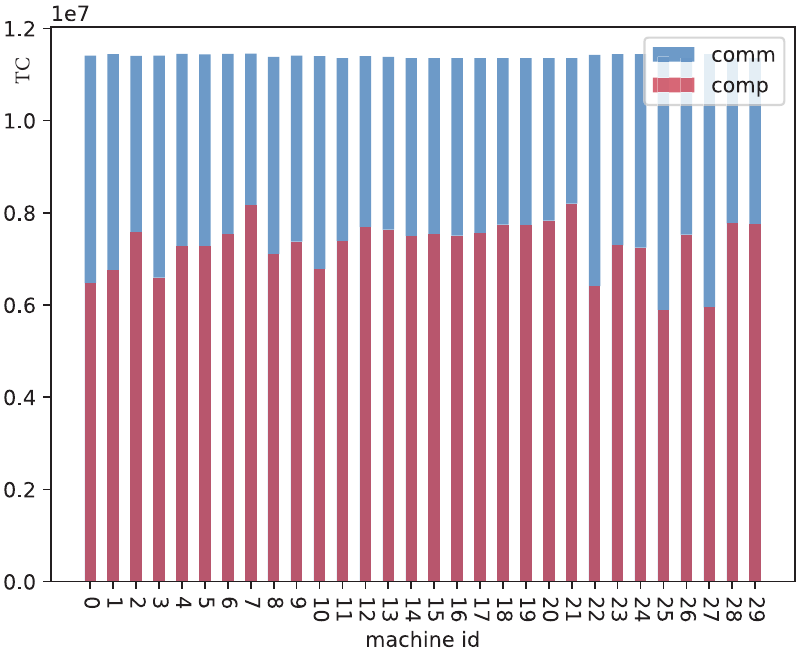}       
    \end{minipage}  
    }
    \vspace{-0.1in}
    \caption{The histogram of partitions of WindGP on CP}
	\label{fig:hist-CP}  
\end{figure*}

\begin{figure*}[htbp]
    \centering  
    \subfigure[{\tiny WindGP$^-$}]
    {
        \label{fig:hist1-LJ}
        \begin{minipage}[c]{4cm}        
            \centering      
            \includegraphics[width=4cm]{\picfolder 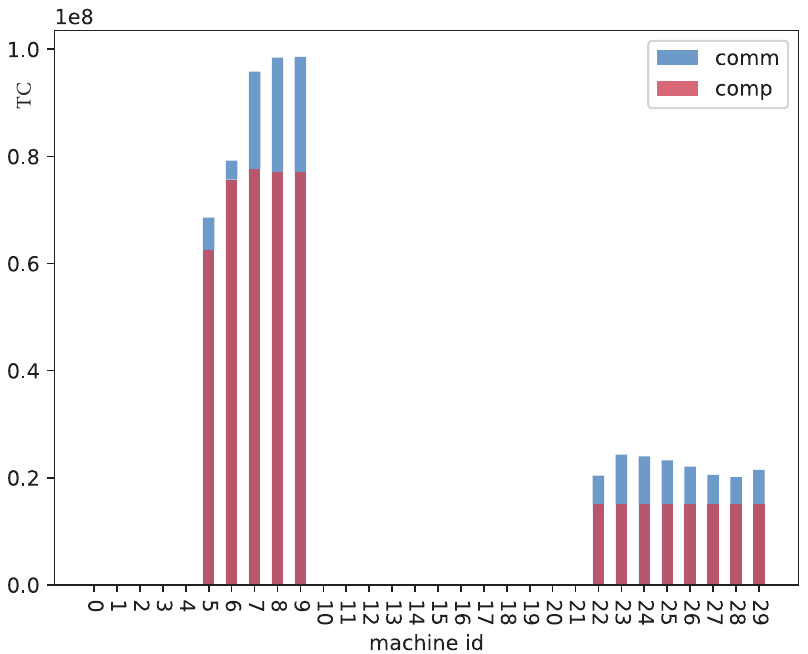}     
        \end{minipage}  
    }
    \subfigure[{\tiny WindGP$^*$}]
    {
        \label{fig:hist2-LJ}
    \begin{minipage}[c]{4cm}        
        \centering      
        \includegraphics[width=4cm]{\picfolder 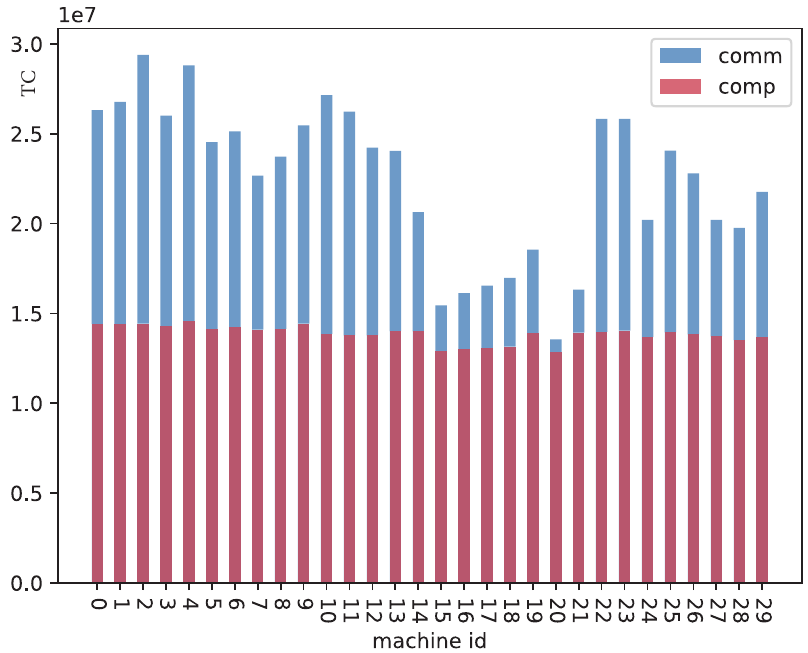}       
    \end{minipage}  
    }
    \subfigure[{\tiny WindGP$^+$}]
    {
        \label{fig:hist3-LJ}
    \begin{minipage}[c]{4cm}        
        \centering      
        \includegraphics[width=4cm]{\picfolder 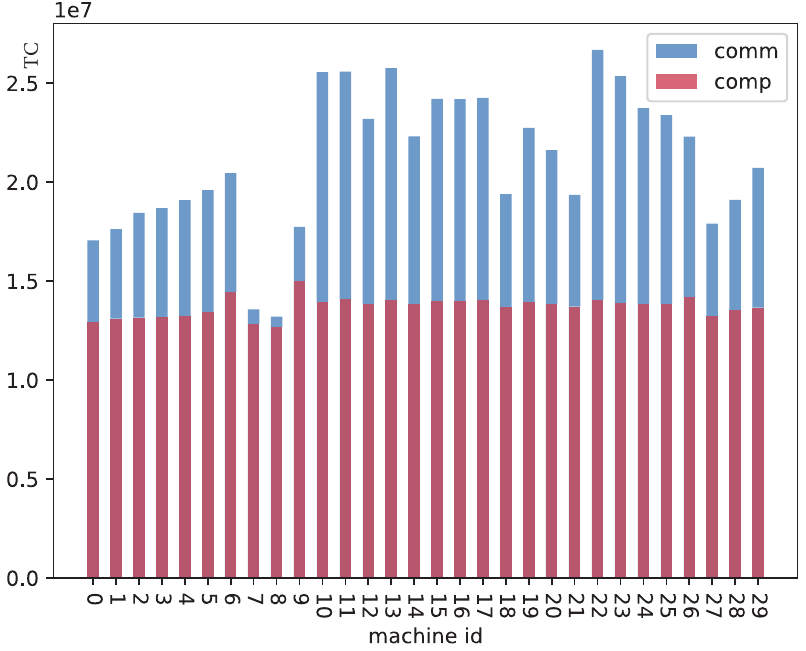}       
    \end{minipage}  
    }
    \subfigure[{\tiny WindGP}]
    {
        \label{fig:hist4-LJ}
    \begin{minipage}[c]{4cm}        
        \centering      
        \includegraphics[width=4cm]{\picfolder 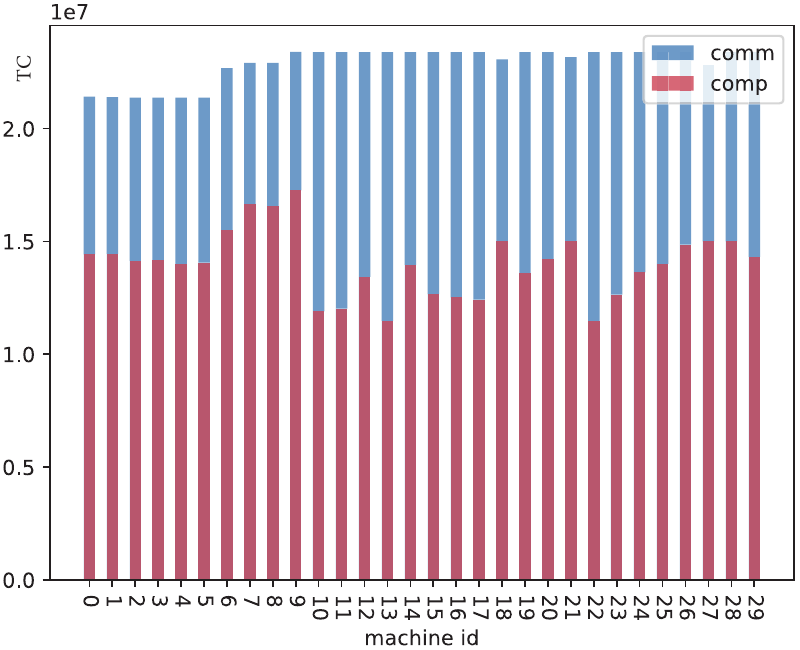}       
    \end{minipage}  
    }
    \vspace{-0.2in}
    \caption{The histogram of partitions of WindGP on LJ}
	\label{fig:hist-LJ}  
\end{figure*}

\begin{figure*}[htbp]
    \centering  
    \subfigure[{\tiny WindGP$^-$}]
    {
        \label{fig:hist1-CO}
        \begin{minipage}[c]{4cm}        
            \centering      
            \includegraphics[width=4cm]{\picfolder 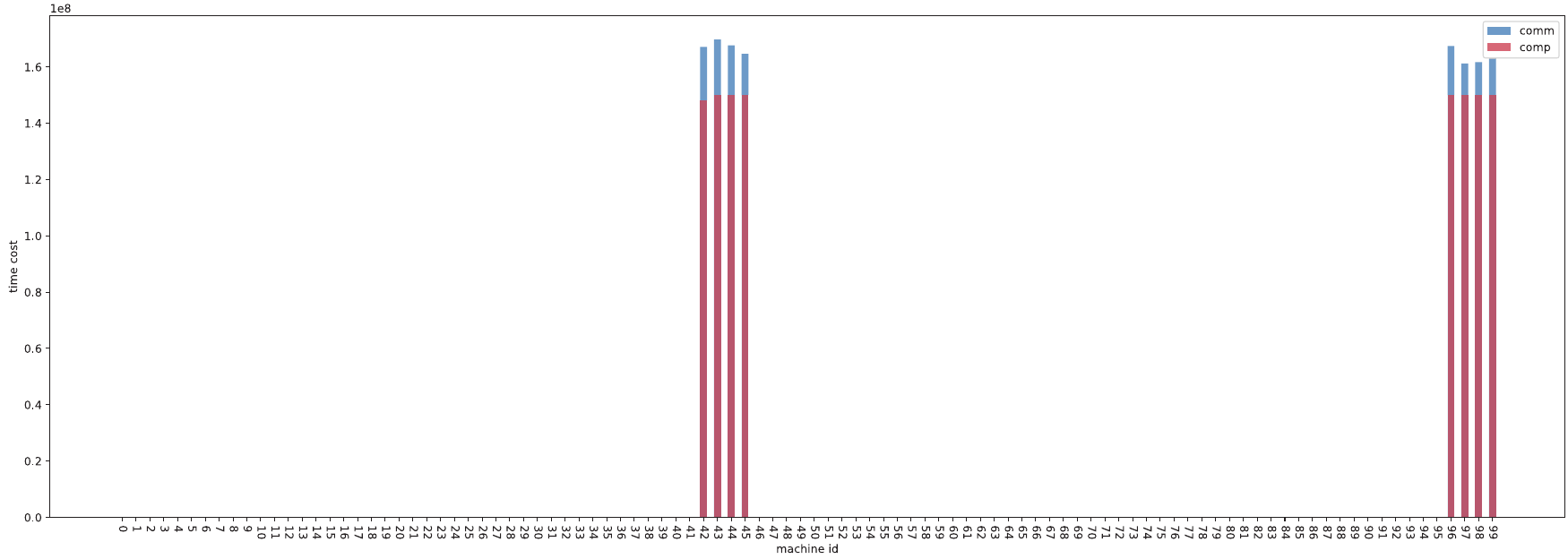}     
        \end{minipage}  
    }
    \subfigure[{\tiny WindGP$^*$}]
    {
        \label{fig:hist2-CO}
    \begin{minipage}[c]{4cm}        
        \centering      
        \includegraphics[width=4cm]{\picfolder 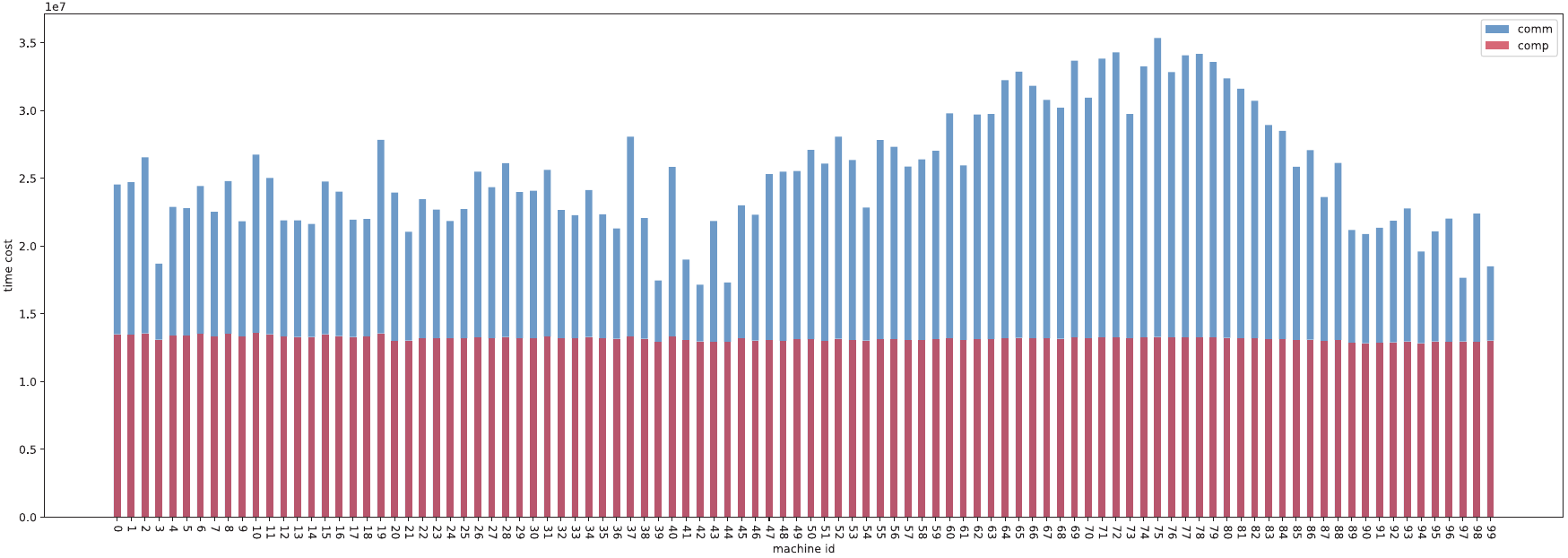}       
    \end{minipage}  
    }
    \subfigure[{\tiny WindGP$^+$}]
    {
        \label{fig:hist3-CO}
    \begin{minipage}[c]{4cm}        
        \centering      
        \includegraphics[width=4cm]{\picfolder 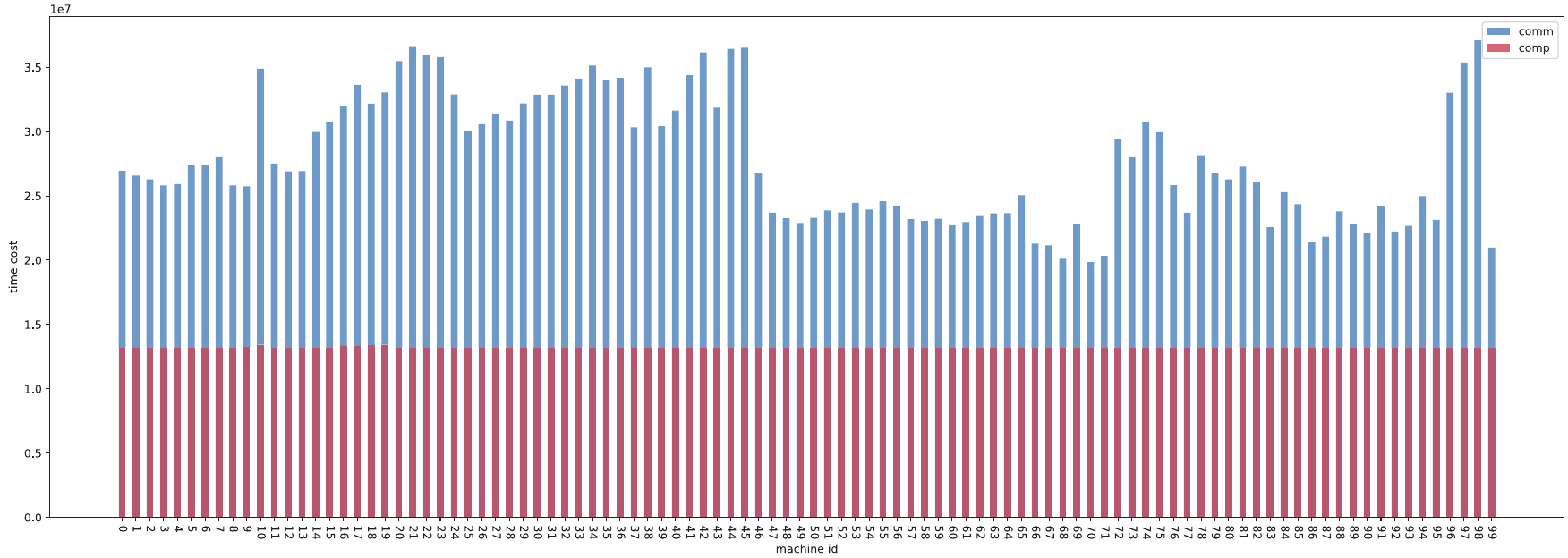}       
    \end{minipage}  
    }
    \subfigure[{\tiny WindGP}]
    {
        \label{fig:hist4-CO}
    \begin{minipage}[c]{4cm}        
        \centering      
        \includegraphics[width=4cm]{\picfolder 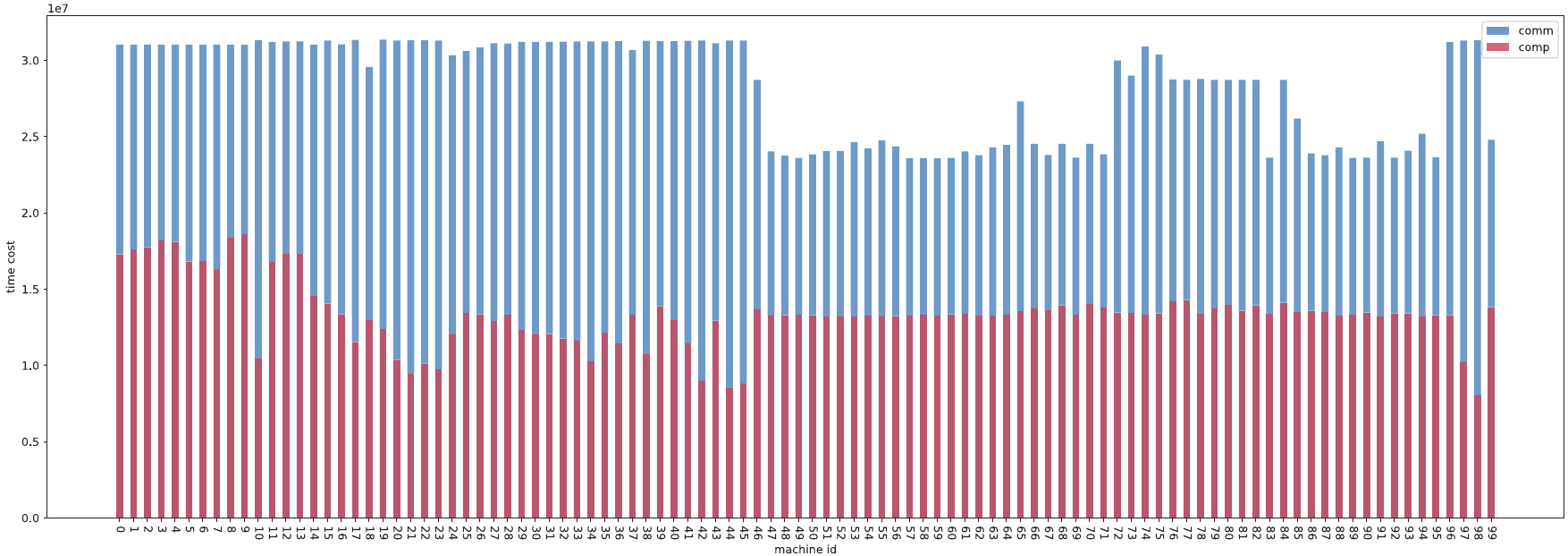}       
    \end{minipage}  
    }
    \vspace{-0.1in}
    \caption{The histogram of partitions of WindGP on CO}
	\label{fig:hist-CO}  
\end{figure*}

\begin{figure}[htbp]
	\centering
	\includegraphics[width=8cm]  {\picfolder 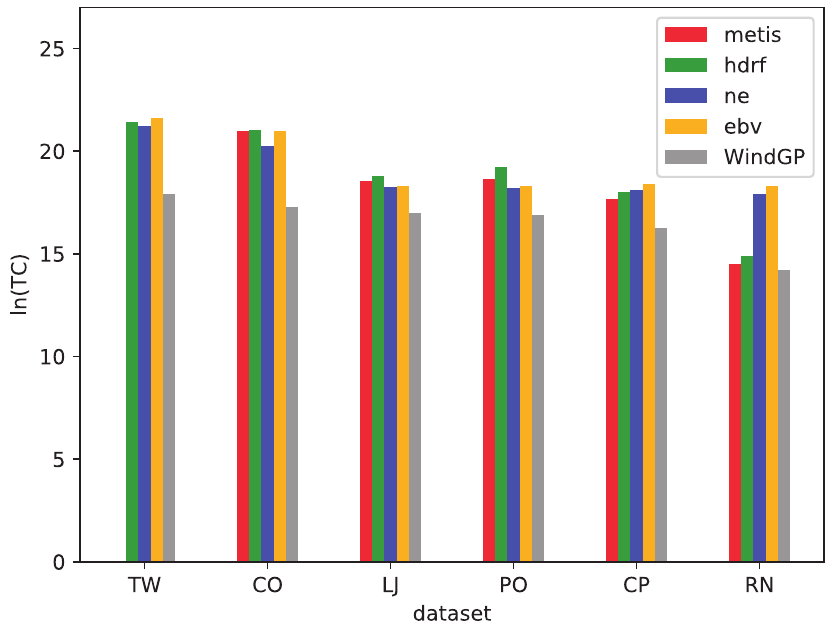}        	
    \vspace{-0.1in}     	
	\caption{Comparison of different partition algorithms (($\ln TC$ is the logarithm of $TC$))}      	
	\label{fig:tc-all}  
\end{figure}

\Paragraph{Comparison with counterparts}.
Figure \ref{fig:tc-all} lists the comparison of WindGP (the entire solution) with the state-of-the-art algorithms on heterogeneous machines.
As reported in previous work \cite{NE}, METIS can not partition too large graphs like TW due to its prohibitive memory occupation.
Among four counterparts, NE performs the best on most datasets except for CP and RN.
The two low-degree graphs have relatively larger computation cost and lower communication cost, thus the expansion scheme of NE can not optimize a lot.
In contrast, METIS and HDRF perform better on CP and RN due to their elaborated consideration of computation cost in the partition objective.
As for EBV solution, though it is carefully designed to reduce the impact of skew on power-law graphs, its objective is not equivalent to the maximum total cost and it can not be well adapted to heterogeneous cases.

WindGP outperforms all counterparts by $>3.7\times$ speedup on all power-law graphs.
Especially, on enormous social network like TW and CO, the improvement is $1\sim2$ orders of magnitudes.
This fantastic effect is produced by the perfect combination of three novel techniques in WindGP, which not only address the difference between heterogeneous machines, but also leverage the characteristics of graphs. 
The improvement is more prominent when the graph is larger and more skewed, because higher computation and communication cost implies larger optimization space.
Though the performance of WindGP is limited on mesh-like graphs like RN, it still brings $1.35\times$ speedup.
To sum up, WindGP shows $1.35\times\sim27\times$ speedup compared with the state-of-the-art partition algorithms.

Due to the lightweight preprocessing and limited iterations in post-precessing, the complexity of WindGP is linear to the graph size and machine number, which is similar to NE.
As for the executing time of partition methods, Table \ref{tab:part-time} shows that their executing time is in the same scale, while WindGP is 11\% slower than NE, which is acceptable in real cases as partitioning is not time sensitive.
Besides, extra experiments (Table \ref{tab:homo-perf}) with the best two counterparts (HDRF and NE) show that the performance of WindGP on homogeneous clusters is not worser than others, as Section \ref{sec:problem} proves that the $TC$ metric is equivalent to load balance and $RF$ in homogeneous scenarios.
Furthermore, the running time of distributed graph algorithms on the partition results is also evaluated (see Section \ref{sec:distributed}), which corresponds to the comparative results of $TC$.

\begin{table}[htbp]
	\small
	\caption{Evaluation of PageRank on homogeneous 30-machine cluster on LJ}
	\label{tab:homo-perf}
	\vspace{-0.1in}
	\begin{threeparttable}
		\small
		\centering
			\begin{tabular}{ |c|c|c|c|c|} 
				\hline
				Alg. &  $\alpha^{\prime}$ & $RF$ & $TC$ & time (s) \\
				\hline
				HDRF  & 1.1 & 3.33 & 105M & 92 \\ 
				\hline
				NE  & 1.1  & 1.55 & 19M & 18 \\ 
				\hline
				WindGP  & 1.0  & 1.56  & 20M & 18  \\ 
				\hline
			\end{tabular}
	\end{threeparttable}
\end{table}

\begin{table}[htbp]
	\small
	\caption{Evaluation of partitioning time (s) on traditional methods}
	\label{tab:part-time}
	\vspace{-0.1in}
	\begin{threeparttable}
		\small
		\centering
			\begin{tabular}{ |c|c|c|c|c|c| } 
				\hline
				Dataset &  METIS & HDRF & NE & EBV & WindGP \\
				\hline
				CO  & 200 & 15 & 80 & 91 & 89 \\ 
				\hline
				LJ  & 71  & 5 & 23 & 30 & 25 \\ 
				\hline
				PO  & 68  & 4 & 21 & 24 & 23 \\
				\hline
				CP  & 32  & 3 & 12 & 14 & 13 \\
				\hline
				RN  & 6  & 1 & 2 & 2.3 & 2.1 \\
				\hline
			\end{tabular}
	\end{threeparttable}
\end{table}

\subsection{Scalability Test}\label{sec:scalability}

\Paragraph{Scalability with the graph size}.
To evaluate the scalability of graph size, we use the same configuration of machines as on Twitter.
R-MAT generator \cite{R-MAT} is used to generate eight scale-free graphs, with the number of edges range from $\sim4\times10^6$ to $\sim5\times10^8$ (nearly double each time).
The parameters  of R-MAT follow the setting of Graph 500 \cite{graph500}.
The ratio of the graph’s edge count to its vertex count (i.e., half the average degree of a vertex in the graph) is 16.
The ``scale'' factor  is the logarithm base two of the number of vertices  and it starts from 18 and increases by 1 each time.
Details of these synthetic graphs are listed in Table \ref{tab:graph500}.

\begin{table}[htbp]
    \small
    \caption{Statistics of Graph 500 Datasets}
    \label{tab:graph500}
    \vspace{-0.1in}
    \begin{threeparttable}
        \small
        \centering
    \setlength{\tabcolsep}{3mm}{
        \begin{tabular}{crrrr}
            \toprule
            Name & $|V|$ & $|E|$ & MD\tnote{1} & Type\tnote{2} \\
            \midrule
	S18 & 262,144 & 3,800,348 & 25,707 & s \\ 
	S19 & 524,288 & 7,729,675 & 41,358 & s \\ 
	S20 & 1,048,576 & 15,680,861 & 67,086 & s \\ 
	S21 & 2,097,152 & 31,731,650 & 107,400 & s \\ 
	S22 & 4,194,304 & 64,097,004 & 170,546 & s \\ 
	S23 & 8,388,608 & 129,250,705 & 272,176 & s \\ 
	S24 & 16,777,216 & 260,261,843 & 431,690 & s \\ 
	S25 & 33,554,432 & 523,467,448 & 684,732 & s \\ 
            \bottomrule
        \end{tabular}
    }
        \begin{tablenotes}
            \item[1] Maximum degree of the graph.
            \item[2] Graph type: r:real-world, s:scale-free, and m:mesh-like.
        \end{tablenotes}
    \end{threeparttable}
\end{table}

\begin{figure}[htbp]
	\centering
	\includegraphics[width=8cm]  {\picfolder 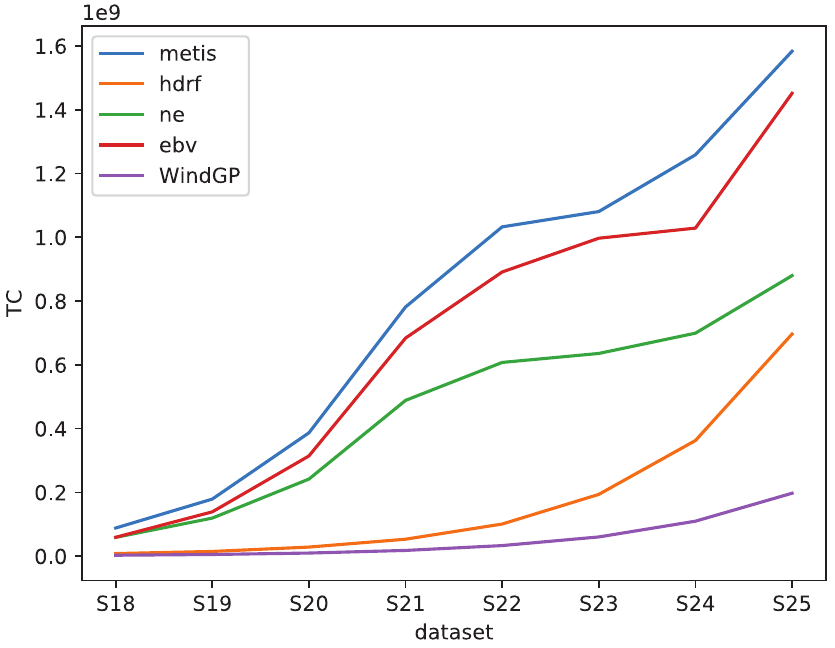}        	
    \vspace{-0.1in}     	
	\caption{The scalability with Graph 500 datasets}      	
	\label{fig:scala-g500}  
\end{figure}

The experimental result is shown in Figure \ref{fig:scala-g500}.
Obviously, the performance of WindGP is still the best when transferring to Graph 500 datasets.
WindGP not only has the minimum $TC$ on all graphs (S18 $\sim$ S25), but also displays the slowest growth as the graph size increases.
Generally, the curve of WindGP has a $\leq 1.8$ slope.
In contrast, the curves of other algorithms all show $>2$ slope.
Though the average degree remains the same, larger generated graphs are more skewed and has larger communication cost when partitioned.
Our best-first search as well as the post-processing play a significant role when reducing the communication cost.
As a result, WindGP performs well even on graphs with billions of edges.

\Paragraph{Scalability with the machine number}.
We select LJ dataset and vary the machine number from 30 to 90 (increasing by 15 for each).
The ratio of super machines is all set to $\frac{1}{3}$.
Figure \ref{fig:scala-machine-num} shows the result.
Generally, $TC$ drops as the machine number increases, because more resources can be utilized to balance the cost.
However, the drop tends to be tiny when the machine number is larger than 30 (called \emph{saturation point}), which is the default setting for LJ.
Inherently, the saturation point is decided by the number of communities in the dataset.
Though more machines can help reduce the computation cost, the communication cost also rises if high-cohesion subgraph is separated into different machines.
For the same reason, NE and EBV do not fully utilize these machines, as they tend to place as many edges as possible in several super machines to minimize the communication cost.
In contrast, WindGP performs well on all settings.
If the machine number is lower than saturation point, WindGP utilizes more machines to store major subgraphs; otherwise, it treats added machines as temporary buffers for post-processing.

\begin{figure}[htbp]
	\centering
	\includegraphics[width=8cm]  {\picfolder 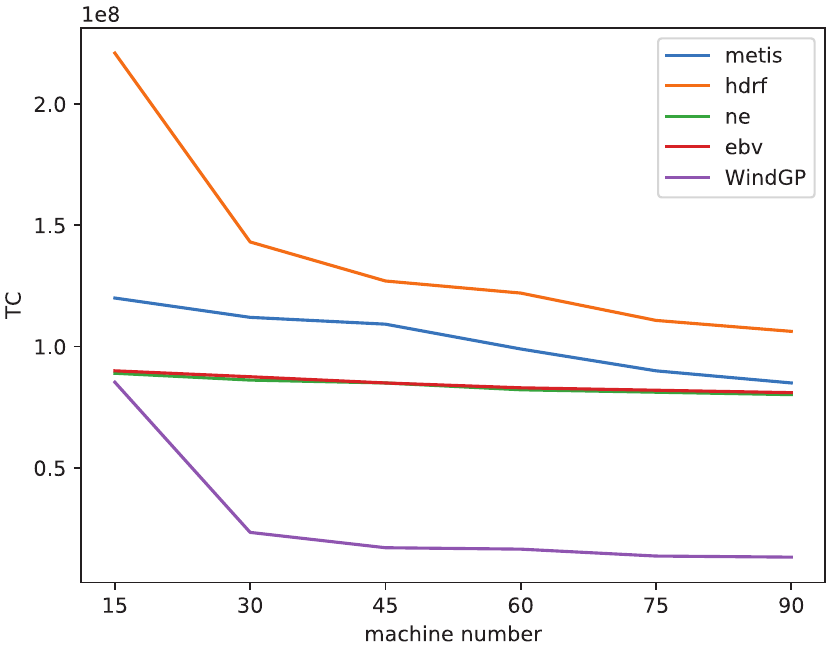}        	
    \vspace{-0.1in}     	
	\caption{The scalability with the machine number on LJ}      	
	\label{fig:scala-machine-num}  
\end{figure}

\Paragraph{Scalability with the number of machine types}.
In previous experiments, we use two kinds of machines by default.
Here we vary the number of machine types from one to six and test the performance on LJ with 30 machines.
The added type is extracted from normal machines, increase the machine memory, computation and communication cost.
Note that the 2-type setting here is different from the default setting.
Compared to the single-type setting (i.e., homogeneous cases where all machines are normal machine), 2-type setting transforms 5 machines to slightly bigger machine with larger computation and communication cost as well as larger memory size.

Generally, $TC$ increases as the number of machine types grows, i.e., the homogeneous cases achieve the minimum $TC$ for all solutions.
Note that NE performs extremely good on homogeneous cases, even better than WindGP.
However, both NE and EBV fail to adapt to heterogeneous cases and their $TC$ rises sharply when the number grows.
Comparatively, WindGP achieves the slowest growth due to its flexible preprocessing technique, which can pre-computes appropriate edge capacities for all machine settings.
Thus, WindGP is not sensitive to the number of machine types.

\begin{figure}[htbp]
	\centering
	\includegraphics[width=8cm]  {\picfolder 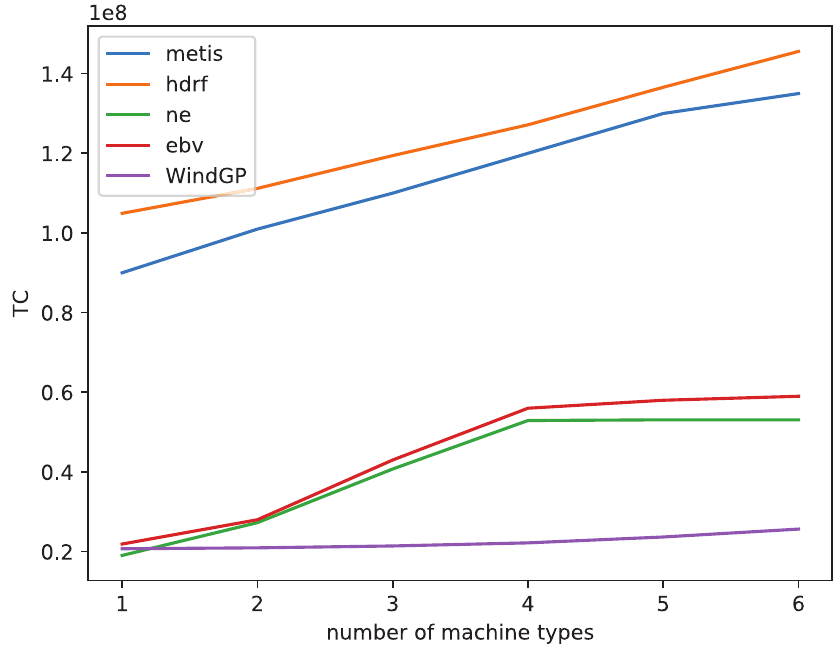}        	
    \vspace{-0.1in}     	
	\caption{The scalability with the number of machine types on LJ}      	
	\label{fig:scala-machine-num}  
\end{figure}

\begin{table*}[htbp]
	\small
	\caption{Distributed running time of heterogeneous algorithms (unit:s)}
	\label{tab:hetero-perf}
	\vspace{-0.15in}
	\begin{threeparttable}
		\centering
		\begin{tabular}{|c|r|r|r|r|r|r|r|r|r|r|r|r|}
			\hline
			\multirow{2}*{DataSet} & \multicolumn{6}{|c|}{PageRank} & \multicolumn{6}{|c|}{SSSP} \\ 
			\cline{2-13}
			~ &  \cite{HeterCompPart} & GrapH & HaSGP & HAEP &  WindGP & speedup & \cite{HeterCompPart} & GrapH  & HaSGP & HAEP& WindGP & speedup   \\ 
			\hline
			TW & 1287 & 681 & 1,015 & 529 & 353 & 1.49$\times$ &  667 & 309 & 618 & 254 & 182 & 1.39$\times$   \\
			\hline
			DB & 1168 & 609 & 970 & 501 & 280 & 1.78$\times$ & 621 & 305& 597 & 226 & 158 & 1.43$\times$ \\
			\hline
			FR & 2,017 & 2,259 & 2,380  & 1,307 & 681 & 1.92$\times$ & 1,132 & 1,218 & 1,846 & 580 & 312 & 1.86$\times$ \\
			\hline
			YH & 3,204 & 3,980 & 3,012 & 2,201 & 1,048 & 2.1$\times$ & 1.910 & 2,203 & 2,018 & 1,038 & 509 & 2.04$\times$ \\
			\hline
		\end{tabular}
	\end{threeparttable}
\end{table*}

\subsection{Evaluation of Distributed Graph Algorithms}\label{sec:distributed}

To evaluate the real performance of distributed graph computing, we choose PageRank \cite{PageRank} and Single-Source Shortest Path (\cite{SSSP}) and perform the test on Plato system \cite{site:plato}.
These two algorithms are the most representative dense and sparse algorithms, respectively.
In PageRank, all edges are computed and all nodes are updated in each iteration.
In contrast, the number of active nodes as well as edges grows first until the largest point, then reduces until the end.
Modern machines have large memory and computing power, thus graphs with billions of edges are added in this section.
Except for TW, referring to \cite{HaSGP} and \cite{SGSI}, we add DB (0.23B nodes and 1.1B edges), FR (65M nodes, 1.8B edges), and YH (0.41B nodes and 2.8B edges).
Similar to \cite{NE}, we use nine machines connected with Gigabit Ethernet, and $\frac{1}{3}$ of them are super machines. 
The resource of the real cluster is listed below:
\begin{itemize}
	\item 3 super machines: 1.6GHz (4), 6GB memory, 100Gbps 
	\item 6 normal machines: 1.6GHz (8), 2GB memory, 150Gbps 
\end{itemize}
The memory capacity, computation cost and communication cost of each machine ($M_i$, $C_i^{node}$, $C_i^{edge}$ and $C_i^{com}$) is estimated according to the quantification process in Section \ref{sec:problem}.
Note that the machine configuration is different from previous settings, thus $TC$ is also evaluated.
Table \ref{tab:tc-distributed}, \ref{tab:real-perf}, \ref{tab:large-real-perf} and \ref{tab:small-hetero-perf} list the comparison of $TC$ and distributed running time (seconds).

\begin{table}[htbp]
	\small
	\caption{The $TC$ metric on nine machines}
	\label{tab:tc-distributed}
	\vspace{-0.1in}
	\begin{threeparttable}
		\centering
			\begin{tabular}{ |c|c|c|c| } 
				\hline
				Dataset &  HDRF & NE & WindGP \\
				\hline
				TW  & 2,790,667,265 & 5,649,273,080 & 401,478,360 \\ 
				\hline
				CO   & 938,849,480 & 565,506,575 & 189,439,055 \\ 
				\hline
				LJ   & 182,987,265 & 205,252,940 & 57,918,640 \\ 
				\hline
				PO   & 278,052,160 & 196,995,825 & 50,954,100 \\ 
				\hline
				CP   & 106,732,065 & 195,154,235 & 31,504,615 \\ 
				\hline
				RN  & 9,632,230 & 61,151,165 & 4,950,860 \\ 
				\hline
			\end{tabular}
	\end{threeparttable}
\end{table}

\begin{table}[htbp]
	\small
	\caption{Performance of distributed graph computing (unit:s)}
	\label{tab:real-perf}
	\vspace{-0.1in}
	\begin{threeparttable}
		\centering
		\begin{tabular}{|c|r|r|r|r|r|r|}
			\hline
			\multirow{2}*{Data} & \multicolumn{3}{|c|}{PageRank} & \multicolumn{3}{|c|}{TriangleCount} \\ 
			\cline{2-7}
			~ &  HDRF & NE &  WindGP & HDRF & NE & WindGP   \\ 
			\hline
			TW & 2,380 & 4,417 & 353 &  1,046 & 1,520 & 182   \\
			\hline
			CO & 791 & 398 & 125 & 325 & 147 & 97  \\
			\hline
			LJ & 112 & 201 & 54  & 33 & 33 & 25  \\
			\hline
			PO & 130 & 187 & 41 & 29 & 22 & 18  \\
			\hline
			CP & 64 & 166 & 27 & 11 & 12 & 11  \\
			\hline
			RN & 11 & 52 & 9 & 7 & 9 & 5  \\
			\hline
		\end{tabular}
	\end{threeparttable}
\end{table}

\begin{table*}[t]
	\small
	\caption{Performance of distributed graph computing (unit:s)}
	\label{tab:large-real-perf}
	\vspace{-0.15in}
	\begin{threeparttable}
		\centering
		\begin{tabular}{|c|r|r|r|r|r|r|r|r|r|}
			\hline
			\multirow{2}*{DataSet} & \multicolumn{3}{|c|}{$TC$}  & \multicolumn{3}{|c|}{PageRank} & \multicolumn{3}{|c|}{SSSP} \\ 
			\cline{2-10}
			~ & HDRF & NE & WindGP &  HDRF & NE &  WindGP & HDRF & NE & WindGP   \\ 
			\hline
			TW & 2.7G & 5.6G & 0.4G & 2,380 & 4,417 & 353 &  1,046 & 1,820 & 182   \\
			\hline
			DB & 2.5G & 4.2G & 0.3G & 2,146 & 3,709 & 280 & 965 & 1,227 & 158  \\
			\hline
			FR & 5.1G & 4.4G & 1.0G & 5,012 & 4,183 & 681  & 1,398 & 1,260 & 312  \\
			\hline
			YH & 6.8G & 5.9G & 1.7G & 8,149 & 5,980 & 1,048 & 2,012 & 1,865 & 509  \\
			\hline
		\end{tabular}
	\end{threeparttable}
\end{table*}

\Paragraph{Comparison with Non-heterogeneous Solutions}.
Compared with non-heterogeneous solutions, WindGP has the lowest $TC$ value on all datasets, and the improvement is rather prominent (>6$\times$) on enormous skewed graphs (TW and DB, the maximum degree is 3M and 17M respectively).
FR and YH have relatively much smaller skew (the maximum degree is 5.2K and 2.5K respectively), thus the speedup of WindGP is restricted to $3\times \sim 4\times$.
For most cases, a reduction of $TC$ (i.e., the estimated total cost) means a reduction in the running time of distributed graph computing as the partition quality is improved. 
This further verifies the analysis in Section \ref{sec:problem} that $TC$ is nearly proportional to the distributed running time.
Though the optimization effect of our techniques is tiny on mesh-like graphs like RN, the running time still drops a little thanks to our sophisticated post-processing technique which considers all kinds of corner cases.
The running time of both PageRank and SSSP is much larger than that on homogeneous machines, because the distributed computing on heterogeneous machines is much more complicated.
Super machines have large memory but high computation and communication cost, while the case of normal machines is reversed.
This limits the optimization space if all normal machines are full, which corresponds to the analysis in Section \ref{sec:scalability}.

\Paragraph{Comparison with Heterogeneous Solutions}.
As shown in Table \ref{tab:hetero-perf}, when compared with heterogeneous solutions (\cite{HeterCompPart}, GrapH, HaSGP, HAEP), WindGP also shows $>1.49\times$ and $>1.39\times$ speedup in the running time of PageRank and SSSP respectively, as it performs collaborative optimization on machines with heterogeneous memory capacity, computing power and network bandwidth.
Note that the speedup is calculated by comparing WindGP with the best counterpart HAEP.
None of the counterparts takes care of the heterogeneous memory capacity, thus they can not compete with WindGP when optimizing the computation cost.
Specifically, \cite{HeterCompPart} only optimizes load balance while \cite{GrapH} targets at communication cost.
Thus, the computing time of \cite{HeterCompPart} is similar to ours, but its communication time is $\sim$50\% longer.
\cite{GrapH} has >20\% longer computing time, and its communication time is also $\sim$18\% longer as it lacks the best-first search and comprehensive subgraph-local tuning.
HaSGP is a streaming partition algorithm, which performs terrible when running on severely skewed graphs like TW and DB.
The degree distribution of FR and YH is much more balanced, thus \cite{HeterCompPart} performs better than GrapH as it balances the computation cost better while existing communication-optimized methods (NE, GrapH, HAEP) have non-prominent effect on these two graphs.
Inherently, most nodes have the same scale of neighbors, and there is no significant difference when we select which one as the boundary.
However, WindGP can optimize this case well because the border generation in Section \ref{sec:best-search} restricts the border vertex to reside in as fewer machines as possible.

\begin{table}[htbp]
	\small
	\caption{Distributed running time of heterogeneous algorithms (unit:s)}
	\label{tab:small-hetero-perf}
	\vspace{-0.1in}
	\begin{threeparttable}
		\centering
		\begin{tabular}{|c|r|r|r|r|r|r|}
			\hline
			\multirow{2}*{Data} & \multicolumn{3}{|c|}{PageRank} & \multicolumn{3}{|c|}{TriangleCount} \\ 
			\cline{2-7}
			~ &  \cite{HeterCompPart} & GrapH &  WindGP & \cite{HeterCompPart} & GrapH & WindGP   \\ 
			\hline
			TW & 1287 & 681 & 353 &  667 & 309 & 182   \\
			\hline
			CO & 411 & 208 & 125 & 198 & 116 & 97  \\
			\hline
			LJ & 98 & 81 & 54  & 32 & 31 & 25  \\
			\hline
			PO & 201 & 126 & 41 & 38 & 27 & 18  \\
			\hline
			CP & 53 & 108 & 27 & 18 & 20 & 11  \\
			\hline
			RN & 10 & 32 & 9 & 7 & 21 & 5  \\
			\hline
		\end{tabular}
	\end{threeparttable}
\end{table}

Comparing PageRank and SSSP, obviously the speedup on PageRank is always higher than that on SSSP.
The biggest difference is on DB, which has the largest degree skew and the lowest averaged degree (i.e., 4.7) among four large graphs.
In PageRank, all nodes need to join each iteration while only partial nodes are active in SSSP.
The largest skew enlarges the optimization space of WindGP in PageRank, while the lowest averaged degree limits our techniques in SSSP (just like the cases in mesh-like RN).
This indicates that WindGP works well on dense algorithms like PageRank, but further improvement can be developed on sparse algorithms like SSSP.

To sum up, compared with existing non-heterogeneous and heterogeneous solutions, WindGP successfully reduces the $TC$ metric as well as the running time of distributed graph computing for all graphs.
On average, WindGP reduces $TC$ and the distributed running time by $1.9\times\sim6.7\times$ and $1.4\times\sim5.7\times$, respectively.
As for the executing time of heterogeneous partition methods, Table \ref{tab:part-time} shows that the gap is not prominent.

\begin{table}[htbp]
	\small
	\caption{Evaluation of partitioning time on heterogeneous methods}
	\label{tab:part-time-large}
	\vspace{-0.1in}
	\begin{threeparttable}
		\small
		\centering
			\begin{tabular}{ |c|c|c|c|c|c| } 
				\hline
				Dataset &  \cite{HeterCompPart} & GrapH & HaSGP & HAEP & WindGP \\
				\hline
				TW  & 105 & 92 & 91 & 112 & 101 \\ 
				\hline
				DB  & 109  & 90 & 90 & 110 & 108 \\ 
				\hline
				FR  & 156  & 137 & 138 & 170 & 152 \\
				\hline
				YH  & 301  & 218 & 210 & 312 & 293 \\
				\hline
			\end{tabular}
	\end{threeparttable}
\end{table}

\section{Conclusions}\label{sec:conclusion}

We introduce a graph partitioning algorithm WindGP, which supports fast and high-quality partitioning on heterogeneous machines.
WindGP consists of three phases: preprocessing, graph exploration, post-processing.
First, novel preprocessing is utilized to simplify the metric and balance the computation cost according to the characteristics of graphs and machines. 
Besides, best-first search is proposed instead of BFS/DFS, in order to generate communities with high cohesion. 
Furthermore, subgraph-local search is adopted to tune the partition results adaptively.
Experiments on real-world graphs show that WindGP outperforms the state-of-the-art partition methods by 1.35$\times$$\sim$27$\times$.
In future, targeting at specific graph algorithms (e.g., PageRank), WindGP can be further enhanced by utilizing algorithmic characteristics.


\clearpage

\bibliographystyle{ACM-Reference-Format}
\bibliography{windgp}


\begin{thebibliography}{67}


\ifx \showCODEN    \undefined \def \showCODEN     #1{\unskip}     \fi
\ifx \showDOI      \undefined \def \showDOI       #1{#1}\fi
\ifx \showISBNx    \undefined \def \showISBNx     #1{\unskip}     \fi
\ifx \showISBNxiii \undefined \def \showISBNxiii  #1{\unskip}     \fi
\ifx \showISSN     \undefined \def \showISSN      #1{\unskip}     \fi
\ifx \showLCCN     \undefined \def \showLCCN      #1{\unskip}     \fi
\ifx \shownote     \undefined \def \shownote      #1{#1}          \fi
\ifx \showarticletitle \undefined \def \showarticletitle #1{#1}   \fi
\ifx \showURL      \undefined \def \showURL       {\relax}        \fi
\providecommand\bibfield[2]{#2}
\providecommand\bibinfo[2]{#2}
\providecommand\natexlab[1]{#1}
\providecommand\showeprint[2][]{arXiv:#2}

\bibitem[\protect\citeauthoryear{Alhajj and Rokne}{Gir}{2018}]%
        {Giraph}
 \bibinfo{year}{2018}\natexlab{}.
\newblock \showarticletitle{Apache Giraph}.
\newblock In \bibinfo{booktitle}{\emph{Encyclopedia of Social Network Analysis
  and Mining, 2nd Edition}}, \bibfield{editor}{\bibinfo{person}{Reda Alhajj}
  {and} \bibinfo{person}{Jon~G. Rokne}} (Eds.). \bibinfo{publisher}{Springer}.
\newblock


\bibitem[\protect\citeauthoryear{Abbas, Kalavri, Carbone, and Vlassov}{Abbas
  et~al\mbox{.}}{2018}]%
        {DBLP:journals/pvldb/AbbasKCV18}
\bibfield{author}{\bibinfo{person}{Zainab Abbas}, \bibinfo{person}{Vasiliki
  Kalavri}, \bibinfo{person}{Paris Carbone}, {and} \bibinfo{person}{Vladimir
  Vlassov}.} \bibinfo{year}{2018}\natexlab{}.
\newblock \showarticletitle{Streaming Graph Partitioning: An Experimental
  Study}.
\newblock \bibinfo{journal}{\emph{Proc. {VLDB} Endow.}} \bibinfo{volume}{11},
  \bibinfo{number}{11} (\bibinfo{year}{2018}), \bibinfo{pages}{1590--1603}.
\newblock


\bibitem[\protect\citeauthoryear{Akhremtsev, Sanders, and Schulz}{Akhremtsev
  et~al\mbox{.}}{2020}]%
        {DBLP:journals/tpds/AkhremtsevSS20}
\bibfield{author}{\bibinfo{person}{Yaroslav Akhremtsev}, \bibinfo{person}{Peter
  Sanders}, {and} \bibinfo{person}{Christian Schulz}.}
  \bibinfo{year}{2020}\natexlab{}.
\newblock \showarticletitle{High-Quality Shared-Memory Graph Partitioning}.
\newblock \bibinfo{journal}{\emph{{IEEE} Trans. Parallel Distributed Syst.}}
  \bibinfo{volume}{31}, \bibinfo{number}{11} (\bibinfo{year}{2020}),
  \bibinfo{pages}{2710--2722}.
\newblock


\bibitem[\protect\citeauthoryear{Bixby}{Bixby}{2007}]%
        {bixby2007gurobi}
\bibfield{author}{\bibinfo{person}{Bob Bixby}.}
  \bibinfo{year}{2007}\natexlab{}.
\newblock \showarticletitle{The gurobi optimizer}.
\newblock \bibinfo{journal}{\emph{Transp. Re-search Part B}}
  \bibinfo{volume}{41}, \bibinfo{number}{2} (\bibinfo{year}{2007}),
  \bibinfo{pages}{159--178}.
\newblock


\bibitem[\protect\citeauthoryear{Bourse, Lelarge, and Vojnovic}{Bourse
  et~al\mbox{.}}{2014}]%
        {DBLP:conf/kdd/BourseLV14}
\bibfield{author}{\bibinfo{person}{Florian Bourse}, \bibinfo{person}{Marc
  Lelarge}, {and} \bibinfo{person}{Milan Vojnovic}.}
  \bibinfo{year}{2014}\natexlab{}.
\newblock \showarticletitle{Balanced graph edge partition}. In
  \bibinfo{booktitle}{\emph{The 20th {ACM} {SIGKDD} International Conference on
  Knowledge Discovery and Data Mining, {KDD} '14, New York, NY, {USA} - August
  24 - 27, 2014}}, \bibfield{editor}{\bibinfo{person}{Sofus~A. Macskassy},
  \bibinfo{person}{Claudia Perlich}, \bibinfo{person}{Jure Leskovec},
  \bibinfo{person}{Wei Wang}, {and} \bibinfo{person}{Rayid Ghani}} (Eds.).
  \bibinfo{publisher}{{ACM}}, \bibinfo{pages}{1456--1465}.
\newblock


\bibitem[\protect\citeauthoryear{Buluc and Madduri}{Buluc and Madduri}{2011}]%
        {BFS}
\bibfield{author}{\bibinfo{person}{Aydin Buluc} {and} \bibinfo{person}{Kamesh
  Madduri}.} \bibinfo{year}{2011}\natexlab{}.
\newblock \showarticletitle{Parallel Breadth-First Search on Distributed Memory
  Systems}. In \bibinfo{booktitle}{\emph{IEEE International Conference on High
  Performance Computing, Data, and Analytics}}.
\newblock


\bibitem[\protect\citeauthoryear{C, D, B, and E}{C et~al\mbox{.}}{2021}]%
        {SurDataAna}
\bibfield{author}{\bibinfo{person}{Junliang Wang A~B C},
  \bibinfo{person}{Chuqiao~Xu D}, \bibinfo{person}{Jie Zhang~A B}, {and}
  \bibinfo{person}{Ray~Zhong E}.} \bibinfo{year}{2021}\natexlab{}.
\newblock \showarticletitle{Big data analytics for intelligent manufacturing
  systems: A review}.
\newblock \bibinfo{journal}{\emph{Journal of Manufacturing Systems}}
  (\bibinfo{year}{2021}).
\newblock


\bibitem[\protect\citeauthoryear{Chakrabarti, Zhan, and Faloutsos}{Chakrabarti
  et~al\mbox{.}}{2004}]%
        {R-MAT}
\bibfield{author}{\bibinfo{person}{Deepayan Chakrabarti},
  \bibinfo{person}{Yiping Zhan}, {and} \bibinfo{person}{Christos Faloutsos}.}
  \bibinfo{year}{2004}\natexlab{}.
\newblock \showarticletitle{{R-MAT:} {A} Recursive Model for Graph Mining}. In
  \bibinfo{booktitle}{\emph{Proceedings of the Fourth {SIAM} International
  Conference on Data Mining, Lake Buena Vista, Florida, USA, April 22-24,
  2004}}, \bibfield{editor}{\bibinfo{person}{Michael~W. Berry},
  \bibinfo{person}{Umeshwar Dayal}, \bibinfo{person}{Chandrika Kamath}, {and}
  \bibinfo{person}{David~B. Skillicorn}} (Eds.). \bibinfo{publisher}{{SIAM}},
  \bibinfo{pages}{442--446}.
\newblock


\bibitem[\protect\citeauthoryear{Chen, Shi, Chen, Zang, Guan, and Chen}{Chen
  et~al\mbox{.}}{2018}]%
        {PowerLyra}
\bibfield{author}{\bibinfo{person}{Rong Chen}, \bibinfo{person}{Jiaxin Shi},
  \bibinfo{person}{Yanzhe Chen}, \bibinfo{person}{Binyu Zang},
  \bibinfo{person}{Haibing Guan}, {and} \bibinfo{person}{Haibo Chen}.}
  \bibinfo{year}{2018}\natexlab{}.
\newblock \showarticletitle{PowerLyra: Differentiated Graph Computation and
  Partitioning on Skewed Graphs}.
\newblock \bibinfo{journal}{\emph{{ACM} Trans. Parallel Comput.}}
  \bibinfo{volume}{5}, \bibinfo{number}{3} (\bibinfo{year}{2018}),
  \bibinfo{pages}{13:1--13:39}.
\newblock


\bibitem[\protect\citeauthoryear{Conforti, Cornu{\'e}jols, Zambelli,
  et~al\mbox{.}}{Conforti et~al\mbox{.}}{2014}]%
        {conforti2014integer}
\bibfield{author}{\bibinfo{person}{Michele Conforti},
  \bibinfo{person}{G{\'e}rard Cornu{\'e}jols}, \bibinfo{person}{Giacomo
  Zambelli}, {et~al\mbox{.}}} \bibinfo{year}{2014}\natexlab{}.
\newblock \bibinfo{booktitle}{\emph{Integer programming}}.
  Vol.~\bibinfo{volume}{271}.
\newblock \bibinfo{publisher}{Springer}.
\newblock


\bibitem[\protect\citeauthoryear{D'Azevedo and Imam}{D'Azevedo and
  Imam}{2015}]%
        {graph500}
\bibfield{author}{\bibinfo{person}{Eduardo~F. D'Azevedo} {and}
  \bibinfo{person}{Neena Imam}.} \bibinfo{year}{2015}\natexlab{}.
\newblock \showarticletitle{Graph 500 in OpenSHMEM}. In
  \bibinfo{booktitle}{\emph{OpenSHMEM and Related Technologies. Experiences,
  Implementations, and Technologies - Second Workshop, OpenSHMEM 2015,
  Annapolis, MD, USA, August 4-6, 2015. Revised Selected Papers}}
  \emph{(\bibinfo{series}{Lecture Notes in Computer Science})},
  \bibfield{editor}{\bibinfo{person}{Manjunath~Gorentla Venkata},
  \bibinfo{person}{Pavel Shamis}, \bibinfo{person}{Neena Imam}, {and}
  \bibinfo{person}{M.~Graham Lopez}} (Eds.), Vol.~\bibinfo{volume}{9397}.
  \bibinfo{publisher}{Springer}, \bibinfo{pages}{154--163}.
\newblock


\bibitem[\protect\citeauthoryear{Denning and Lewis}{Denning and Lewis}{2016}]%
        {denning2016exponential}
\bibfield{author}{\bibinfo{person}{Peter~J Denning} {and}
  \bibinfo{person}{Ted~G Lewis}.} \bibinfo{year}{2016}\natexlab{}.
\newblock \showarticletitle{Exponential laws of computing growth}.
\newblock \bibinfo{journal}{\emph{Commun. ACM}} \bibinfo{volume}{60},
  \bibinfo{number}{1} (\bibinfo{year}{2016}), \bibinfo{pages}{54--65}.
\newblock


\bibitem[\protect\citeauthoryear{Deutsch, Yu, Wu, and Lee}{Deutsch
  et~al\mbox{.}}{2019}]%
        {TigerGraph}
\bibfield{author}{\bibinfo{person}{A. Deutsch}, \bibinfo{person}{X. Yu},
  \bibinfo{person}{M. Wu}, {and} \bibinfo{person}{V. Lee}.}
  \bibinfo{year}{2019}\natexlab{}.
\newblock \showarticletitle{TigerGraph: A Native MPP Graph Database}.
\newblock \bibinfo{journal}{\emph{arXiv}} (\bibinfo{year}{2019}).
\newblock


\bibitem[\protect\citeauthoryear{Edwards}{Edwards}{2021}]%
        {DBLP:journals/cacm/Edwards21a}
\bibfield{author}{\bibinfo{person}{Chris Edwards}.}
  \bibinfo{year}{2021}\natexlab{}.
\newblock \showarticletitle{Moore's Law: what comes next?}
\newblock \bibinfo{journal}{\emph{Commun. {ACM}}} \bibinfo{volume}{64},
  \bibinfo{number}{2} (\bibinfo{year}{2021}), \bibinfo{pages}{12--14}.
\newblock


\bibitem[\protect\citeauthoryear{Fan, He, et~al\mbox{.}}{Fan
  et~al\mbox{.}}{2021}]%
        {GraphScope}
\bibfield{author}{\bibinfo{person}{Wenfei Fan}, \bibinfo{person}{Tao He},
  {et~al\mbox{.}}} \bibinfo{year}{2021}\natexlab{}.
\newblock \showarticletitle{GraphScope: {A} Unified Engine For Big Graph
  Processing}.
\newblock \bibinfo{journal}{\emph{Proc. {VLDB} Endow.}} \bibinfo{volume}{14},
  \bibinfo{number}{12} (\bibinfo{year}{2021}), \bibinfo{pages}{2879--2892}.
\newblock


\bibitem[\protect\citeauthoryear{Forrest and Lougee-Heimer}{Forrest and
  Lougee-Heimer}{2005}]%
        {forrest2005cbc}
\bibfield{author}{\bibinfo{person}{John Forrest} {and} \bibinfo{person}{Robin
  Lougee-Heimer}.} \bibinfo{year}{2005}\natexlab{}.
\newblock \showarticletitle{CBC user guide}.
\newblock In \bibinfo{booktitle}{\emph{Emerging theory, methods, and
  applications}}. \bibinfo{publisher}{INFORMS}, \bibinfo{pages}{257--277}.
\newblock


\bibitem[\protect\citeauthoryear{Forster and Nanongkai}{Forster and
  Nanongkai}{2017}]%
        {SSSP}
\bibfield{author}{\bibinfo{person}{Sebastian Forster} {and}
  \bibinfo{person}{Danupon Nanongkai}.} \bibinfo{year}{2017}\natexlab{}.
\newblock \showarticletitle{A Faster Distributed Single-Source Shortest Paths
  Algorithm}.
\newblock \bibinfo{journal}{\emph{FOCS}} (\bibinfo{year}{2017}).
\newblock


\bibitem[\protect\citeauthoryear{Frederickson}{Frederickson}{1993}]%
        {MinHeap}
\bibfield{author}{\bibinfo{person}{Greg~N. Frederickson}.}
  \bibinfo{year}{1993}\natexlab{}.
\newblock \showarticletitle{An Optimal Algorithm for Selection in a Min-Heap}.
\newblock \bibinfo{journal}{\emph{Inf. Comput.}} \bibinfo{volume}{104},
  \bibinfo{number}{2} (\bibinfo{year}{1993}), \bibinfo{pages}{197--214}.
\newblock


\bibitem[\protect\citeauthoryear{Gamrath, Anderson, Bestuzheva, Chen, Eifler,
  Gasse, Gemander, Gleixner, Gottwald, Halbig, et~al\mbox{.}}{Gamrath
  et~al\mbox{.}}{2020}]%
        {gamrath2020scip}
\bibfield{author}{\bibinfo{person}{Gerald Gamrath}, \bibinfo{person}{Daniel
  Anderson}, \bibinfo{person}{Ksenia Bestuzheva}, \bibinfo{person}{Wei-Kun
  Chen}, \bibinfo{person}{Leon Eifler}, \bibinfo{person}{Maxime Gasse},
  \bibinfo{person}{Patrick Gemander}, \bibinfo{person}{Ambros Gleixner},
  \bibinfo{person}{Leona Gottwald}, \bibinfo{person}{Katrin Halbig},
  {et~al\mbox{.}}} \bibinfo{year}{2020}\natexlab{}.
\newblock \showarticletitle{The SCIP optimization suite 7.0}.
\newblock  (\bibinfo{year}{2020}).
\newblock


\bibitem[\protect\citeauthoryear{Garraghan, Townend, and Xu}{Garraghan
  et~al\mbox{.}}{2013}]%
        {DBLP:conf/ic2e/GarraghanTX13}
\bibfield{author}{\bibinfo{person}{Peter Garraghan}, \bibinfo{person}{Paul
  Townend}, {and} \bibinfo{person}{Jie Xu}.} \bibinfo{year}{2013}\natexlab{}.
\newblock \showarticletitle{An Analysis of the Server Characteristics and
  Resource Utilization in Google Cloud}. In \bibinfo{booktitle}{\emph{2013
  {IEEE} International Conference on Cloud Engineering, {IC2E} 2013, San
  Francisco, CA, USA, March 25-27, 2013}}. \bibinfo{publisher}{{IEEE} Computer
  Society}, \bibinfo{pages}{124--131}.
\newblock


\bibitem[\protect\citeauthoryear{Gerbessiotis and Valiant}{Gerbessiotis and
  Valiant}{1994}]%
        {BSP}
\bibfield{author}{\bibinfo{person}{Alexandros~V. Gerbessiotis} {and}
  \bibinfo{person}{Leslie~G. Valiant}.} \bibinfo{year}{1994}\natexlab{}.
\newblock \showarticletitle{Direct Bulk-Synchronous Parallel Algorithms}.
\newblock \bibinfo{journal}{\emph{J. Parallel Distributed Comput.}}
  \bibinfo{volume}{22}, \bibinfo{number}{2} (\bibinfo{year}{1994}),
  \bibinfo{pages}{251--267}.
\newblock


\bibitem[\protect\citeauthoryear{Gonzalez, Low, Gu, Bickson, and
  Guestrin}{Gonzalez et~al\mbox{.}}{2012}]%
        {PowerGraph}
\bibfield{author}{\bibinfo{person}{Joseph~E. Gonzalez},
  \bibinfo{person}{Yucheng Low}, \bibinfo{person}{Haijie Gu},
  \bibinfo{person}{Danny Bickson}, {and} \bibinfo{person}{Carlos Guestrin}.}
  \bibinfo{year}{2012}\natexlab{}.
\newblock \showarticletitle{PowerGraph: Distributed Graph-Parallel Computation
  on Natural Graphs}. In \bibinfo{booktitle}{\emph{10th {USENIX} Symposium on
  Operating Systems Design and Implementation, {OSDI} 2012, Hollywood, CA, USA,
  October 8-10, 2012}}, \bibfield{editor}{\bibinfo{person}{Chandu Thekkath}
  {and} \bibinfo{person}{Amin Vahdat}} (Eds.). \bibinfo{publisher}{{USENIX}
  Association}, \bibinfo{pages}{17--30}.
\newblock


\bibitem[\protect\citeauthoryear{Hamilton, Ying, and Leskovec}{Hamilton
  et~al\mbox{.}}{2017}]%
        {GraphSage}
\bibfield{author}{\bibinfo{person}{William~L. Hamilton},
  \bibinfo{person}{Zhitao Ying}, {and} \bibinfo{person}{Jure Leskovec}.}
  \bibinfo{year}{2017}\natexlab{}.
\newblock \showarticletitle{Inductive Representation Learning on Large Graphs}.
  In \bibinfo{booktitle}{\emph{Advances in Neural Information Processing
  Systems 30: Annual Conference on Neural Information Processing Systems 2017,
  December 4-9, 2017, Long Beach, CA, {USA}}},
  \bibfield{editor}{\bibinfo{person}{Isabelle Guyon}, \bibinfo{person}{Ulrike
  von Luxburg}, \bibinfo{person}{Samy Bengio}, \bibinfo{person}{Hanna~M.
  Wallach}, \bibinfo{person}{Rob Fergus}, \bibinfo{person}{S.~V.~N.
  Vishwanathan}, {and} \bibinfo{person}{Roman Garnett}} (Eds.).
  \bibinfo{pages}{1024--1034}.
\newblock


\bibitem[\protect\citeauthoryear{Herbst, Bauer, Kounev, Oikonomou, Eyk,
  Kousiouris, Evangelinou, Krebs, Brecht, Abad, and Iosup}{Herbst
  et~al\mbox{.}}{2018}]%
        {DBLP:journals/tompecs/HerbstBKOEKEKBA18}
\bibfield{author}{\bibinfo{person}{Nikolas Herbst},
  \bibinfo{person}{Andr{\'{e}} Bauer}, \bibinfo{person}{Samuel Kounev},
  \bibinfo{person}{Giorgos Oikonomou}, \bibinfo{person}{Erwin~Van Eyk},
  \bibinfo{person}{George Kousiouris}, \bibinfo{person}{Athanasia Evangelinou},
  \bibinfo{person}{Rouven Krebs}, \bibinfo{person}{Tim Brecht},
  \bibinfo{person}{Cristina~L. Abad}, {and} \bibinfo{person}{Alexandru Iosup}.}
  \bibinfo{year}{2018}\natexlab{}.
\newblock \showarticletitle{Quantifying Cloud Performance and Dependability:
  Taxonomy, Metric Design, and Emerging Challenges}.
\newblock \bibinfo{journal}{\emph{{ACM} Trans. Model. Perform. Evaluation
  Comput. Syst.}} \bibinfo{volume}{3}, \bibinfo{number}{4}
  (\bibinfo{year}{2018}), \bibinfo{pages}{19:1--19:36}.
\newblock


\bibitem[\protect\citeauthoryear{Hromkovi{\v{c}}}{Hromkovi{\v{c}}}{2013}]%
        {hromkovivc2013algorithmics}
\bibfield{author}{\bibinfo{person}{Juraj Hromkovi{\v{c}}}.}
  \bibinfo{year}{2013}\natexlab{}.
\newblock \bibinfo{booktitle}{\emph{Algorithmics for hard problems:
  introduction to combinatorial optimization, randomization, approximation, and
  heuristics}}.
\newblock \bibinfo{publisher}{Springer Science \& Business Media}.
\newblock


\bibitem[\protect\citeauthoryear{Jia, Jiang, Wang, Xiao, Shi, Zhang, Li, Chen,
  Li, Zheng, Liu, and Lin}{Jia et~al\mbox{.}}{2022}]%
        {Whale}
\bibfield{author}{\bibinfo{person}{Xianyan Jia}, \bibinfo{person}{Le Jiang},
  \bibinfo{person}{Ang Wang}, \bibinfo{person}{Wencong Xiao},
  \bibinfo{person}{Ziji Shi}, \bibinfo{person}{Jie Zhang},
  \bibinfo{person}{Xinyuan Li}, \bibinfo{person}{Langshi Chen},
  \bibinfo{person}{Yong Li}, \bibinfo{person}{Zhen Zheng},
  \bibinfo{person}{Xiaoyong Liu}, {and} \bibinfo{person}{Wei Lin}.}
  \bibinfo{year}{2022}\natexlab{}.
\newblock \showarticletitle{Whale: Efficient Giant Model Training over
  Heterogeneous {GPUs}}. In \bibinfo{booktitle}{\emph{2022 USENIX Annual
  Technical Conference (USENIX ATC 22)}}.
\newblock


\bibitem[\protect\citeauthoryear{Karypis and Kumar}{Karypis and Kumar}{1998}]%
        {METIS}
\bibfield{author}{\bibinfo{person}{George Karypis} {and} \bibinfo{person}{Vipin
  Kumar}.} \bibinfo{year}{1998}\natexlab{}.
\newblock \showarticletitle{A Fast and High Quality Multilevel Scheme for
  Partitioning Irregular Graphs}.
\newblock \bibinfo{journal}{\emph{{SIAM} J. Sci. Comput.}}
  \bibinfo{volume}{20}, \bibinfo{number}{1} (\bibinfo{year}{1998}),
  \bibinfo{pages}{359--392}.
\newblock


\bibitem[\protect\citeauthoryear{Kwak, Lee, Park, and Moon}{Kwak
  et~al\mbox{.}}{2010}]%
        {twitter2010}
\bibfield{author}{\bibinfo{person}{Haewoon Kwak}, \bibinfo{person}{Changhyun
  Lee}, \bibinfo{person}{Hosung Park}, {and} \bibinfo{person}{Sue~B. Moon}.}
  \bibinfo{year}{2010}\natexlab{}.
\newblock \showarticletitle{What is Twitter, a social network or a news
  media?}. In \bibinfo{booktitle}{\emph{WWW}}. \bibinfo{publisher}{{ACM}},
  \bibinfo{pages}{591--600}.
\newblock


\bibitem[\protect\citeauthoryear{Lawler and Wood}{Lawler and Wood}{1966}]%
        {lawler1966branch}
\bibfield{author}{\bibinfo{person}{Eugene~L Lawler} {and}
  \bibinfo{person}{David~E Wood}.} \bibinfo{year}{1966}\natexlab{}.
\newblock \showarticletitle{Branch-and-bound methods: A survey}.
\newblock \bibinfo{journal}{\emph{Operations research}} \bibinfo{volume}{14},
  \bibinfo{number}{4} (\bibinfo{year}{1966}), \bibinfo{pages}{699--719}.
\newblock


\bibitem[\protect\citeauthoryear{Leskovec, Kleinberg, and Faloutsos}{Leskovec
  et~al\mbox{.}}{2005}]%
        {patent}
\bibfield{author}{\bibinfo{person}{Jure Leskovec}, \bibinfo{person}{Jon~M.
  Kleinberg}, {and} \bibinfo{person}{Christos Faloutsos}.}
  \bibinfo{year}{2005}\natexlab{}.
\newblock \showarticletitle{Graphs over time: densification laws, shrinking
  diameters and possible explanations}. In \bibinfo{booktitle}{\emph{SIGKDD}}.
  \bibinfo{publisher}{{ACM}}, \bibinfo{pages}{177--187}.
\newblock


\bibitem[\protect\citeauthoryear{Leskovec and Krevl}{Leskovec and
  Krevl}{2014}]%
        {site:snap}
\bibfield{author}{\bibinfo{person}{Jure Leskovec} {and} \bibinfo{person}{Andrej
  Krevl}.} \bibinfo{year}{2014}\natexlab{}.
\newblock \bibinfo{title}{{SNAP Datasets}: {Stanford} Large Network Dataset
  Collection}.
\newblock \bibinfo{howpublished}{http://snap.stanford.edu/data}.
\newblock


\bibitem[\protect\citeauthoryear{Liu, Wu, Yu, Ma, Feng, Zhang, Wu, Yao, and
  Dou}{Liu et~al\mbox{.}}{2021}]%
        {HeterPS}
\bibfield{author}{\bibinfo{person}{Ji Liu}, \bibinfo{person}{Zhihua Wu},
  \bibinfo{person}{Dianhai Yu}, \bibinfo{person}{Yanjun Ma},
  \bibinfo{person}{Danlei Feng}, \bibinfo{person}{Minxu Zhang},
  \bibinfo{person}{Xinxuan Wu}, \bibinfo{person}{Xuefeng Yao}, {and}
  \bibinfo{person}{Dejing Dou}.} \bibinfo{year}{2021}\natexlab{}.
\newblock \showarticletitle{HeterPS: Distributed Deep Learning With
  Reinforcement Learning Based Scheduling in Heterogeneous Environments}.
\newblock \bibinfo{journal}{\emph{Future Generation Computer Systems}}
  (\bibinfo{year}{2021}).
\newblock


\bibitem[\protect\citeauthoryear{Lu, Cheng, Yan, and Wu}{Lu
  et~al\mbox{.}}{2014}]%
        {DBLP:journals/pvldb/LuCYW14}
\bibfield{author}{\bibinfo{person}{Yi Lu}, \bibinfo{person}{James Cheng},
  \bibinfo{person}{Da Yan}, {and} \bibinfo{person}{Huanhuan Wu}.}
  \bibinfo{year}{2014}\natexlab{}.
\newblock \showarticletitle{Large-Scale Distributed Graph Computing Systems: An
  Experimental Evaluation}.
\newblock \bibinfo{journal}{\emph{Proc. {VLDB} Endow.}} \bibinfo{volume}{8},
  \bibinfo{number}{3} (\bibinfo{year}{2014}), \bibinfo{pages}{281--292}.
\newblock


\bibitem[\protect\citeauthoryear{Mahmud, Huang, Salloum, Emara, and
  Sadatdiynov}{Mahmud et~al\mbox{.}}{2020}]%
        {SurDataPart}
\bibfield{author}{\bibinfo{person}{Mohammad~Sultan Mahmud},
  \bibinfo{person}{Joshua~Zhexue Huang}, \bibinfo{person}{Salman Salloum},
  \bibinfo{person}{Tamer~Z. Emara}, {and} \bibinfo{person}{Kuanishbay
  Sadatdiynov}.} \bibinfo{year}{2020}\natexlab{}.
\newblock \showarticletitle{A Survey of Data Partitioning and Sampling Methods
  to Support Big Data Analysis}.
\newblock \bibinfo{journal}{\emph{Big Data Mining and Analytics}}
  \bibinfo{number}{2} (\bibinfo{year}{2020}), \bibinfo{pages}{17}.
\newblock


\bibitem[\protect\citeauthoryear{Masood, Munir, Rafique, and Khan}{Masood
  et~al\mbox{.}}{2015}]%
        {DBLP:conf/hpcc/MasoodMRK15}
\bibfield{author}{\bibinfo{person}{Anum Masood}, \bibinfo{person}{Ehsan~Ullah
  Munir}, \bibinfo{person}{M.~Mustafa Rafique}, {and}
  \bibinfo{person}{Samee~Ullah Khan}.} \bibinfo{year}{2015}\natexlab{}.
\newblock \showarticletitle{{HETS:} Heterogeneous Edge and Task Scheduling
  Algorithm for Heterogeneous Computing Systems}. In
  \bibinfo{booktitle}{\emph{17th {IEEE} International Conference on High
  Performance Computing and Communications, {HPCC} 2015, 7th {IEEE}
  International Symposium on Cyberspace Safety and Security, {CSS} 2015, and
  12th {IEEE} International Conference on Embedded Software and Systems,
  {ICESS} 2015, New York, NY, USA, August 24-26, 2015}}.
  \bibinfo{publisher}{{IEEE}}, \bibinfo{pages}{1865--1870}.
\newblock


\bibitem[\protect\citeauthoryear{Mayer, Tariq, Li, and Rothermel}{Mayer
  et~al\mbox{.}}{2016}]%
        {GrapH}
\bibfield{author}{\bibinfo{person}{Christian Mayer},
  \bibinfo{person}{Muhammad~Adnan Tariq}, \bibinfo{person}{Chen Li}, {and}
  \bibinfo{person}{Kurt Rothermel}.} \bibinfo{year}{2016}\natexlab{}.
\newblock \showarticletitle{GrapH: Heterogeneity-Aware Graph Computation with
  Adaptive Partitioning}. In \bibinfo{booktitle}{\emph{36th {IEEE}
  International Conference on Distributed Computing Systems, {ICDCS} 2016,
  Nara, Japan, June 27-30, 2016}}. \bibinfo{publisher}{{IEEE} Computer
  Society}, \bibinfo{pages}{118--128}.
\newblock


\bibitem[\protect\citeauthoryear{Meng, Cai, He, Chen, and Deng}{Meng
  et~al\mbox{.}}{2018}]%
        {DBLP:journals/itiis/MengCHCD18}
\bibfield{author}{\bibinfo{person}{Tao Meng}, \bibinfo{person}{Lijun Cai},
  \bibinfo{person}{Tingqin He}, \bibinfo{person}{Lei Chen}, {and}
  \bibinfo{person}{Ziyun Deng}.} \bibinfo{year}{2018}\natexlab{}.
\newblock \showarticletitle{K-Hop Community Search Based On Local Distance
  Dynamics}.
\newblock \bibinfo{journal}{\emph{{KSII} Trans. Internet Inf. Syst.}}
  \bibinfo{volume}{12}, \bibinfo{number}{7} (\bibinfo{year}{2018}),
  \bibinfo{pages}{3041--3063}.
\newblock


\bibitem[\protect\citeauthoryear{Nazi, Hang, Goldie, Ravi, and Mirhoseini}{Nazi
  et~al\mbox{.}}{2019}]%
        {GAP}
\bibfield{author}{\bibinfo{person}{Azade Nazi}, \bibinfo{person}{Will Hang},
  \bibinfo{person}{Anna Goldie}, \bibinfo{person}{Sujith Ravi}, {and}
  \bibinfo{person}{Azalia Mirhoseini}.} \bibinfo{year}{2019}\natexlab{}.
\newblock \showarticletitle{{GAP:} Generalizable Approximate Graph Partitioning
  Framework}.
\newblock \bibinfo{journal}{\emph{CoRR}}  \bibinfo{volume}{abs/1903.00614}
  (\bibinfo{year}{2019}).
\newblock


\bibitem[\protect\citeauthoryear{Pacaci and {\"{O}}zsu}{Pacaci and
  {\"{O}}zsu}{2019}]%
        {DBLP:conf/sigmod/PacaciO19}
\bibfield{author}{\bibinfo{person}{Anil Pacaci} {and} \bibinfo{person}{M.~Tamer
  {\"{O}}zsu}.} \bibinfo{year}{2019}\natexlab{}.
\newblock \showarticletitle{Experimental Analysis of Streaming Algorithms for
  Graph Partitioning}. In \bibinfo{booktitle}{\emph{Proceedings of the 2019
  International Conference on Management of Data, {SIGMOD} Conference 2019,
  Amsterdam, The Netherlands, June 30 - July 5, 2019}},
  \bibfield{editor}{\bibinfo{person}{Peter~A. Boncz}, \bibinfo{person}{Stefan
  Manegold}, \bibinfo{person}{Anastasia Ailamaki}, \bibinfo{person}{Amol
  Deshpande}, {and} \bibinfo{person}{Tim Kraska}} (Eds.).
  \bibinfo{publisher}{{ACM}}, \bibinfo{pages}{1375--1392}.
\newblock


\bibitem[\protect\citeauthoryear{Petroni, Querzoni, Daudjee, Kamali, and
  Iacoboni}{Petroni et~al\mbox{.}}{2015}]%
        {HDRF}
\bibfield{author}{\bibinfo{person}{Fabio Petroni}, \bibinfo{person}{Leonardo
  Querzoni}, \bibinfo{person}{Khuzaima Daudjee}, \bibinfo{person}{Shahin
  Kamali}, {and} \bibinfo{person}{Giorgio Iacoboni}.}
  \bibinfo{year}{2015}\natexlab{}.
\newblock \showarticletitle{{HDRF:} Stream-Based Partitioning for Power-Law
  Graphs}. In \bibinfo{booktitle}{\emph{Proceedings of the 24th {ACM}
  International Conference on Information and Knowledge Management, {CIKM}
  2015, Melbourne, VIC, Australia, October 19 - 23, 2015}},
  \bibfield{editor}{\bibinfo{person}{James Bailey}, \bibinfo{person}{Alistair
  Moffat}, \bibinfo{person}{Charu~C. Aggarwal}, \bibinfo{person}{Maarten
  de~Rijke}, \bibinfo{person}{Ravi Kumar}, \bibinfo{person}{Vanessa Murdock},
  \bibinfo{person}{Timos~K. Sellis}, {and} \bibinfo{person}{Jeffrey~Xu Yu}}
  (Eds.). \bibinfo{publisher}{{ACM}}, \bibinfo{pages}{243--252}.
\newblock


\bibitem[\protect\citeauthoryear{Pisinger and Ropke}{Pisinger and
  Ropke}{2010}]%
        {pisinger2010large}
\bibfield{author}{\bibinfo{person}{David Pisinger} {and}
  \bibinfo{person}{Stefan Ropke}.} \bibinfo{year}{2010}\natexlab{}.
\newblock \showarticletitle{Large neighborhood search}.
\newblock In \bibinfo{booktitle}{\emph{Handbook of metaheuristics}}.
  \bibinfo{publisher}{Springer}, \bibinfo{pages}{399--419}.
\newblock


\bibitem[\protect\citeauthoryear{Pretto}{Pretto}{2002}]%
        {PageRank}
\bibfield{author}{\bibinfo{person}{Luca Pretto}.}
  \bibinfo{year}{2002}\natexlab{}.
\newblock \showarticletitle{A Theoretical Analysis of Google's PageRank}. In
  \bibinfo{booktitle}{\emph{String Processing and Information Retrieval, 9th
  International Symposium, {SPIRE} 2002, Lisbon, Portugal, September 11-13,
  2002, Proceedings}} \emph{(\bibinfo{series}{Lecture Notes in Computer
  Science})}, \bibfield{editor}{\bibinfo{person}{Alberto H.~F. Laender} {and}
  \bibinfo{person}{Arlindo~L. Oliveira}} (Eds.), Vol.~\bibinfo{volume}{2476}.
  \bibinfo{publisher}{Springer}, \bibinfo{pages}{131--144}.
\newblock


\bibitem[\protect\citeauthoryear{Shi, Huang, Feng, Zhong, Wang, and Sun}{Shi
  et~al\mbox{.}}{2021}]%
        {TransformerConv}
\bibfield{author}{\bibinfo{person}{Yunsheng Shi}, \bibinfo{person}{Zhengjie
  Huang}, \bibinfo{person}{Shikun Feng}, \bibinfo{person}{Hui Zhong},
  \bibinfo{person}{Wenjing Wang}, {and} \bibinfo{person}{Yu Sun}.}
  \bibinfo{year}{2021}\natexlab{}.
\newblock \showarticletitle{Masked Label Prediction: Unified Message Passing
  Model for Semi-Supervised Classification}. In
  \bibinfo{booktitle}{\emph{Proceedings of the Thirtieth International Joint
  Conference on Artificial Intelligence, {IJCAI} 2021, Virtual Event /
  Montreal, Canada, 19-27 August 2021}},
  \bibfield{editor}{\bibinfo{person}{Zhi{-}Hua Zhou}} (Ed.).
  \bibinfo{publisher}{ijcai.org}, \bibinfo{pages}{1548--1554}.
\newblock


\bibitem[\protect\citeauthoryear{Stanton and Kliot}{Stanton and Kliot}{2012}]%
        {LDG}
\bibfield{author}{\bibinfo{person}{Isabelle Stanton} {and}
  \bibinfo{person}{Gabriel Kliot}.} \bibinfo{year}{2012}\natexlab{}.
\newblock \showarticletitle{Streaming graph partitioning for large distributed
  graphs}. In \bibinfo{booktitle}{\emph{The 18th {ACM} {SIGKDD} International
  Conference on Knowledge Discovery and Data Mining, {KDD} '12, Beijing, China,
  August 12-16, 2012}}, \bibfield{editor}{\bibinfo{person}{Qiang Yang},
  \bibinfo{person}{Deepak Agarwal}, {and} \bibinfo{person}{Jian Pei}} (Eds.).
  \bibinfo{publisher}{{ACM}}, \bibinfo{pages}{1222--1230}.
\newblock


\bibitem[\protect\citeauthoryear{Sulaiman, Halim, Lebbah, Waqas, and
  Tu}{Sulaiman et~al\mbox{.}}{2021}]%
        {DBLP:journals/grid/SulaimanHLWT21}
\bibfield{author}{\bibinfo{person}{Muhammad Sulaiman}, \bibinfo{person}{Zahid
  Halim}, \bibinfo{person}{Mustapha Lebbah}, \bibinfo{person}{Muhammad Waqas},
  {and} \bibinfo{person}{Shanshan Tu}.} \bibinfo{year}{2021}\natexlab{}.
\newblock \showarticletitle{An Evolutionary Computing-Based Efficient Hybrid
  Task Scheduling Approach for Heterogeneous Computing Environment}.
\newblock \bibinfo{journal}{\emph{J. Grid Comput.}} \bibinfo{volume}{19},
  \bibinfo{number}{1} (\bibinfo{year}{2021}), \bibinfo{pages}{11}.
\newblock


\bibitem[\protect\citeauthoryear{Tencent}{Tencent}{2019}]%
        {site:plato}
\bibfield{author}{\bibinfo{person}{Tencent}.} \bibinfo{year}{2019}\natexlab{}.
\newblock \bibinfo{title}{Plato}.
\newblock \bibinfo{howpublished}{https://github.com/Tencent/plato}.
\newblock


\bibitem[\protect\citeauthoryear{Tsourakakis, Gkantsidis, Radunovic, and
  Vojnovic}{Tsourakakis et~al\mbox{.}}{2014}]%
        {Fennel}
\bibfield{author}{\bibinfo{person}{Charalampos~E. Tsourakakis},
  \bibinfo{person}{Christos Gkantsidis}, \bibinfo{person}{Bozidar Radunovic},
  {and} \bibinfo{person}{Milan Vojnovic}.} \bibinfo{year}{2014}\natexlab{}.
\newblock \showarticletitle{{FENNEL:} streaming graph partitioning for massive
  scale graphs}. In \bibinfo{booktitle}{\emph{Seventh {ACM} International
  Conference on Web Search and Data Mining, {WSDM} 2014, New York, NY, USA,
  February 24-28, 2014}}, \bibfield{editor}{\bibinfo{person}{Ben Carterette},
  \bibinfo{person}{Fernando Diaz}, \bibinfo{person}{Carlos Castillo}, {and}
  \bibinfo{person}{Donald Metzler}} (Eds.). \bibinfo{publisher}{{ACM}},
  \bibinfo{pages}{333--342}.
\newblock


\bibitem[\protect\citeauthoryear{Verma, Leslie, Shin, and Gupta}{Verma
  et~al\mbox{.}}{2017}]%
        {DBLP:journals/pvldb/VermaLSG17}
\bibfield{author}{\bibinfo{person}{Shiv Verma}, \bibinfo{person}{Luke~M.
  Leslie}, \bibinfo{person}{Yosub Shin}, {and} \bibinfo{person}{Indranil
  Gupta}.} \bibinfo{year}{2017}\natexlab{}.
\newblock \showarticletitle{An Experimental Comparison of Partitioning
  Strategies in Distributed Graph Processing}.
\newblock \bibinfo{journal}{\emph{Proc. {VLDB} Endow.}} \bibinfo{volume}{10},
  \bibinfo{number}{5} (\bibinfo{year}{2017}), \bibinfo{pages}{493--504}.
\newblock


\bibitem[\protect\citeauthoryear{Verma and Zeng}{Verma and Zeng}{2005}]%
        {HeterCompPart}
\bibfield{author}{\bibinfo{person}{Yiwei~Shen Verma} {and}
  \bibinfo{person}{Guosun Zeng}.} \bibinfo{year}{2005}\natexlab{}.
\newblock \showarticletitle{An unbalanced partitioning scheme for graph in
  heterogeneous computing}.
\newblock \bibinfo{journal}{\emph{International Conference on Grid and
  Cooperative Computing}} (\bibinfo{year}{2005}).
\newblock


\bibitem[\protect\citeauthoryear{Wang, Zhao, Xie, Li, and Guo}{Wang
  et~al\mbox{.}}{2019}]%
        {KGCN}
\bibfield{author}{\bibinfo{person}{Hongwei Wang}, \bibinfo{person}{Miao Zhao},
  \bibinfo{person}{Xing Xie}, \bibinfo{person}{Wenjie Li}, {and}
  \bibinfo{person}{Minyi Guo}.} \bibinfo{year}{2019}\natexlab{}.
\newblock \showarticletitle{Knowledge Graph Convolutional Networks for
  Recommender Systems}. In \bibinfo{booktitle}{\emph{The World Wide Web
  Conference, {WWW} 2019, San Francisco, CA, USA, May 13-17, 2019}},
  \bibfield{editor}{\bibinfo{person}{Ling Liu}, \bibinfo{person}{Ryen~W.
  White}, \bibinfo{person}{Amin Mantrach}, \bibinfo{person}{Fabrizio
  Silvestri}, \bibinfo{person}{Julian~J. McAuley}, \bibinfo{person}{Ricardo
  Baeza{-}Yates}, {and} \bibinfo{person}{Leila Zia}} (Eds.).
  \bibinfo{publisher}{{ACM}}, \bibinfo{pages}{3307--3313}.
\newblock


\bibitem[\protect\citeauthoryear{Xie, Yan, Li, and Zhang}{Xie
  et~al\mbox{.}}{2014}]%
        {DBH}
\bibfield{author}{\bibinfo{person}{Cong Xie}, \bibinfo{person}{Ling Yan},
  \bibinfo{person}{Wu{-}Jun Li}, {and} \bibinfo{person}{Zhihua Zhang}.}
  \bibinfo{year}{2014}\natexlab{}.
\newblock \showarticletitle{Distributed Power-law Graph Computing: Theoretical
  and Empirical Analysis}. In \bibinfo{booktitle}{\emph{Advances in Neural
  Information Processing Systems 27: Annual Conference on Neural Information
  Processing Systems 2014, December 8-13 2014, Montreal, Quebec, Canada}},
  \bibfield{editor}{\bibinfo{person}{Zoubin Ghahramani}, \bibinfo{person}{Max
  Welling}, \bibinfo{person}{Corinna Cortes}, \bibinfo{person}{Neil~D.
  Lawrence}, {and} \bibinfo{person}{Kilian~Q. Weinberger}} (Eds.).
  \bibinfo{pages}{1673--1681}.
\newblock


\bibitem[\protect\citeauthoryear{Xin, Gonzalez, Franklin, and Stoica}{Xin
  et~al\mbox{.}}{2013}]%
        {GraphX}
\bibfield{author}{\bibinfo{person}{Reynold~S. Xin}, \bibinfo{person}{Joseph~E.
  Gonzalez}, \bibinfo{person}{Michael~J. Franklin}, {and} \bibinfo{person}{Ion
  Stoica}.} \bibinfo{year}{2013}\natexlab{}.
\newblock \showarticletitle{GraphX: a resilient distributed graph system on
  Spark}. In \bibinfo{booktitle}{\emph{First International Workshop on Graph
  Data Management Experiences and Systems, {GRADES} 2013, co-located with
  {SIGMOD/PODS} 2013, New York, NY, USA, June 24, 2013}},
  \bibfield{editor}{\bibinfo{person}{Peter~A. Boncz} {and}
  \bibinfo{person}{Thomas Neumann}} (Eds.). \bibinfo{publisher}{{CWI/ACM}},
  \bibinfo{pages}{2}.
\newblock


\bibitem[\protect\citeauthoryear{Xue, Yang, Hou, and Dai}{Xue
  et~al\mbox{.}}{2015}]%
        {DBLP:conf/bigdataconf/XueYHD15}
\bibfield{author}{\bibinfo{person}{Jilong Xue}, \bibinfo{person}{Zhi Yang},
  \bibinfo{person}{Shian Hou}, {and} \bibinfo{person}{Yafei Dai}.}
  \bibinfo{year}{2015}\natexlab{}.
\newblock \showarticletitle{When computing meets heterogeneous cluster:
  Workload assignment in graph computation}. In \bibinfo{booktitle}{\emph{2015
  {IEEE} International Conference on Big Data {(IEEE} BigData 2015), Santa
  Clara, CA, USA, October 29 - November 1, 2015}}. \bibinfo{publisher}{{IEEE}
  Computer Society}, \bibinfo{pages}{154--163}.
\newblock


\bibitem[\protect\citeauthoryear{Zakarya and Gillam}{Zakarya and
  Gillam}{2019}]%
        {DBLP:journals/simpra/ZakaryaG19}
\bibfield{author}{\bibinfo{person}{Muhammad Zakarya} {and} \bibinfo{person}{Lee
  Gillam}.} \bibinfo{year}{2019}\natexlab{}.
\newblock \showarticletitle{Modelling resource heterogeneities in cloud
  simulations and quantifying their accuracy}.
\newblock \bibinfo{journal}{\emph{Simul. Model. Pract. Theory}}
  \bibinfo{volume}{94} (\bibinfo{year}{2019}), \bibinfo{pages}{43--65}.
\newblock


\bibitem[\protect\citeauthoryear{Zeng, Jiang, Lu, and Zou}{Zeng
  et~al\mbox{.}}{2020a}]%
        {siep}
\bibfield{author}{\bibinfo{person}{Li Zeng}, \bibinfo{person}{Yan Jiang},
  \bibinfo{person}{Weixin Lu}, {and} \bibinfo{person}{Lei Zou}.}
  \bibinfo{year}{2020}\natexlab{a}.
\newblock \showarticletitle{Deep Analysis on Subgraph Isomorphism}.
\newblock \bibinfo{journal}{\emph{arXiv}} (\bibinfo{year}{2020}).
\newblock


\bibitem[\protect\citeauthoryear{Zeng, Yang, Cai, Zhou, Zhao, and Chen}{Zeng
  et~al\mbox{.}}{2022a}]%
        {HTC}
\bibfield{author}{\bibinfo{person}{Li Zeng}, \bibinfo{person}{Kang Yang},
  \bibinfo{person}{Haoran Cai}, \bibinfo{person}{Jinhua Zhou},
  \bibinfo{person}{Rongqian Zhao}, {and} \bibinfo{person}{Xin Chen}.}
  \bibinfo{year}{2022}\natexlab{a}.
\newblock \showarticletitle{{HTC:} Hybrid vertex-parallel and edge-parallel
  Triangle Counting}.
\newblock \bibinfo{journal}{\emph{HPEC}} (\bibinfo{year}{2022}).
\newblock


\bibitem[\protect\citeauthoryear{Zeng, You, Lu, Liu, Sun, Zhao, and Chen}{Zeng
  et~al\mbox{.}}{2023}]%
        {KBQA}
\bibfield{author}{\bibinfo{person}{Li Zeng}, \bibinfo{person}{Qiheng You},
  \bibinfo{person}{Jincheng Lu}, \bibinfo{person}{Shizheng Liu},
  \bibinfo{person}{Weijian Sun}, \bibinfo{person}{Rongqian Zhao}, {and}
  \bibinfo{person}{Xin Chen}.} \bibinfo{year}{2023}\natexlab{}.
\newblock \showarticletitle{KBQA: Accelerate Fuzzy Path Query on Knowledge
  Graph}.
\newblock \bibinfo{journal}{\emph{DEXA}} (\bibinfo{year}{2023}).
\newblock


\bibitem[\protect\citeauthoryear{Zeng, Zhou, Qin, Cai, Zhao, and Chen}{Zeng
  et~al\mbox{.}}{2022b}]%
        {SQLG+}
\bibfield{author}{\bibinfo{person}{Li Zeng}, \bibinfo{person}{Jinhua Zhou},
  \bibinfo{person}{Shijun Qin}, \bibinfo{person}{Haoran Cai},
  \bibinfo{person}{Rongqian Zhao}, {and} \bibinfo{person}{Xin Chen}.}
  \bibinfo{year}{2022}\natexlab{b}.
\newblock \showarticletitle{{SQLG+:} Efficient k-hop Query Processing on
  {RDBMS}}. In \bibinfo{booktitle}{\emph{Database Systems for Advanced
  Applications - 27th International Conference, {DASFAA} 2022, Virtual Event,
  April 11-14, 2022, Proceedings, Part {III}}} \emph{(\bibinfo{series}{Lecture
  Notes in Computer Science})}, \bibfield{editor}{\bibinfo{person}{Arnab
  Bhattacharya}, \bibinfo{person}{Janice Lee}, \bibinfo{person}{Mong Li},
  \bibinfo{person}{Divyakant Agrawal}, \bibinfo{person}{P.~Krishna Reddy},
  \bibinfo{person}{Mukesh~K. Mohania}, \bibinfo{person}{Anirban Mondal},
  \bibinfo{person}{Vikram Goyal}, {and} \bibinfo{person}{Rage~Uday Kiran}}
  (Eds.), Vol.~\bibinfo{volume}{13247}. \bibinfo{publisher}{Springer},
  \bibinfo{pages}{430--442}.
\newblock


\bibitem[\protect\citeauthoryear{Zeng and Zou}{Zeng and Zou}{2018}]%
        {myFCS18}
\bibfield{author}{\bibinfo{person}{Li Zeng} {and} \bibinfo{person}{Lei Zou}.}
  \bibinfo{year}{2018}\natexlab{}.
\newblock \showarticletitle{Redesign of the gStore system}.
\newblock \bibinfo{journal}{\emph{Frontiers Comput. Sci.}}
  (\bibinfo{year}{2018}).
\newblock


\bibitem[\protect\citeauthoryear{Zeng, Zou, and Ozsu}{Zeng
  et~al\mbox{.}}{2022c}]%
        {SGSI}
\bibfield{author}{\bibinfo{person}{Li Zeng}, \bibinfo{person}{Lei Zou}, {and}
  \bibinfo{person}{M.~Tamer Ozsu}.} \bibinfo{year}{2022}\natexlab{c}.
\newblock \showarticletitle{SGSI: A Scalable GPU-Friendly Subgraph Isomorphism
  Algorithm}.
\newblock \bibinfo{journal}{\emph{IEEE Transactions on Knowledge and Data
  Engineering}} (\bibinfo{year}{2022}).
\newblock


\bibitem[\protect\citeauthoryear{Zeng, Zou, {\"{O}}zsu, Hu, and Zhang}{Zeng
  et~al\mbox{.}}{2020b}]%
        {GSI}
\bibfield{author}{\bibinfo{person}{Li Zeng}, \bibinfo{person}{Lei Zou},
  \bibinfo{person}{M.~Tamer {\"{O}}zsu}, \bibinfo{person}{Lin Hu}, {and}
  \bibinfo{person}{Fan Zhang}.} \bibinfo{year}{2020}\natexlab{b}.
\newblock \showarticletitle{{GSI:} GPU-friendly Subgraph Isomorphism}. In
  \bibinfo{booktitle}{\emph{36th {IEEE} International Conference on Data
  Engineering, {ICDE} 2020, Dallas, TX, USA, April 20-24, 2020}}.
  \bibinfo{publisher}{{IEEE}}, \bibinfo{pages}{1249--1260}.
\newblock


\bibitem[\protect\citeauthoryear{Zhang, Wei, Liu, Tang, and Li}{Zhang
  et~al\mbox{.}}{2017}]%
        {NE}
\bibfield{author}{\bibinfo{person}{Chenzi Zhang}, \bibinfo{person}{Fan Wei},
  \bibinfo{person}{Qin Liu}, \bibinfo{person}{Zhihao~Gavin Tang}, {and}
  \bibinfo{person}{Zhenguo Li}.} \bibinfo{year}{2017}\natexlab{}.
\newblock \showarticletitle{Graph Edge Partitioning via Neighborhood
  Heuristic}. In \bibinfo{booktitle}{\emph{Proceedings of the 23rd {ACM}
  {SIGKDD} International Conference on Knowledge Discovery and Data Mining,
  Halifax, NS, Canada, August 13 - 17, 2017}}. \bibinfo{publisher}{{ACM}},
  \bibinfo{pages}{605--614}.
\newblock


\bibitem[\protect\citeauthoryear{Zhang, Huang, Liu, Zhou, Hu, Song, Ge, Wang,
  Zhang, and Qi}{Zhang et~al\mbox{.}}{2020}]%
        {AGL}
\bibfield{author}{\bibinfo{person}{Dalong Zhang}, \bibinfo{person}{Xin Huang},
  \bibinfo{person}{Ziqi Liu}, \bibinfo{person}{Jun Zhou},
  \bibinfo{person}{Zhiyang Hu}, \bibinfo{person}{Xianzheng Song},
  \bibinfo{person}{Zhibang Ge}, \bibinfo{person}{Lin Wang},
  \bibinfo{person}{Zhiqiang Zhang}, {and} \bibinfo{person}{Yuan Qi}.}
  \bibinfo{year}{2020}\natexlab{}.
\newblock \showarticletitle{{AGL:} {A} Scalable System for Industrial-purpose
  Graph Machine Learning}.
\newblock \bibinfo{journal}{\emph{Proc. {VLDB} Endow.}} \bibinfo{volume}{13},
  \bibinfo{number}{12} (\bibinfo{year}{2020}), \bibinfo{pages}{3125--3137}.
\newblock


\bibitem[\protect\citeauthoryear{Zhang, Jiang, Hou, Guan, Yuan, and You}{Zhang
  et~al\mbox{.}}{2021}]%
        {EBV}
\bibfield{author}{\bibinfo{person}{Shuai Zhang}, \bibinfo{person}{Zite Jiang},
  \bibinfo{person}{Xingzhong Hou}, \bibinfo{person}{Zhen Guan},
  \bibinfo{person}{Mengting Yuan}, {and} \bibinfo{person}{Haihang You}.}
  \bibinfo{year}{2021}\natexlab{}.
\newblock \showarticletitle{An Efficient and Balanced Graph Partition Algorithm
  for the Subgraph-Centric Programming Model on Large-scale Power-law Graphs}.
  In \bibinfo{booktitle}{\emph{41st {IEEE} International Conference on
  Distributed Computing Systems, {ICDCS} 2021, Washington DC, USA, July 7-10,
  2021}}. \bibinfo{publisher}{{IEEE}}, \bibinfo{pages}{68--78}.
\newblock


\bibitem[\protect\citeauthoryear{Zhang, Xin, Chen, Wang, and Wang}{Zhang
  et~al\mbox{.}}{2023}]%
        {HAEP}
\bibfield{author}{\bibinfo{person}{Xian Zhang}, \bibinfo{person}{Junchang Xin},
  \bibinfo{person}{Jinyi Chen}, \bibinfo{person}{Beibei Wang}, {and}
  \bibinfo{person}{Zhiqiong Wang}.} \bibinfo{year}{2023}\natexlab{}.
\newblock \showarticletitle{{HAEP:} Heterogeneous Environment Aware Edge
  Partitioning for Power-Law Graphs}. In \bibinfo{booktitle}{\emph{Database
  Systems for Advanced Applications - 28th International Conference, {DASFAA}
  2023, Tianjin, China, April 17-20, 2023, Proceedings, Part {III}}}
  \emph{(\bibinfo{series}{Lecture Notes in Computer Science})},
  \bibfield{editor}{\bibinfo{person}{Xin Wang}, \bibinfo{person}{Maria~Luisa
  Sapino}, \bibinfo{person}{Wook{-}Shin Han}, \bibinfo{person}{Amr~El Abbadi},
  \bibinfo{person}{Gill Dobbie}, \bibinfo{person}{Zhiyong Feng},
  \bibinfo{person}{Yingxiao Shao}, {and} \bibinfo{person}{Hongzhi Yin}} (Eds.),
  Vol.~\bibinfo{volume}{13945}. \bibinfo{publisher}{Springer},
  \bibinfo{pages}{331--340}.
\newblock


\bibitem[\protect\citeauthoryear{Zhong, Huang, and Zhou}{Zhong
  et~al\mbox{.}}{2023}]%
        {HaSGP}
\bibfield{author}{\bibinfo{person}{Ying Zhong}, \bibinfo{person}{Chenze Huang},
  {and} \bibinfo{person}{Qingbiao Zhou}.} \bibinfo{year}{2023}\natexlab{}.
\newblock \showarticletitle{HaSGP: an effective graph partition method for
  heterogeneous-aware}.
\newblock \bibinfo{journal}{\emph{Computing}} \bibinfo{volume}{105},
  \bibinfo{number}{2} (\bibinfo{year}{2023}), \bibinfo{pages}{455--481}.
\newblock


\bibitem[\protect\citeauthoryear{Zhu, Chen, Zheng, and Ma}{Zhu
  et~al\mbox{.}}{2016}]%
        {Gemini}
\bibfield{author}{\bibinfo{person}{Xiaowei Zhu}, \bibinfo{person}{Wenguang
  Chen}, \bibinfo{person}{Weimin Zheng}, {and} \bibinfo{person}{Xiaosong Ma}.}
  \bibinfo{year}{2016}\natexlab{}.
\newblock \showarticletitle{Gemini: {A} Computation-Centric Distributed Graph
  Processing System}. In \bibinfo{booktitle}{\emph{12th {USENIX} Symposium on
  Operating Systems Design and Implementation, {OSDI} 2016, Savannah, GA, USA,
  November 2-4, 2016}}, \bibfield{editor}{\bibinfo{person}{Kimberly Keeton}
  {and} \bibinfo{person}{Timothy Roscoe}} (Eds.). \bibinfo{publisher}{{USENIX}
  Association}, \bibinfo{pages}{301--316}.
\newblock


\end{thebibliography}

\end{document}